%
%
%
%
%

\documentclass[10pt,reqno,final]{amsart}
\usepackage{epsfig,amssymb,amsmath,version}
\usepackage{amssymb,version,graphicx,fancybox,mathrsfs,multirow,stmaryrd}

\usepackage{url,hyperref}
\usepackage[notcite,notref]{showkeys}

\usepackage{subfigure}
\usepackage{color,xcolor}
\usepackage{cases}
\usepackage{mathtools}

\textheight=21.6cm
\textwidth=15.2cm
\setlength{\oddsidemargin}{0.7cm}
\setlength{\evensidemargin}{0.7cm}

\catcode`\@=11 \theoremstyle{plain}
\@addtoreset{equation}{section}
\renewcommand{\theequation}{\arabic{section}.\arabic{equation}}
\@addtoreset{figure}{section}
\renewcommand\thefigure{\thesection.\@arabic\c@figure}

\theoremstyle{proposition}
\newtheorem{prop}{\bf Proposition}[section]
\theoremstyle{remark}
\newtheorem{rem}{\bf Remark}[section]

\newcommand{\bs}[1]{\boldsymbol{#1}}
\def \ri {{\rm i}}

\def \Sum {\sum_{|m|=0}^{\infty}}
\def \Hankel {H_m ^{(1)}}

\begin{document}


\bibliographystyle{elsarticle-num}

{\title[Integration of DtN and SEM] {Seamless Integration of global Dirichlet-to-Neumann  boundary condition  and  spectral elements for transformation electromagnetics}
\author[Z. Yang,\;  L. Wang, \;  Z. Rong, \;   B. Wang \;  $\&$\; B. Zhang] {Zhiguo Yang$^{1}$,\;  Li-Lian Wang$^{1},$\;  Zhijian Rong$^{2}$,\;       Bo Wang$^{3}$\;  and\;  Baile Zhang$^{4}$}
\thanks{\noindent $^{1}$Division of Mathematical Sciences, School of Physical
and Mathematical Sciences,  Nanyang Technological University,
637371, Singapore. The research of the first two  authors is  supported by Singapore  MOE AcRF Tier 2 Grant (MOE 2013-T2-1-095, ARC 44/13),  Singapore A$^\ast$STAR-SERC-PSF Grant (122-PSF-007) and  Singapore MOE AcRF Tier 1 Grant (RG 15/12).\\
\indent$^{2}$School of Mathematical Sciences, Xiamen University, Xiamen 361005, China. The research of this author is supported by 
NSF of Fujian Province of China under Grant No. 2013J05019 and NSFC under Grant No. 11201393. \\
\indent$^{3}$College of Mathematics and Computer Science, Hunan Normal University, 410081, China. The research of this author is supported by NSFC under Grants No. 11341002 and No. 11401206. \\
\indent$^{4}$Division of Physics and Applied Physics, School of Physical and Mathematical Sciences, Nanyang Technological University, 637371, Singapore. The research of this author is partially supported by Nanyang Technological University under start-up grants, and by the Singapore Ministry of Education under Grants No. Tier 1 RG27/12 and No. MOE2011-T3-1-005. \\
\indent The third and forth authors would like to thank the hospitality of the Division of Mathematical Sciences, School of Physical
and Mathematical Sciences,  Nanyang Technological University in Singapore, for hosting their visit. 
}
\keywords{Dirichlet-to-Neumann (DtN) boundary condition, Helmholtz equation in anisotropic media,  invisibility cloaks, singular coordinate transformations, cloaking  boundary conditions, spectral-element method}
 \subjclass[2000]{65Z05, 74J20, 78A40, 33E10,  35J05, 65M70,  65N35}

\begin{abstract}
In this paper, we present an efficient   spectral-element method (SEM) for solving  general two-dimensional Helmholtz equations in anisotropic media,  with particular applications in accurate  simulation of polygonal invisibility cloaks, concentrators and circular rotators arisen from  the field of transformation electromagnetics (TE).  In practice, we adopt  
a transparent boundary condition (TBC)  characterized by  the 
Dirichlet-to-Neumann (DtN) map to reduce  wave propagation in an unbounded domain to a bounded domain. We then  introduce  a  semi-analytic technique  to  integrate  the global  TBC  with  local curvilinear elements seamlessly, which is accomplished by  
using a novel elemental mapping and  analytic formulas for  evaluating  global Fourier coefficients on spectral-element grids exactly.  

From the perspective of TE,  an invisibility cloak is devised by a singular coordinate transformation of Maxwell's equations  that  leads to anisotropic materials coating the cloaked region to render any object inside invisible to observers outside. An important issue 
 resides in the imposition of appropriate conditions at the outer boundary of the cloaked region, i.e., cloaking boundary conditions (CBCs), in order to achieve  perfect invisibility.  Following the spirit of \cite{yang2014accurate}, we propose new CBCs for polygonal invisibility cloaks from 
 the essential ``pole" conditions related to  singular transformations.  This  allows for the decoupling of the governing equations of  inside and outside the cloaked regions.  With this efficient spectral-element solver at our disposal, we can study the interesting phenomena when   some defects and lossy or dispersive media are placed in the cloaking layer of an ideal polygonal cloak. 
\end{abstract}
\maketitle

\vspace*{-10pt}
\section{Introduction and problem statement}
Accurate simulation of   wave propagations in inhomogeneous and anisotropic media  plays an exceedingly important part in a wide range of applications related to the
exploration and  design of   novel materials that  enjoy unusual and remarkable  properties in steering waves.
%
In many situations involving  time-harmonic wave propagations, the development of high-order methods (i.e., spectral and spectral-element solvers)  for the  Helmholtz equation and time-harmonic Maxwell equations,    is of fundamental  importance.
%

We are concerned with   the two-dimensional Helmholtz equation governing  time-harmonic wave propagation in  anisotropic media:    
\begin{equation} \label{Helm1}
\nabla \cdot \big({\bs C}(\bs r )\,\nabla u(\bs r)\big)+k^2 n(\bs r) u(\bs r)=f(\bs r),\quad\bs r=\bs x=(x,y) \in {\mathbb R^2},
\end{equation}
where  $k>0$ is the wave number in free space. 
 In general,  we make the following assumptions.  
 \begin{itemize}
 \item[(i)] $\bs C$ is a symmetric positive definite  matrix in ${\mathbb R^{2\times 2}},$  and   for some positive constants $c_0,c_1,$
\begin{equation}\label{eqnA}
0<c_0\le \bs \xi^t\, \bs C\, \bs \xi\le c_1,\;\;\; \forall\, \bs \xi\in {\mathbb R}^2, \;\;   \text{a.e. in} \;\;  {\mathbb R^2}.
\end{equation} 
\item[(ii)]  The coefficient 
\begin{equation}\label{eqnnA}
0<n \le n_1,\;\;\;\text{a.e. in} \;\; {\mathbb R^2}.
\end{equation}
\item[(iii)]  The inhomogeneity of the medium is confined in a bounded domain $\Omega_-$ with Lipschitz boundary,  and  $f$ is compactly supported in disc $B_{\!R}$ of radius $R>0$ (see Figure \ref{sketch1}): 
\begin{equation}\label{Parameter2}
{\bs C}={\bs I}_2,\;\;\; n=1\;\;\;  {\rm in}\;\;\; \mathbb{R}^2 \setminus \bar \Omega_{-}; \quad {\rm supp}(f)\subseteq B_{\!R},
\end{equation}
where $\bs I_2$ is the $2\times 2$ identity matrix. In what follows,  we are interested in the case where   $\Omega_-$  is  a penetrable scatterer.
 \end{itemize}  
  
\vspace*{-10pt}
\begin{figure}[h!]
 \begin{minipage}[b]{0.45\textwidth}
    \centering
 \includegraphics[width=0.8\textwidth]{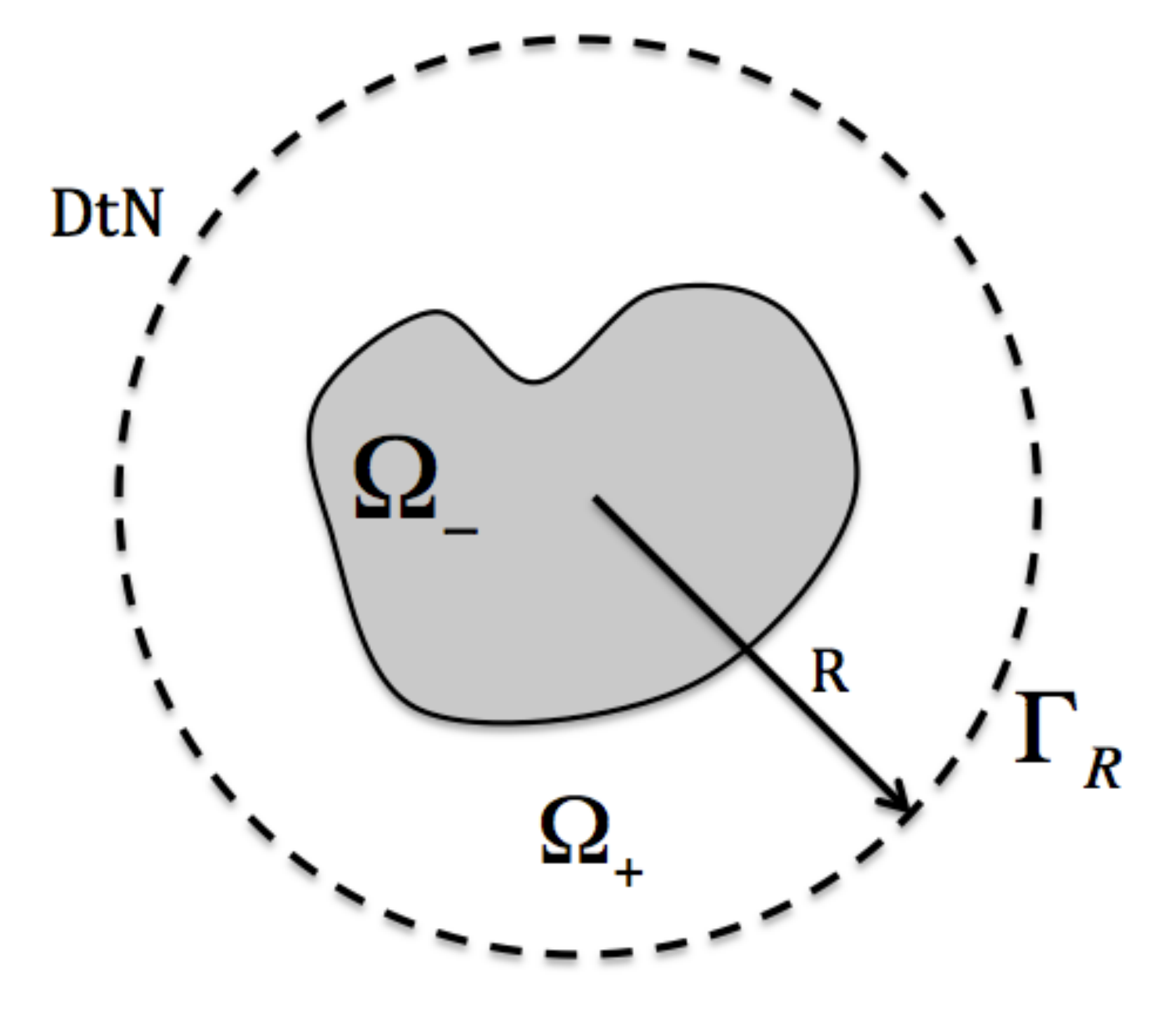}
    \caption{\small Illustration of  geometry}
 \label{sketch1}
  \end{minipage}
 \begin{minipage}[b]{0.54\textwidth}
\quad  We  impose the well-known Sommerfeld radiation boundary condition upon  the scattering wave:
$u_{\rm sc}:=u-u_{\rm in}$ (where  $u_{\rm in}$ is a  given  incident wave):
\begin{equation}\label{Radiation1}
\partial_{r} u_{\rm sc}-\ri k\, u_{\rm sc}=o(r^{-{1}/{2}}) \;\;\;  {\rm as}\;\;r \rightarrow \infty,
\end{equation}
where $\ri=\sqrt{-1}$ is the complex unit. 
\vskip 5pt 
\quad   The challenges of the above problem are  at least threefold: (i) unboundedness of the  computational domain; 
(ii) indefiniteness of the variational formulation;  and (iii) highly oscillatory solution  decaying slowly  when $k\gg 1.$
In addition,  the coefficients ${\bs C}(\bs r)$ and $n(\bs r)$ might be singular at some interior interface in $\Omega_-$ (see Section \ref{sect:pcloak}).  
\end{minipage}
\end{figure}

%

\vspace*{-5pt}
The methods of choice to deal with  the first issue  typically include the perfectly matched layer (PML) technique \cite{Bere94}, boundary integral method \cite{jin1991finiteBI, liu2009hybrid},   and the artificial  boundary condition \cite{ Hagstrom99, Enguist77, Gro.K95, Nede01}.
 The latter is  known as the absorbing boundary condition  (ABC), if it   leads  to a well-posed initial-boundary value problem (IBVP)  and  some ``energy'' can be absorbed  at the boundary.  In particular,  if the solution of the reduced problem coincides with  that of the original problem,  then the related   ABC is dubbed as a transparent (or nonreflecting) boundary condition (TBC) (or NRBC). 
 In this paper, we adopt the exact TBC (see  \cite{Nede01} and Figure \ref{sketch1}):   
\begin{equation}
\partial_{r} u_{\rm sc}-\mathscr {T}_R [u_{\rm sc}]=0 \quad {\rm at}\;\; \Gamma_{\!R},\label{DtN}
\end{equation}
where the DtN map $\mathscr{T}_R$ is defined as
\begin{align}
& {\mathscr T}_{R} [\psi] = \Sum  \mathcal{T}_m \hat \psi_m  e^{\ri m\theta}, \quad {\rm with} \label{DtNoperator}\\
&\hat \psi_m=\frac 1 {2\pi}\int_0^{2\pi} \psi(R,\theta) e^{-\ri  m\theta} {\rm d}\theta,\quad   {\mathcal T}_m:=\frac{k{\Hankel} '(kR)}{\Hankel (k R)}. \label{kernelK}
\end{align}
Here,     $\Hankel (z)$ is  the Hankel function of the first kind (cf. \cite{Abr.S84}).  This yields the exact boundary condition for  the total field:
\begin{equation}\label{uDtN}
\partial_{r} u-{\mathscr T} _{R} [u]= \partial_{r} u_{\rm in}-{\mathscr T} _{R}[u_{\rm in}]:=h   \quad {\rm at}\;\;  \Gamma_{\! R}.
\end{equation}

We find it is advantageous to impose  DtN  TBC  for the following reasons.
\begin{itemize}
 \item[(i)] The original problem in ${\mathbb R}^2$ reduces to an equivalent boundary value problem (BVP) in $B_{\!R}.$ 
One can place  $\Gamma_{\!R}$ as close as possible to  $\Omega_-,$ as long as the inhomogeneity  of the media and support of the source term are confined in $B_{\!R}.$ 
\item[(ii)]  It is essential for accurate and stable simulations  especially when the  wavenumber is  large.   
 \end{itemize}
 
However, the  TBC \eqref{uDtN} is global in space, that is, evaluating ${\mathscr T}_R[u](x)$ at any point $x_0\in \Gamma_{\!R}$ requires to compute  the  global Fourier integral along the circle $\Gamma_{\!R}.$  This poses challenges in  solving the reduced Helmholtz problem by a local element-based  method. 
\begin{itemize}
\item[(i)] The spectral-element  solution (using curvilinear elements along $\Gamma_{\!R}$)  is piecewise continuous (i.e., 
only in  $C^0 (\Gamma_{\!R})$), and  defined on spectral-element grids, so the interplay between Fourier points and spectral-element grids via interpolation and fast Fourier transform (FFT) only leads to  a first-order convergence.  
\item[(ii)]  One can evaluate the Fourier integral by a composite rule with a decomposition  coherent to the spectral-element partition,  but due to  the  elemental mapping between the curvilinear element and reference square, a numerical quadrature is usually necessary,  which is prohibitive as the integrands  are highly oscillatory for high Fourier modes  (see \eqref{mainint3}). 
\end{itemize}

One main purpose of this paper is to seamlessly integrate the global DtN BC with  local spectral elements. 
The key idea is to construct  a new elemental mapping between the curvilinear elements along $\Gamma_{\!R}$ and the reference square (see Figure \ref{domaingraph}), which leads to exact evaluation of  the Fourier integrals. 
 It is noteworthy that Fournier \cite{fournier2005exact} proposed a method
  for calculating  global Fourier coefficients
 for given nodal values on  non-conforming spectral elements, where a similar semi-analytic approach was essential for the success of the method therein.  
 We  also remark that the recent work \cite{He2014DtN}  addressed the integration of one-dimensional DtN TBC  imposed on a line segment with standard rectangular elements along the boundary.
  Different from these works, our ``local-to-global" method is built upon 
 the use of curvilinear elements seamlessly fitting the circular boundary,  and the design of a new elemental mapping leading to  exact calculation of the involved Fourier integrals (see Subsection \ref{newsectA}).   
 
 


Underpinned by the advent of metamaterials, transformation electromagnetics (TE)  (cf.   \cite{pendry.2006,leonhardt2006optical})  provides a powerful tool for creating novel devices and new materials  with  unconventional properties 
 (see, e.g., \cite{yan2008superlense,rahm2008design,chen2007rotator,chen2008anti,yang2008superscatterer,rahm2008beam} and   \cite{werner2013transformation} for many original references therein). 
 Some exciting applications of TE include  the invisibility cloaks  (see, e.g., \cite{pendry.2006, Greenleaf2009siamreview}), rotators (see, e.g., \cite{chen2007rotator}) and concentrators (see, e.g., \cite{rahm2008design}), which   naturally give rise to the model  problem \eqref{Helm1}-\eqref{Radiation1}.
In particular,  the   invisibility cloak is  devised by a singular coordinate transformation \cite{pendry.2006}  that leads to singular materials  coating  the cloaked regions and preventing waves from penetrating into the inside region.  
The imposition of appropriate interface conditions at the inner boundary, i.e., CBCs, where the material parameters are singular, 
becomes critical.  Significant efforts have been devoted to  CBCs for circular cylindrical and spherical cloaks. Ruan et al. \cite{ruan2007ideal}  first analytically studied the sensitivity of the ideal circular cloak  \cite{pendry.2006} to a small $\delta$-perturbation of the inner  boundary. 
Zhang et al. \cite{zhang2007response} provided    deep  insights into the physical effects of the  singular transformation  (also see \cite{zhang2012electrodynamics}). To shield the incoming  waves,  the perfect magnetic conductor  {\rm(PMC)}  condition
 was  imposed at the inner boundary in finite-element simulations (see,  e.g.,  \cite{cummer2006full,Jichun.2012,ma2008material}).  Weder \cite{Weder08}  proposed CBCs for the ideal  spherical cloak  of Pendry  et al.  \cite{pendry.2006} from the perspective of energy conservation.
 Lassas  and Zhou \cite{LaZhou11,LaZhou14}  proposed some non-local pesudo-differential CBCs. 
Based upon the principle that  a well-behaved  electromagnetic field in the original space must be  well-behaved   in the transformed space as well,  Yang and Wang \cite{yang2014accurate} obtained  CBCs for circular and elliptical cloaks that intrinsically relate to the essential ``pole'' conditions  of a singular transformation. 

The polygonal cloaks enjoy  more  flexibility to hide objects with  complex shapes, which are  however  much less studied. Indeed, many of the previous principles and approaches for CBCs are not extendable to the polygonal case. 
    Following the spirit  of \cite{yang2014accurate}, we propose  new CBCs under a ``local'' coordinate system (see Proposition 
    \ref{essentialpole}), under which the governing equation in the cloaked region is decoupled from the exterior region.
Accordingly, no wave can propagate into the cloaked region, and vice versa. We  emphasise that the new CBCs are indispensable for spectrally accurate simulations.  We also show that the proposed spectral-element solver provides a reliable tool  to study how the defects affect the perfectness of an ideal cloak (see Subsection \ref{newsectBA}).

The rest of the paper is organised as follows. In Section \ref{sect3:sem}, we review the form invariant of Maxwell equations and 
illustrate the derivation of the above model problem. We then  introduce the technique to seamlessly integrate the global DtN BC with local spectral elements. 
Section \ref{sect:pcloak} is for accurate simulation of polygonal invisibility cloak, where new CBCs are derived and efficient techniques are introduced to deal with singular material parameters. Various numerical results are provided to show the perfectness of invisibility, and the effects of defects and lossy or dispersive media.   Section  \ref{sect:App}  concerns   the extension of the spectral-element solver to the simulation of electromagnetic concentrators and rotators.


\section{TE and spectral-element discretization of DtN BC}\label{sect3:sem}  

In this section, we first illustrate the scenarios of  the aforementioned  model Helmholtz problem  arisen from transformation electromagnetics.  
We then discretise the model  problem  by  spectral-element method and 
 focus on how to seamlessly integrate the global DtN BC with  local  elements.

\subsection{Form invariant of Maxwell  equations}\label{subsect:TE}   
%
%

Consider the time-harmonic Maxwell system:
\begin{equation}\label{Maxwell}\breve
\nabla_{\!\breve{\bs r}}\times\breve {\bs E}-{\rm i}\omega \mu_0\, \breve {\bs H}=\bs 0, \quad \nabla_{\!\breve{\bs r}} \times \breve {\bs H}+{\rm i}\omega \epsilon_0\,\breve {\bs E}=\bs 0, 
\end{equation}
in  Cartesian coordinates: $ \breve {\bs r}=\breve {\bs x}=(\breve x,\breve y,\breve z)\in {\mathbb R}^3,$ where the electric permittivity $\epsilon_0,$ the magnetic permeability $\mu_0,$   and the angular frequency $\omega$  are positive constants.  Note that    $e^{-\ri \omega t}$ time-dependence is assumed for the electric and magnetic fields.

A remarkable property of the Maxwell system is   its form invariant under any  coordinate transformation (cf.  \cite{post1997formal}). More precisely, given a coordinate transformation $\bs r=\bs r(\breve{\bs r})$ with the Jacobian matrix  $\bs J=\partial {\bs r}/\partial{\breve{\bs r}},$
 the transformed  Maxwell system  takes the same form:  
\begin{equation} \label{newMaxwell}
\nabla  \times {\bs E}-{\rm i}\omega \mu_0 {\bs \mu}\, {\bs H}=\bs 0, \quad \nabla  \times {\bs H}+{\rm i}\omega \epsilon_0 {\bs \epsilon}\, {\bs E}=\bs 0,
\end{equation}
where  $\nabla \times$ is the curl operator in the new coordinates, and
 \begin{equation}\label{newfields}
\bs E(\bs r)=(\bs J^t)^{-1}\breve{\bs E}(\breve{\bs r}),\quad \bs H(\bs r)=(\bs J^t)^{-1}\breve{\bs H}(\breve{\bs r}),\quad {\bs \mu}={\bs \epsilon}=  {\bs J \bs J^t}\big/{{\rm det}({\bs J})}.
\end{equation}
%

We are  concerned with the two-dimensional electromagnetic wave propagations in media with 
in-plane anisotropy.      
Accordingly, under the transverse-electric (TE) polarization,  we consider ${\bs E}=(0,0,u(x,y))^t$ and  $\bs H=(H_1, H_2, 0)^t$.  Letting $z=\breve z$ in the coordinate transformation, 
the material parameters in  \eqref{newfields}  reduce  to 
\begin{equation} \label{parameter2}
{\bs \mu}={\bs \epsilon}=
\begin{bmatrix}
{\bs C} & \bs 0^t\\[2pt]
               \bs 0      & n
\end{bmatrix}
=\begin{bmatrix}
C_{11} & C_{12}& 0\\[1pt]
C_{12} & C_{22}&0\\[1pt]
0& 0 & n
\end{bmatrix}, 
\end{equation}
where
\begin{equation}\label{parameter3}
{\bs C}=\frac{\bs J_{\! {\rm cn}}\,\bs J^t_{\!{\rm cn}}} {{\rm det}(\bs J_{\! {\rm cn}})},\quad n=\frac 1{{\rm det}(\bs J_{\! {\rm cn}})}\;\;\; {\rm with}\;\;\; \bs J_{\! {\rm cn}}:=
\begin{bmatrix}
\partial_{\breve x} x & \partial_{\breve y} x  \\[1pt]
\partial_{\breve x} y & \partial_{\breve y} y  \\[1pt]
\end{bmatrix}.
\end{equation}
Note that   ${\rm det}(\bs C)=1,$ and 
  \begin{equation} \label{parameter2s}
{\bs \mu}^{-1}={\bs \epsilon}^{-1}
=\begin{bmatrix}
C_{22} & -C_{12}& 0\\[1pt]
-C_{12} & C_{11}&0\\[1pt]
0& 0 & n^{-1}
\end{bmatrix}.
\end{equation}
%
Then we  derive   from  the first equation of \eqref{newMaxwell} and \eqref{parameter2s} that 
\begin{equation}\label{magnetic}
\begin{split}
{\bs H}&=\frac{{\bs \mu}^{-1}}{{\rm i} \omega \mu_0}  \nabla  \times {\bs E}=\frac{{\bs \mu}^{-1}}{{\rm i} \omega \mu_0}\big(u_y,-u_x,0\big)^t=\frac{1}{{\rm i} \omega \mu_0}\big(C_{12} u_x+C_{22} u_y, -C_{11}u_x-C_{12}u_y,0\big)^t. 
\end{split}
\end{equation}
Inserting it into the second equation of \eqref{newMaxwell},  we obtain the two-dimensional Helmholtz equation:
\begin{equation} \label{HelmMeta}
\nabla \cdot ({\bs C}(\bs r)\;\nabla u(\bs r))+k ^2 n(\bs r)\, u(\bs r)=0,
\end{equation}
where $k=\omega \sqrt{\epsilon_0 \mu_0}$ is  the wavenumber in free space.

%
%
%

  In Sections \ref{sect:pcloak}-\ref{sect:App}, we shall introduce the coordinate transformations  for   polygonal 
  invisibility cloaks,  concentrators and rotators, and compute the corresponding material parameters $\bs C$ and $n$ via \eqref{parameter3}.  It is  noteworthy  that in all cases, the coordinate transformations are  identity in ${\mathbb R}^2\setminus(\Omega_-\cup\Omega_+)$ (cf. Figure \ref{sketch1}),  so \eqref{Parameter2} can be met.
  Moreover,  we can derive \eqref{Usuperscript} below  from  the standard transmission conditions (see,  e.g., \cite[Sec. 1.5]{orfanidis2002electromagnetic} and \cite{Monk03}), that is,
the continuity of the  tangential components of  $\bs E$ and $\bs H$  at the interface $\Gamma:=\partial \Omega_-.$ 
  


In summary, the problem of interest reads  
\begin{align}
& 
\nabla \cdot({\bs C}(\bs r) \nabla u(\bs r))+ k^2n(\bs r) u(\bs r)=f(\bs r) \quad  {\rm in}\;\; B_R, \label{eq1}   \\
&\llbracket u \rrbracket=\llbracket \bs C\, \nabla u \rrbracket=0\; \quad {\rm at}\;\;  \Gamma,\label{eq11}\\
&\partial_{r} u-{\mathscr T} _{R} [u]=h   \qquad {\rm at}\;\;  \Gamma_{\! R},  \label{eq2}
\end{align}
where  
\begin{equation}\label{Usuperscript}
\llbracket u \rrbracket:=u^--u^+,\quad  \llbracket \bs C\, \nabla u \rrbracket:={\bs n}\cdot ({\bs C}^- \nabla u^- -\bs C^+\nabla u^+), 
\end{equation}
 $u^\pm:=u|_{\Omega_\pm},$  $\bs C^{\pm}:=\bs C |_{\Omega_\pm}$ and $\bs n$ is the unit outer normal vector along $\Gamma.$

%
%
%
%
%


\subsection{Spectral-element scheme}\label{subsect:A} 
 Let $\Omega$  be a generic bounded domain, and $L^2(\Omega)$ be the space of square integrable functions with the inner product and norm denoted by $(\cdot,\cdot)_\Omega$ and $\|\cdot\|_\Omega$ as usual. 
 The Sobolev space $H^m(\Omega)$ with $m>0$ is defined as  in Admas \cite{Adams03} with the normal $\|u\|_{m,\Omega}.$ Define the trace integral 
 \begin{equation}\label{LineInner}
  \langle u,v \rangle_{\Gamma_{\!R}}:=\oint_{\Gamma_{\!R}} u \, \bar v\, {\rm d}\gamma.
  \end{equation}

A weak formulation of  \eqref{eq1}-\eqref{eq2} is to find $u\in H^1(B_{\!R})$  such that
\begin{equation} \label{weakform1}
\begin{split}
\mathscr {B}(u, v):&= ( {\bs C} \, \nabla u, \nabla v)_{B_{\!R}}-k^2(n u,v)_{B_{\!R}}-\langle {\mathscr  T}_{R}[u], v \rangle_{\Gamma_{\!R}} \\ 
& =\mathscr{F}(v):=-(f,v)_{B_{\!R}}+\langle h, v \rangle_{\Gamma_{\!R}}, \quad \forall v \in H^1(B_{\!R}),
\end{split}
\end{equation}
where  by  \eqref{DtNoperator}-\eqref{kernelK},
\begin{equation}\label{DtNterm}
\langle {\mathscr  T}_{R}[u], v \rangle_{\Gamma_{\!R}}=\frac{ R}{2\pi} \Sum  \mathcal{T}_m \Big(\int_0^{2\pi} u(R,\theta) e^{-\ri m\theta}{\rm d}\theta\Big)  \Big(\overline {\int_0^{2\pi}   v(R,\theta) e^{-\ri m\theta}{\rm d}\theta}\Big).
\end{equation}
\begin{rem}\label{exAsrmk} Recall  that (cf. \cite[P. 87]{Nede01}) 
\begin{equation}\label{bndTmk}
-\frac{|m|+1}{R}\le {\rm Re}(\mathcal{T}_{m})\le \frac 1 R,\quad 0<{\rm Im}(\mathcal{T}_{m})\le k,\;\;\; |m|=0,1,\cdots.  
\end{equation}   
We can claim  the unique solvability  of   \eqref{weakform1} from  \eqref{eqnA}-\eqref{Parameter2} and  \eqref{bndTmk} (cf. \cite{Nede01}). \qed
\end{rem}

For simplicity,  we assume that the scatterer  $\Omega_-$ is a polygonal domain or a disk, though our approach is  extendable  to more complicated domain.
We partition the computational domain $B_{\!R}$  into a finite number of non-overlapping 
 straight-sided or curvilinear quadrilateral  elements $\{\Omega^{e}\}_{e=1}^E,$ 
such that the  inner interfaces  are aligned with the  ``edges'' of  the elements.  In particular, we have
\begin{equation}\label{mesh}
\Gamma_{\!R}=\bigcup_{e=1}^{E_R}\Gamma_{\!\!R}^e= \bigcup_{e=1}^{E_{R}} \big[\theta_e,\theta_{e+1}\big] =[0,2\pi],\quad r=R,  
\end{equation}
where $\Gamma_{\!\!R}^e:=\Gamma_{\!R}\cap \bar\Omega^{e}\not=\emptyset$ for  all $e\in \{1,\cdots,E\},$ and $\theta_1=\theta_{E_R+1}$ (see Figure \ref{domaingraph} (a)). 
Let $\bs \chi^e:  Q:=(-1,1)^2\to\Omega^e$ be a one-to-one elemental mapping defined by 
\begin{equation}\label{map1}
\bs r={\bs x}=(x,y)=\bs\chi^e(\xi,\eta):=\big(\chi^e_1(\xi,\eta),\chi^e_2(\xi,\eta)\big),\quad\forall\, (\xi,\eta)\in Q.
\end{equation}
 Recall that one commonly-used  elemental mapping,  originally proposed by  Gordon and  Hall \cite{gordontransform1973}, transforms  $Q$ to any quadrilateral $\Omega^e$ with  straight or curved sides (see, e.g., \cite{fischer2002high,canuto2007spectral}). Here, we shall use a special Gordon and Hall transform in 
\eqref{ordersetting}-\eqref{GordonHall} below.

Denote by $\mathcal{P}_N$ the set of all polynomials of degree at most $N$ in $[-1,1].$ Introduce the spectral-element solution space
\begin{equation}\label{solusps}
V_N^E:=\big\{v\in C(B_{\!R})\,:\, v(\bs x)|_{\Omega^e}=v(\bs \chi^e)\in {\mathcal P}_N^2,\; 1\le e\le E\big\}.
\end{equation}
The spectral-element approximation of  \eqref{weakform1} is to find $u_N^E \in  V_{N}^E$ such that
\begin{equation}\label{discretescheme}
\mathscr{B}(u_N^E,v_N^E)=\mathscr{F}(v_N^E), \quad \forall\, v_N^E \in V_N^E.
\end{equation}
 In view of Remark \ref{exAsrmk}, we can show the well-posedness of  \eqref{discretescheme} as with  \eqref{weakform1}. 

\subsection{Seamless integration of SEM with DtN TBC}    As usual,  the continuous inner product $(\cdot,\cdot)_\Omega$ can  be evaluated by element-wise discrete inner product based on  tensorial Legendre-Gauss-Lobatto (LGL) quadrature, and likewise for the term $\langle h, v_N^E \rangle_{\Gamma_{\!R}}$ (see, e.g., \cite{fischer2002high}).   
However, much care is needed to deal with
 the term  $\langle {\mathscr  T}_{R}[u_N^E], v_N^E \rangle_{\Gamma_{\!R}},$ as the DtN operator is global, but the spectral-element solution is piecewise.
One can evaluate \eqref{mainint} by using the  fast Fourier transform (FFT),  but this requires  an intermediate interpolation to interplay between   spectral-element grids and Fourier points. Since  $u_N^E|_{\Gamma_{\!R}}\in C^0,$ 
a naive interpolation only results in a first-order convergence.  
 
 In what follows, we introduce an efficient semi-analytical means to compute $\langle {\mathscr  T}_{R}[u_N^E], v_N^E \rangle_{\Gamma_{\!R}}.$
 Let $\{\xi_j=\eta_j\}_{j=0}^N$ (in ascending order)  be the LGL  points in $[-1,1],$ and let $\{l_j\}_{j=0}^N$ be the associated Lagrange interpolating basis polynomials. Correspondingly,  the spectral-element grids and basis on $\bar \Omega^e$ are given by
\begin{equation}\label{newbasis}
\bs x_{ij}=\bs \chi^e(\xi_i,\eta_j),\quad  \psi_{ij}(\bs x)=l_i(\xi)l_j(\eta),\quad 0\le i,j\le N.
\end{equation}
Formally, we can write 
 \begin{equation}\label{uebasis0}
 u_N^E(x,y)\big|_{\Omega^e}= \sum_{i,j} \tilde u_{ij}^e\,  l_i(\xi)l_j(\eta), 
 \end{equation}
 where the unknowns $\{\tilde u_{ij}^e\}$ are determined by the scheme  \eqref{discretescheme}. 

\begin{figure}[htbp]
\begin{center}
 \subfigure[Curvilinear elements]{ \includegraphics[scale=.28]{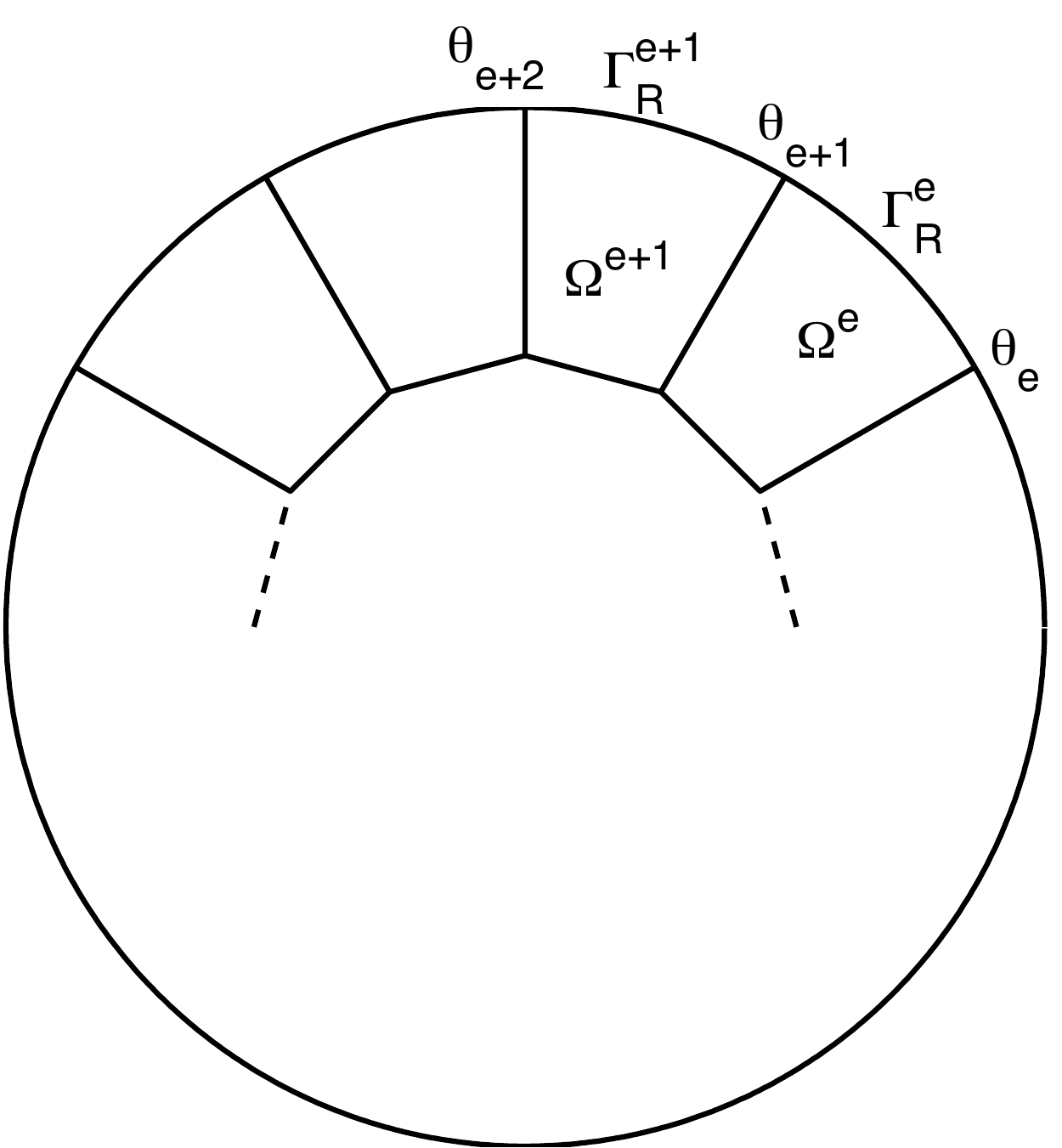}}\hspace*{6pt} 
 \subfigure[Mapped LGL points on $\Omega^e$]{ \includegraphics[scale=.31]{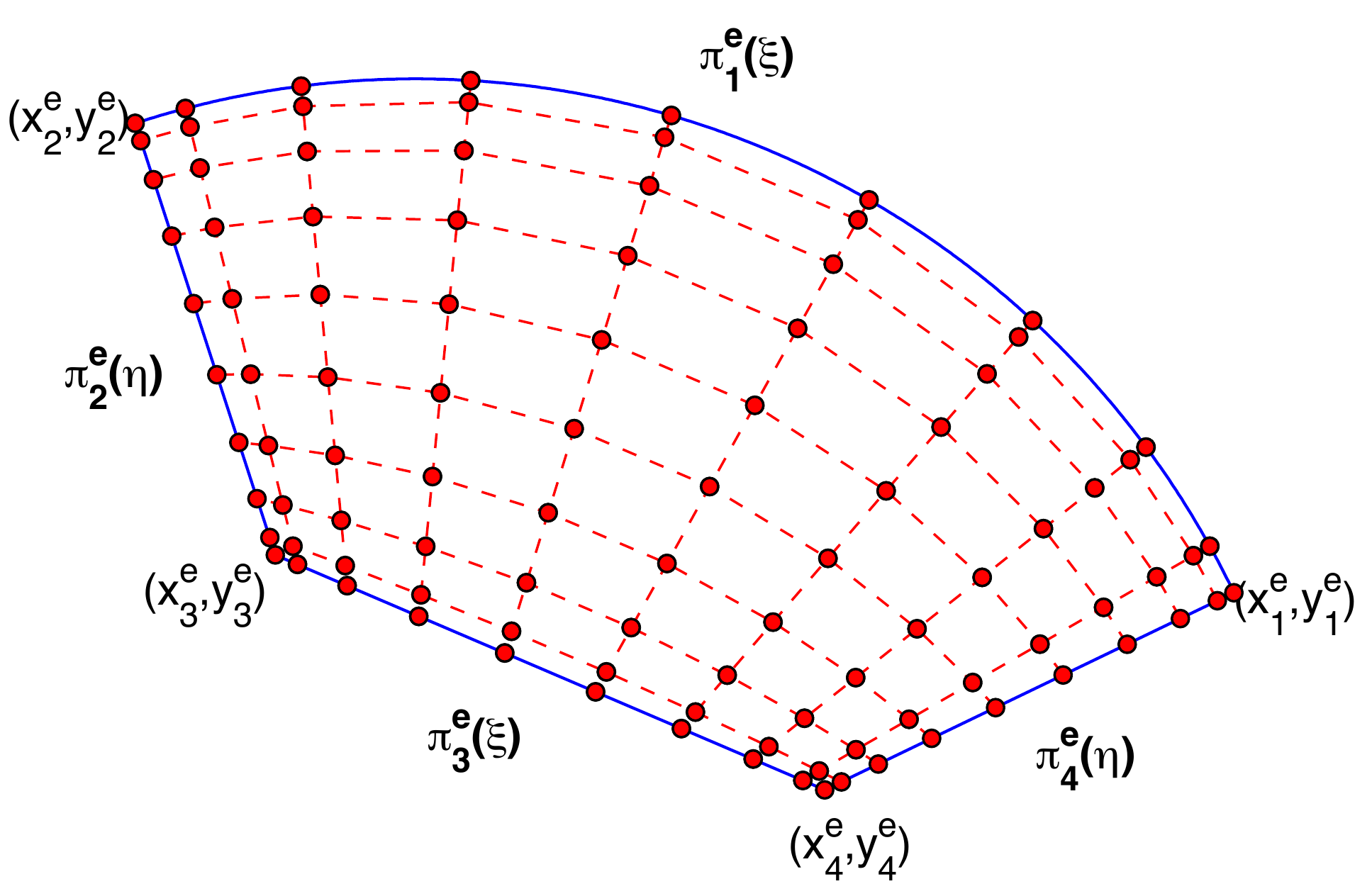}}\hspace*{8pt} 
 \subfigure[LGL points on  $Q$]{ \includegraphics[scale=.26]{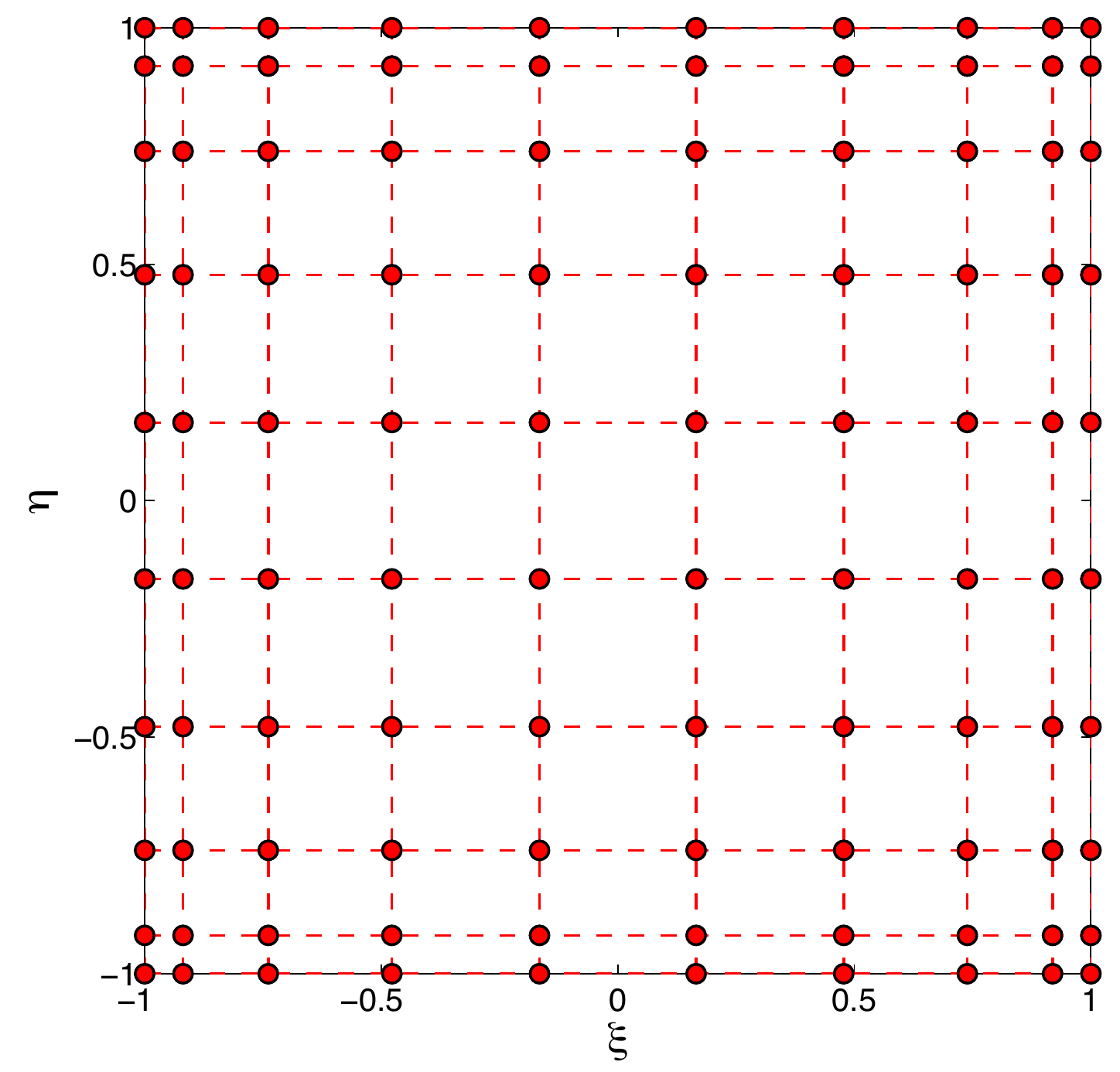}} 
  \caption{\small Curvilinear elements and tensorial  LGL points on the reference square and a curvilinear element via  the new elemental mapping.}
\label{domaingraph}
\end{center}
\end{figure}
 

 We now particularly look at the Gordon-Hall transform  for  a curvilinear element  $\Omega^e$ with vertices $\{(x_i^e,y_i^e)\}_{i=1}^4$ along $\Gamma_{\!R}$ (with three straight sides).  Let  $\{\bs \pi_j^e(t), t\in [-1,1]\}_{j=1}^4$ be, respectively, the parametric form of four sides  such that  
\begin{equation}\label{ordersetting}
\bs \pi_1^e(-1)= \bs \pi_4^e(1),\;\;  \bs \pi_1^e(1)= \bs \pi_2^e(1),\;\;  \bs \pi_2^e(-1)= \bs \pi_3^e(1),\;\; \bs \pi_3^e(-1)= \bs \pi_4^e(-1),
\end{equation}
 see  Figure \ref{domaingraph} (b). In this case,  the Gordon-Hall transform takes the form
\begin{equation} \label{GordonHall}
\begin{split}
{\bs x}={\bs \chi}^e(\xi,\eta)=& {\bs \pi}_1^e(\xi) \frac{1+\eta}{2}+{\bs \pi}_3^e(\xi)\frac{1-\eta}{2} +\frac{1+\xi}{2} {\bs \pi}_2^e(\eta) +  \frac{1-\xi}{2} {\bs \pi}_4^e(\eta)\\
&- \bigg( {\bs \pi}_1^e(-1) \frac{1-\xi}{2}       +{\bs \pi}_1^e(1)\frac{1+\xi}{2}\bigg) \frac{1+\eta}{2}  \\
&- \bigg( {\bs \pi}_3^e(-1) \frac{1-\xi}{2}       +{\bs \pi}_3^e(1)\frac{1+\xi}{2}\bigg) \frac{1-\eta}{2}\,, 
\end{split}
\end{equation}
where the edge $\eta=1$ of $Q$ is mapped to the arc $\Gamma_{\!\!R}^e=\{r=R,\; \theta\in (\theta_e,\theta_{e+1})\}$ of $\Omega_e,$ i.e.,  
\begin{equation}\label{paraformA}
\Gamma^e_{\!R}\,:\,\; x=\chi_1^e(\xi,1)=\pi_{11}^e(\xi),\;\; y= \chi_2^e(\xi,1)=\pi_{12}^e(\xi),\quad \forall\,\xi\in (-1,1).
\end{equation}
Accordingly,  the  spectral-element grids in polar coordinates on $\Gamma_{\!\!R}^e$  (see Figure \ref{domaingraph}) satisfy     
\begin{equation}\label{thetaej}
\cos \theta_j^e=R^{-1}{\pi_{11}^e(\xi_j)}\;\; {\rm or}\;\;  \sin \theta_j^e=R^{-1}{\pi_{12}^e(\xi_j)},  \quad 1\le j\le N.
\end{equation}

\vskip 2pt 
We now turn to  $\langle {\mathscr  T}_{R}[u_N^E], v_N^E \rangle_{\Gamma_{\!R}}$ in 
\eqref{discretescheme}.
Thanks to \eqref{DtNterm} and \eqref{uebasis0},   we need to evaluate 
\begin{equation}\label{mainint}
\begin{split}
\int_0^{2\pi} u_N^E(x,y)\big|_{\Gamma_{\!R}} e^{-\ri m\theta}\,{\rm d} \theta&=\sum_{e=1}^{E_R} \int_{\theta_e}^{\theta_{e+1}} u_N^E(x,y)\big|_{\Gamma_{\!R}^e} e^{-\ri m\theta}\,{\rm d} \theta\\
&= \sum_{e=1}^{E_R}\sum_i  \tilde u_{iN}^e \int_{-1}^1 l_i(\xi)  e^{-\ri m\theta(\xi)}\frac{{\rm d} \theta}{{\rm d} \xi}\, {{\rm d} \xi}.
\end{split}
\end{equation}
As the nodal basis $\{l_i\}$ can be represented in terms of  Legendre polynomials,  it suffices to compute 
\begin{equation}\label{mainint3}
{\mathbb I}_{nm}^e:=\int_{-1}^1 P_n(\xi)\, e^{-\ri m\theta(\xi)}\, \frac{{\rm d}\theta} {{\rm d}\xi}\, {\rm d}\xi, \;\;\;\; {\rm for}\;\;  n\ge 0, \;\;  |m|\ge 0,
\end{equation}
where $P_n$ is the Legendre polynomial of degree $n$, and   by \eqref{paraformA}, 
\begin{equation}\label{dthetax}
 \frac{{\rm d}\theta} {{\rm d}\xi}=\frac 1 R  \frac{{\rm d}\gamma} {{\rm d}\xi}=R^{-1}\sqrt{\big[\partial_\xi \pi_{11}^e(\xi)\big]^2+\big[\partial_\xi \pi_{12}^e(\xi)\big]^2}\,.
\end{equation}

It is seen that  the integrand is highly oscillatory for large $|m|,$ and the efficiency and accuracy in computing  ${\mathbb I}_{nm}^e$  essentially relies on the choice of the  parametric form for $\bs\pi_1(\xi).$  We next introduce a  parametric form that allows for exact evaluation of \eqref{dthetax} by analytic formulas (see Propositions \ref{newpar}-\ref{Imncomp}).   To stimulate  the idea, we first consider  a commonly-used parametric form.    

\subsubsection{A commonly-used parametric form for   $\bs \pi^e_1(\xi)$}   Following  the ideas of the cubed-sphere transformation (cf.   \cite{ronchi1996cubed,zhangj2011prolate}) and the ``ray" coordinates (cf.  \cite{karniadakis2013spectral}),   one can   project  the secant line: $(x_1^e,y_1^e)$, $(x_2^e,y_2^e)$ to the arc $\Gamma^e_{\!R}$ via the ``rays" from the origin.   This leads to the parameterisation: 
\begin{equation}\label{ArcMap1}
\bs \pi^e_1(\xi)=(\pi^e_{11}(\xi),\pi^e_{12}(\xi))=\bigg( \frac{R\,d_1(\xi)}{\sqrt{d_1^2(\xi)+d^2_2(\xi)}}, \frac{R\,d_2(\xi)}{\sqrt{d_1^2(\xi)+d^2_2(\xi)}}  \bigg),
\end{equation}
where
\begin{equation}\label{curvepara}
d_1(\xi)=\frac {x_2^e-x_1^e}2 \xi+\frac{x_1^e+x_2^e} 2,\quad d_2(\xi)=\frac {y_2^e-y_1^e}2 \xi+\frac{y_1^e+y_2^e} 2.
\end{equation}
Since $\cos \theta=R^{-1} \pi_{11}^e(\xi),$ we find   
\begin{equation}\label{cosinethegta}
\theta(\xi)=\begin{cases}
\alpha, \quad &{\rm if}\;\;\; \theta\in [0,\pi),\\[4pt]
2\pi-\alpha,\quad &{\rm if}\;\;\; \theta\in [\pi,2\pi),
\end{cases}
\quad \alpha:=\cos^{-1} \bigg(\frac{d_1(\xi)}{\sqrt{d_1^2(\xi)+d^2_2(\xi)}}\bigg)\,,
\end{equation}
and \eqref{dthetax} reads  
\begin{equation}\label{dthetax2}
 \frac{{\rm d}\theta} {{\rm d}\xi}=\frac{|x_1^ey_2^e-x_2^ey_1^e|}{2( d_1^2(\xi)+d^2_2(\xi))}, \quad \forall\, \xi\in [-1,1].
\end{equation}
  Inserting \eqref{cosinethegta} and \eqref{dthetax2}  into  \eqref{mainint3}, one immediately finds that  ${\mathbb I}_{nm}^e$ appears  complicated and 
  must be  evaluated numerically. However, the integrand  is highly oscillatory,  when $|m|$ is large. 

\subsubsection{A new parametric form for  $\bs \pi^e_1(\xi)$}\label{newsectA}  We next take a very different route to parameterise  $\Gamma^e_{\!R}.$  The essential idea is to look for 
\begin{equation}\label{ArcMap2}
\bs \pi^e_1(\xi)=\big(\pi_{11}^e(\xi), \pi_{12}^e(\xi)\big)=R (\cos  \theta, \sin \theta),\quad \theta\in [\theta_e,\theta_{e+1}],\;\;  \xi\in [-1,1],
\end{equation}
such that ${\rm d}{\gamma}=a\, {\rm d}\xi,$  that is, the arc length $\gamma$ is linear in $\xi.$ 
\begin{prop}\label{newpar} Let  $\Omega^e$ be the curvilinear element as in  {\rm Figure \ref{domaingraph} (b).}  Then  the new elemental mapping from the reference square $Q$ to $\Omega^e$ takes the form 
\begin{align}
x&=\pi_{11}^e(\xi)\frac {1+\eta} 2 +\frac{(1+\xi)(1-\eta)} 4 x_3^e+\frac{(1-\xi)(1-\eta)} 4 x_4^e,\label{xcurvetrans}\\
y&=\pi_{12}^e(\xi) \frac {1+\eta} 2 +\frac{(1+\xi)(1-\eta)} 4 y_3^e+\frac{(1-\xi)(1-\eta)} 4 y_4^e,\label{ycurvetrans}
\end{align}
where 
\begin{equation}\label{ArcMap2B}
\bs \pi^e_1(\xi)=\big(\pi_{11}^e(\xi), \pi_{12}^e(\xi)\big)=R\big(\cos(\hat \theta_e\xi+\beta_e) , \sin(\hat \theta_e\xi+\beta_e)\big),
\end{equation}
with 
\begin{equation}\label{hattheta}
\hat \theta_e=\frac{\theta_{e+1}-\theta_e}{2},\quad \beta_e=\frac{\theta_e+\theta_{e+1}}{2}.
\end{equation}
\end{prop}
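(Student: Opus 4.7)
\medskip

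\noindent\textbf{Proof proposal.} The plan is to establish the two parts of the proposition independently: first, verify that the parametric form \eqref{ArcMap2B} realises the arc-length-linear condition $\mathrm{d}\gamma=a\,\mathrm{d}\xi$ set up right before the statement; second, derive \eqref{xcurvetrans}--\eqref{ycurvetrans} by specialising the Gordon--Hall transform \eqref{GordonHall} to the case when three of the four sides of $\Omega^e$ are straight segments.

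For the first part I would simply check the three required properties of $\bs\pi_1^e$ in \eqref{ArcMap2B}. Using \eqref{hattheta}, evaluating at the endpoints gives
\[
\hat\theta_e(-1)+\beta_e=\theta_e,\qquad \hat\theta_e(1)+\beta_e=\theta_{e+1},
\]
so $\bs\pi_1^e(\pm 1)$ hit the vertices $(x_1^e,y_1^e)$ and $(x_2^e,y_2^e)$ on $\Gamma_{\!R}$ as required by the ordering \eqref{ordersetting}. A direct differentiation yields
\[
\partial_\xi\pi_{11}^e=-R\,\hat\theta_e\sin(\hat\theta_e\xi+\beta_e),\qquad \partial_\xi\pi_{12}^e=R\,\hat\theta_e\cos(\hat\theta_e\xi+\beta_e),
\]
so $\mathrm{d}\gamma/\mathrm{d}\xi=R\,\hat\theta_e$ is \emph{constant}, i.e.\ the arc length is linear in $\xi$ with $a=R\hat\theta_e$. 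Equivalently, $\mathrm{d}\theta/\mathrm{d}\xi=\hat\theta_e$, which is precisely the feature that will render \eqref{mainint3} analytically tractable in the subsequent proposition.

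For the second part I would parameterise the three straight sides by linear interpolation of their endpoints using the vertex identifications in \eqref{ordersetting}:
\[
\bs\pi_2^e(\eta)=\tfrac{1+\eta}{2}(x_2^e,y_2^e)+\tfrac{1-\eta}{2}(x_3^e,y_3^e),\quad \bs\pi_3^e(\xi)=\tfrac{1+\xi}{2}(x_3^e,y_3^e)+\tfrac{1-\xi}{2}(x_4^e,y_4^e),
\]
\[
\bs\pi_4^e(\eta)=\tfrac{1+\eta}{2}(x_1^e,y_1^e)+\tfrac{1-\eta}{2}(x_4^e,y_4^e),
\]
and substitute into \eqref{GordonHall} together with $\bs\pi_1^e(\pm 1)=(x_1^e,y_1^e),(x_2^e,y_2^e)$. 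The computation is purely algebraic: the bilinear terms in $x_1^e$ and $x_2^e$ contributed by $\bs\pi_4^e,\bs\pi_2^e$ are exactly cancelled by the corresponding corrector term $-\bigl(\bs\pi_1^e(-1)\frac{1-\xi}{2}+\bs\pi_1^e(1)\frac{1+\xi}{2}\bigr)\frac{1+\eta}{2}$, while the bilinear terms in $x_3^e,x_4^e$ coming from $\bs\pi_2^e,\bs\pi_3^e,\bs\pi_4^e$ collapse, after subtracting the $\eta=-1$ corrector, into the single products $\frac{(1+\xi)(1-\eta)}{4}x_3^e$ and $\frac{(1-\xi)(1-\eta)}{4}x_4^e$. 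The only surviving contribution from the $\eta=1$ edge is $\pi_{11}^e(\xi)\frac{1+\eta}{2}$, yielding \eqref{xcurvetrans}; the $y$-component is identical.

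I do not anticipate a serious technical obstacle here — the content of the proposition is a design statement plus a bookkeeping exercise. The only point requiring mild care is keeping the four vertex identifications in \eqref{ordersetting} consistent when expanding the Gordon--Hall formula, so that the bilinear corner terms cancel exactly rather than leaving residual pieces which would spoil the clean form \eqref{xcurvetrans}--\eqref{ycurvetrans}. Once this cancellation is observed, both halves of the claim follow immediately.
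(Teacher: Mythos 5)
Your proposal is correct and follows essentially the same route as the paper's proof: the paper derives $\theta=\hat\theta_e\xi+\beta_e$ from the ansatz $\gamma=a\xi+b$ together with the endpoint conditions $\theta(\mp 1)=\theta_e,\theta_{e+1}$, and then obtains \eqref{xcurvetrans}--\eqref{ycurvetrans} by inserting the linear side parameterizations (e.g.\ \eqref{Linear}) into the Gordon--Hall transform \eqref{GordonHall}, exactly as you do. The only difference is directional and immaterial: you verify that \eqref{ArcMap2B} has constant ${\rm d}\gamma/{\rm d}\xi=R\hat\theta_e$ and correct endpoints rather than deriving it, and you carry out the Gordon--Hall corner cancellations explicitly where the paper leaves them implicit.
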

\begin{proof} Let $\gamma=a\xi+b.$ 
The arc length along $\Gamma_{\!R}^e$ is  $\gamma=R(\theta-\theta_e),$ so we have 
\begin{equation}\label{gammaA}
a\xi+b=R(\theta-\theta_e). 
\end{equation}
Since $\theta=\theta_e$ (resp. $\theta=\theta_{e+1}$)  is mapped to $\xi=-1$ (resp. $\xi=1$), we find    
\begin{equation}\label{lineartransf}
a=b=\hat \theta_e R,\quad \theta=\hat \theta_e\xi+\beta_e,\quad \xi\in [-1,1].
\end{equation}
Inserting it into \eqref{ArcMap2} leads to the new parametric form \eqref{ArcMap2B}.  
Then we obtain   \eqref{xcurvetrans}-\eqref{ycurvetrans} from 
the equations of the straight sides, e.g.,  
\begin{equation}\label{Linear}
\bs \pi_4^e(\eta)=\frac{\bs x_1^e-\bs x_4^e} 2\eta +\frac{\bs x_1^e+\bs x_4^e} 2,\quad \eta \in [-1,1],  
\end{equation}
 and the Gordon-Hall transform \eqref{GordonHall}.  
\end{proof}

Observe that in distinctive contrast to  \eqref{cosinethegta}-\eqref{dthetax2},    the new transformation has a  linear dependence of  $\theta$ in  $\xi,$ so in \eqref{mainint3},
\begin{equation}\label{newrela}
\theta(\xi)=\hat \theta_e\xi+\beta_e,\quad \frac{{\rm d}\theta} {{\rm d}\xi}=\hat \theta_e.
\end{equation}   
This leads to the following analytic means for computing the integrals of interest.  
\begin{prop}\label{Imncomp} Under the new transformation in {\rm Proposition \ref{newpar}},  the integral in \eqref{mainint3} can be computed by 
\begin{equation}\label{analyticF1}
\begin{split}
& {\mathbb I}_{n0}^e=2\hat\theta_e\delta_{n0}; \quad {\mathbb I}_{nm}^e=\frac{2\hat \theta_eR}{{\rm i}^n}\sqrt{\frac{\pi}{2m\hat \theta_e}}J_{n+1/2}(m\hat \theta_e)\,
e^{-{\rm i}m\beta_e}, 
\end{split}
\end{equation}
and ${\mathbb I}_{n,-m}^e=({\mathbb I}_{nm}^e)^*$ for $n\ge 0$ and $m\ge 1, $
where $J_{n+1/2}$ is the Bessel function of the first kind, and $\hat \theta,  \beta_e$ are the same as in   \eqref{hattheta}. 
\end{prop}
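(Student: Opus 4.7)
The plan is to exploit the linearity $\theta(\xi)=\hat\theta_e\xi+\beta_e$ established in Proposition~\ref{newpar}, which immediately turns $\mathbb{I}^e_{nm}$ into a textbook Fourier-type transform of a Legendre polynomial over $[-1,1]$.

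First, substitute the relations in \eqref{newrela} into the definition \eqref{mainint3} to get
\begin{equation*}
\mathbb{I}^e_{nm}=\hat\theta_e\, e^{-\ri m\beta_e}\int_{-1}^{1}P_n(\xi)\,e^{-\ri m\hat\theta_e\xi}\,{\rm d}\xi.
\end{equation*}
This decouples the two unknowns: the boundary data $\beta_e$ only appears through the phase factor $e^{-\ri m\beta_e}$, and the remaining integral depends only on the single combined parameter $z:=m\hat\theta_e$.

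Next, handle the case $m=0$ by the elementary orthogonality identity $\int_{-1}^{1}P_n(\xi)\,{\rm d}\xi=2\delta_{n0}$, which gives $\mathbb{I}^e_{n0}=2\hat\theta_e\delta_{n0}$ at once.

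For $m\ge 1$, the key ingredient is the classical plane-wave expansion identity
\begin{equation*}
\int_{-1}^{1}P_n(\xi)\,e^{\ri z\xi}\,{\rm d}\xi=2\,\ri^{n}\,j_n(z)=\ri^{n}\sqrt{\frac{2\pi}{z}}\,J_{n+1/2}(z),
\end{equation*}
where $j_n$ is the spherical Bessel function (see, e.g., Abramowitz--Stegun). Applied with $z\mapsto -m\hat\theta_e$ and using $j_n(-z)=(-1)^n j_n(z)$ together with $(-\ri)^n=\ri^{-n}$, I would obtain the stated closed form (up to the factor $R$ that one reads off directly from the line-element normalization in \eqref{dthetax}). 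Finally, the symmetry $\mathbb{I}^e_{n,-m}=(\mathbb{I}^e_{nm})^\ast$ is just the observation that $P_n$, $\hat\theta_e$, $\beta_e$, $R$ are all real, so complex conjugating the integrand sends $m$ to $-m$.

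There is no essential obstacle; the only care required is tracking the signs/powers of $\ri$ when converting between $j_n(z)$ and $J_{n+1/2}(z)$ and in swapping $z\to -z$. The cleanest bookkeeping is to first reduce everything to the single identity for $\int_{-1}^{1}P_n e^{\ri z\xi}$, verify it is applied at the correct argument, and only then re-express $j_n$ in terms of $J_{n+1/2}$ to match the form stated in \eqref{analyticF1}.
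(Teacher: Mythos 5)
Your proposal is correct and follows essentially the same route as the paper: reduce via the linear relation \eqref{newrela} to $\hat\theta_e\, e^{-\ri m\beta_e}\int_{-1}^1 P_n(\xi)\,e^{-\ri m\hat\theta_e\xi}\,{\rm d}\xi$, dispatch $m=0$ by Legendre orthogonality and $m<0$ by conjugation of the real-data integrand, and apply the Legendre--Bessel identity, which the paper quotes directly in the form \eqref{Le2Bessel} while you pass through $2\ri^{\,n} j_n(z)$ and the parity flip $j_n(-z)=(-1)^n j_n(z)$ --- the same identity in different clothing. One clarification: your parenthetical recovering the factor $R$ from the line-element normalization \eqref{dthetax} should be dropped rather than invoked --- the definition \eqref{mainint3} carries ${\rm d}\theta/{\rm d}\xi$ (not ${\rm d}\gamma/{\rm d}\xi$), so the computation (yours, and the paper's own intermediate step \eqref{mainint3new}) yields \eqref{analyticF1} \emph{without} $R$; the $R$ printed in the proposition is a typo, as confirmed by consistency with ${\mathbb I}^e_{n0}=2\hat\theta_e\delta_{n0}$ in the limit $m\hat\theta_e\to 0$, where $J_{1/2}(z)\sim\sqrt{2z/\pi}$.
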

\begin{proof}  By \eqref{mainint3} and \eqref{newrela}, 
\begin{equation}\label{mainint3new}
{\mathbb I}_{nm}^e=\int_{-1}^1 P_n(\xi)\, e^{-\ri m\theta(\xi)} \frac{{\rm d}\theta} {{\rm d}\xi}\, {\rm d}\xi= \hat \theta_e\, e^{-{\rm i}m\beta_e} \int_{-1}^1 
P_n(\xi) e^{-\ri  m \hat \theta_e  \xi}\, {\rm d}\xi.
\end{equation}
It is clear that for $m=0,$ we have ${\mathbb I}_{00}^e=2\hat \theta_e,$ and by the orthogonality of Legendre polynomials, we have 
${\mathbb I}_{n0}^e=0$ when $n\ge 1.$ Moreover, we have ${\mathbb I}_{n,-m}^e=({\mathbb I}_{nm}^e)^*,$ so we only 
need to compute the integrals with $m\ge 1.$   Recall the identity (cf. \cite{arfken2001mathematical})
\begin{equation}\label{Le2Bessel}
\int_{-1}^1 P_n(\xi)\,e^{-{\rm i}m  x \xi}\,{\rm d}\xi=\frac{1}{{\rm i}^n}\sqrt{\frac{2\pi}{m x}} J_{n+1/2}(m x),\quad {\rm for}\;\;  mx>0.
\end{equation}
Thus, \eqref{analyticF1} follows immediately. 
\end{proof}

\subsection{An illustrative numerical example}  As a by-product,  the new parameterisation provides an efficient  means to compute the Fourier coefficients via piecewise Legendre approximation. 
In a nutshell, we partition $[0,2\pi]$ into 
$\{[\theta_e,\theta_{e+1}]\}_{e=1}^{E_R},$  and approximate the underlying function on each subinterval by Legendre polynomials using  \eqref{ArcMap2B} so that 
the analytical formula \eqref{analyticF1} can be applied.  Here, we provide  an  example to illustrate this numerical-analytic approach.  

Consider the Fourier expansion of a plane wave (cf. \cite[P. 360]{Abr.S84}):
\begin{equation}\label{planeexp}
e^{{\rm i}k(x\cos \theta_0+y\sin\theta_0)}=\sum_{|m|=0}^{\infty} \hat g_m e^{{\rm i}m \theta}\;\;\;  {\rm with}\;\;\;  \hat g_m={\rm i}^m J_m(kR)e^{-{\rm i}m\theta_0},
 \end{equation}
 for some  constant $\theta_0$,  where  $J_m$ is the Bessel function of the first kind of  order $m$ as before.  Let $\hat g_{m,N}^{E_R}$ be the numerical approximation  to  $\hat g_m,$ and denote the error  ${\rm max}_{|m|\le M}|\hat g_m-\hat g_{m,N}^{E_R}|.$ Note that  $\hat g_m$ decays exponentially as $|m|$ increases.  We depict  the errors against $N$ (with fixed $E_R=4$)  in Figure \ref{uhatvsuNhat} (left),  
 and against the number of elements  $E_R$ (with fixed $N=10$) in  Figure \ref{uhatvsuNhat} (right), for  $R=1$, $M=20, k=10,20,30$  and $\theta_0={\pi}/{4}$.   We observe  exponential convergence  in both cases. 

\begin{figure}[h!]
  \centering
    \includegraphics[width=0.45\textwidth]{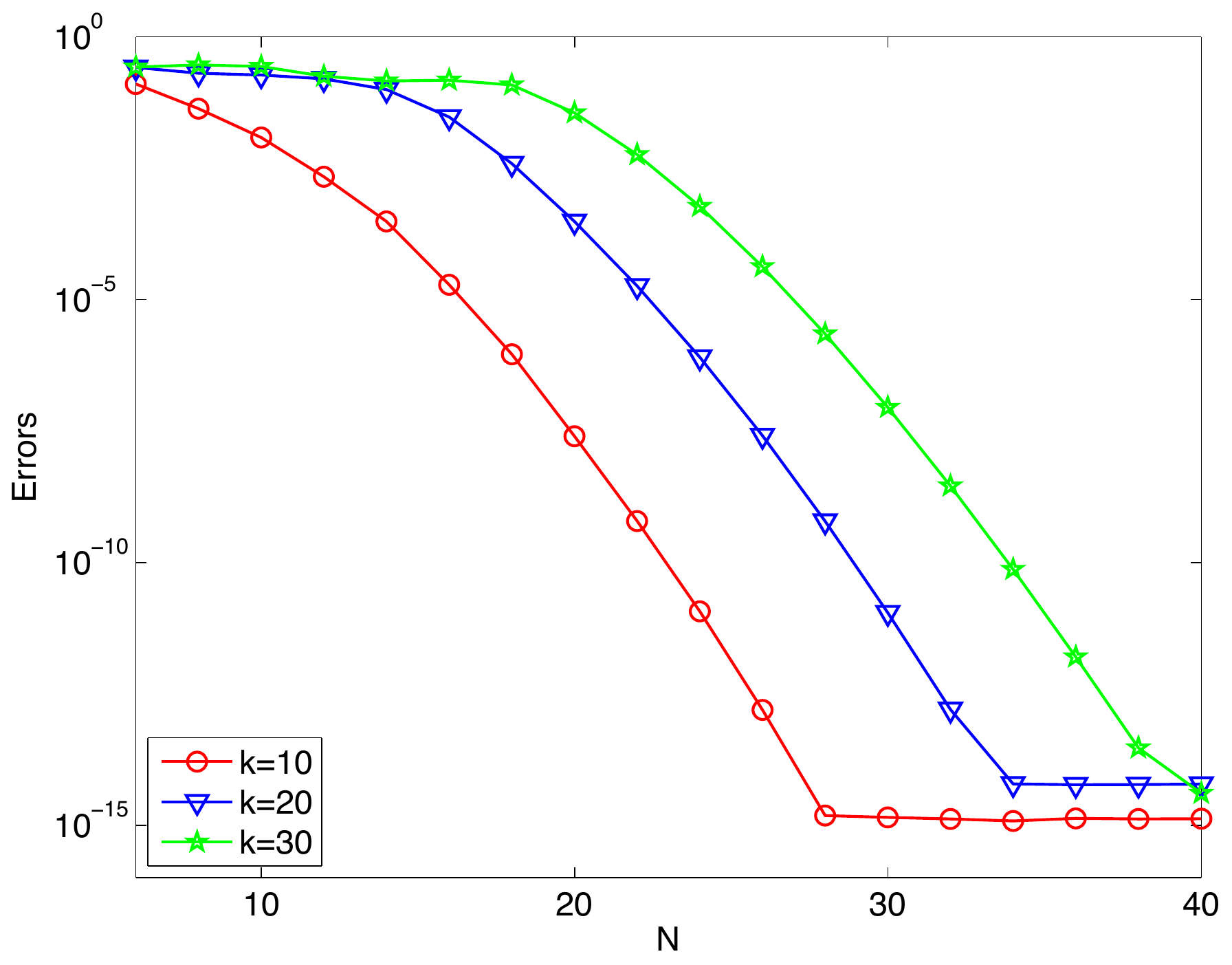}
     \includegraphics[width=0.45\textwidth]{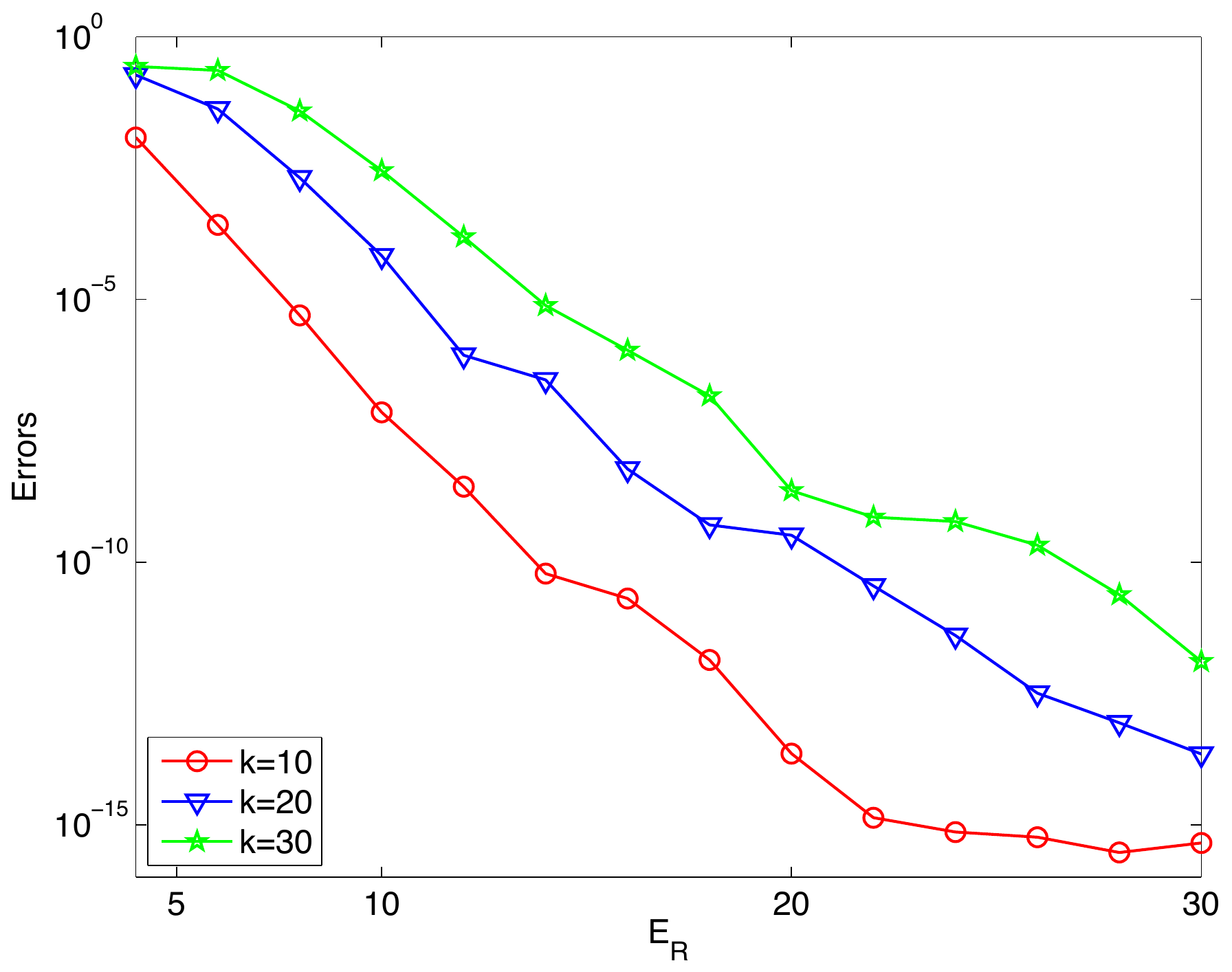}
   \caption{\small Numerical error  ${\rm max}_{|m|\le M}|\hat g_m-\hat g_{m,N}^{E_R}|$ with fixed $M=20$, $\theta_0={\pi}/{4}$ and  $k=10,20,30$. Left:  errors against $N$ with  $E_R=4$. Right: errors against $E_R$ with  $N=10$.}
    \label{uhatvsuNhat}
\end{figure}

\section{Accurate simulation of  polygonal invisibility cloaks} \label{sect:pcloak}

As already mentioned,  the invisibility cloak is one of the most exciting examples of 
transformation electromagnetics outlined in Subsection \ref{subsect:TE}.  
In this section, 
we apply the spectral-element solver proposed in Section \ref{sect3:sem}  to  simulate the  polygonal invisibility cloak, and numerically study the effects of defects, lossy media or dispersive media  in the cloaking layer.   
We particularly address the following  two  important  issues. 
\begin{itemize}
 \item[(i)] How to impose appropriate cloaking boundary conditions at the boundary of the cloaked region to 
 perfectly hide the objects inside the cloaked region? 
 \item[(ii)] How to efficiently treat the singular material parameters in spectral-element discretisation to ensure  accurate simulation? 
\end{itemize}


\subsection{Coordinate transformation and material parameters} 

As with Pendry et al. \cite{pendry.2006},  the underlying  coordinate transformation for a polygonal cloak  blows up  the origin $O$ in the original  $(\breve x,\breve y)$-coordinates  in  Figure \ref{concenfig} (a) to the polygonal domain  
$\Omega_-^{p}=A_{p}B_{p}\cdots F_{p}$ in Figure \ref{concenfig} (b), which forms the ``cloaked region".  
It is expected that  the waves from outside can not propagate  into $\Omega_-^p$ so that  any object inside  is concealed.    
 Accordingly,  the  polygonal domain 
 $\Omega_{-}$   (i.e., the polygon $AB\cdots F$) is compressed into the polygonal annulus $\Omega_-^{a}=\Omega_-\setminus \bar \Omega_-^{p}$, called the ``cloaking layer".

\begin{figure}[htbp]
 \subfigure[$(\breve x,\breve y)$-domain]{ \includegraphics[scale=.35]{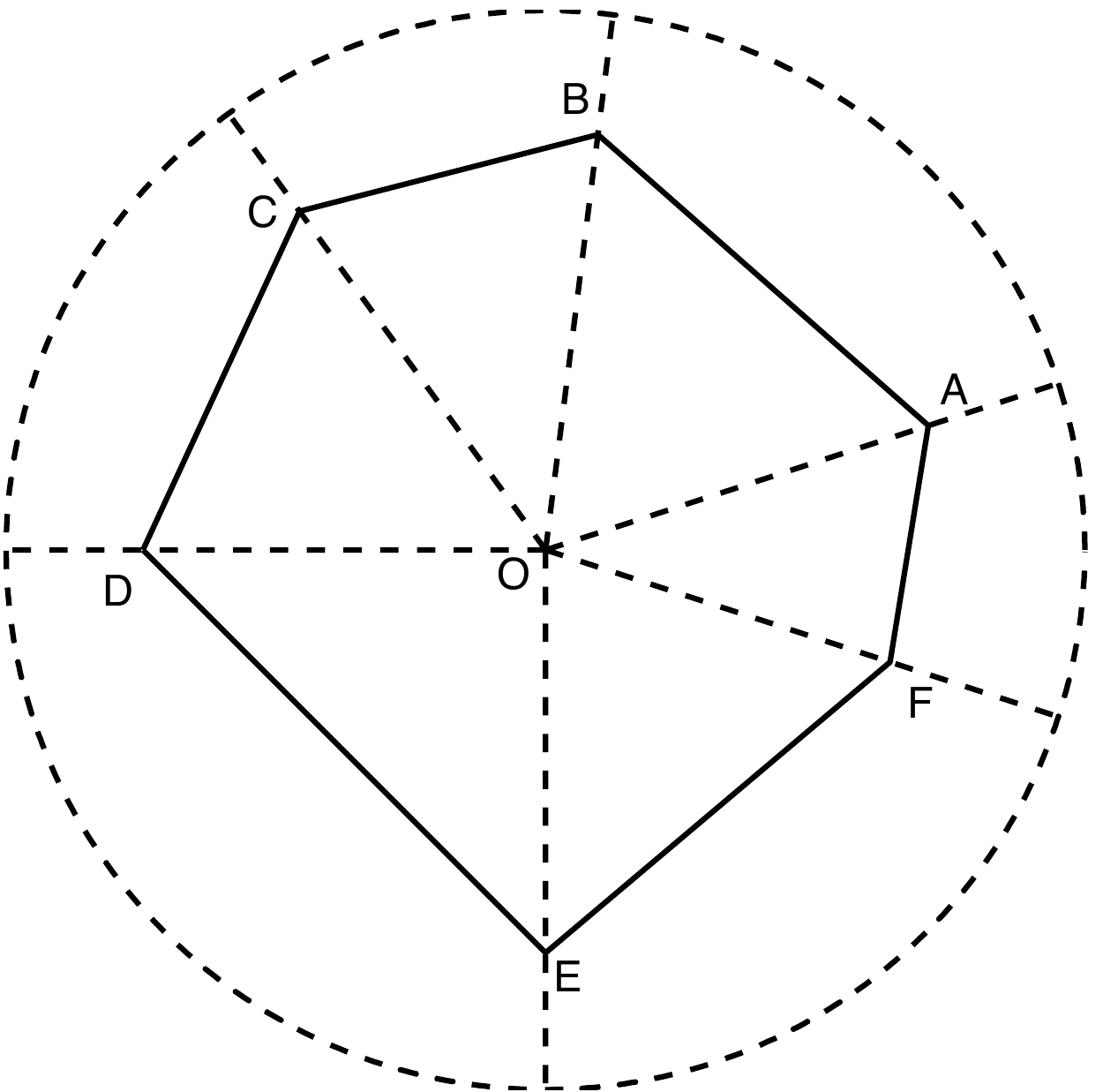}} \quad
 \subfigure[$(x, y)$-domain]{ \includegraphics[scale=.35]{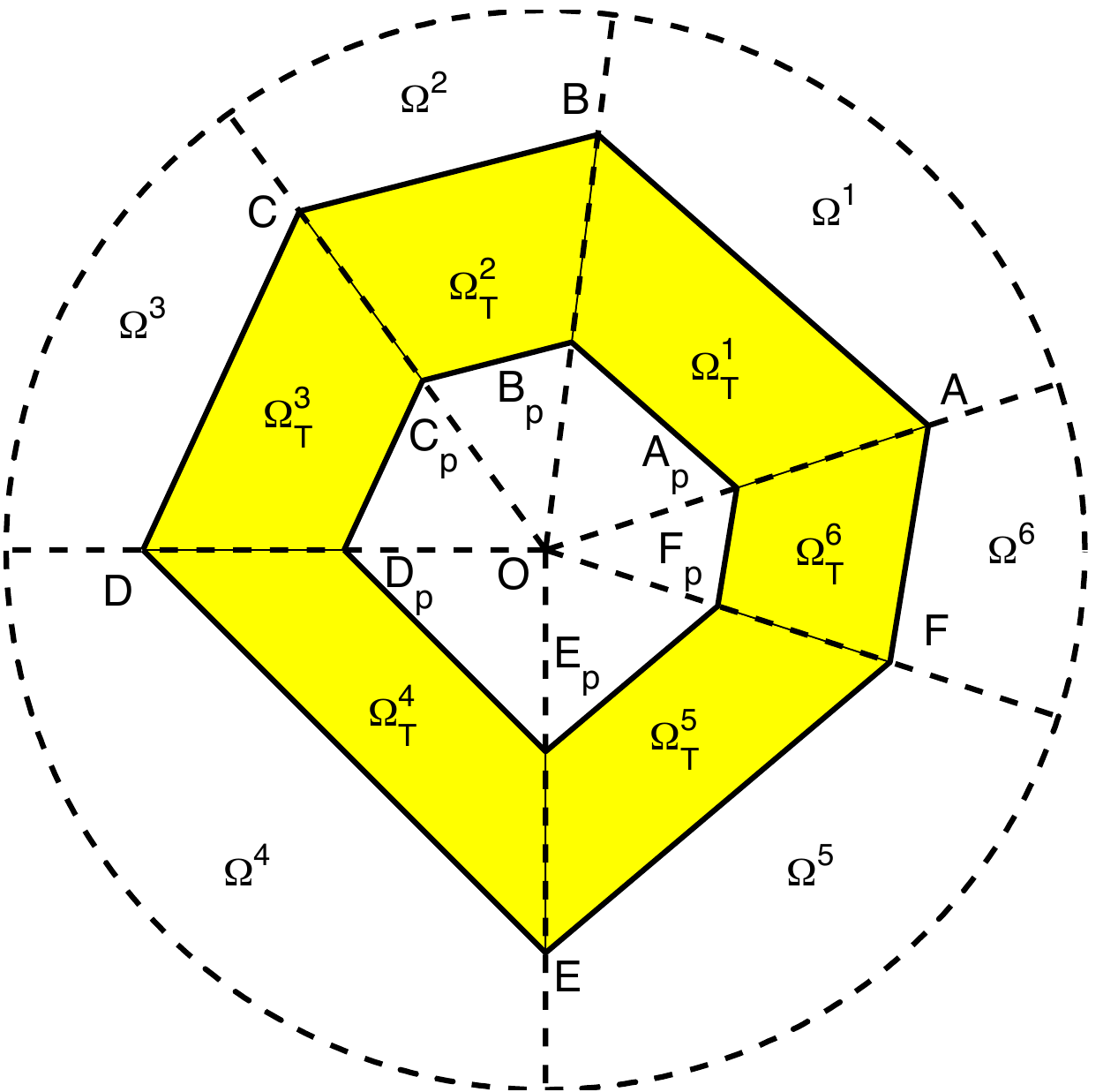}}\quad 
 \subfigure[$\bs \tau$ and $\bs n$]{ \includegraphics[scale=.3]{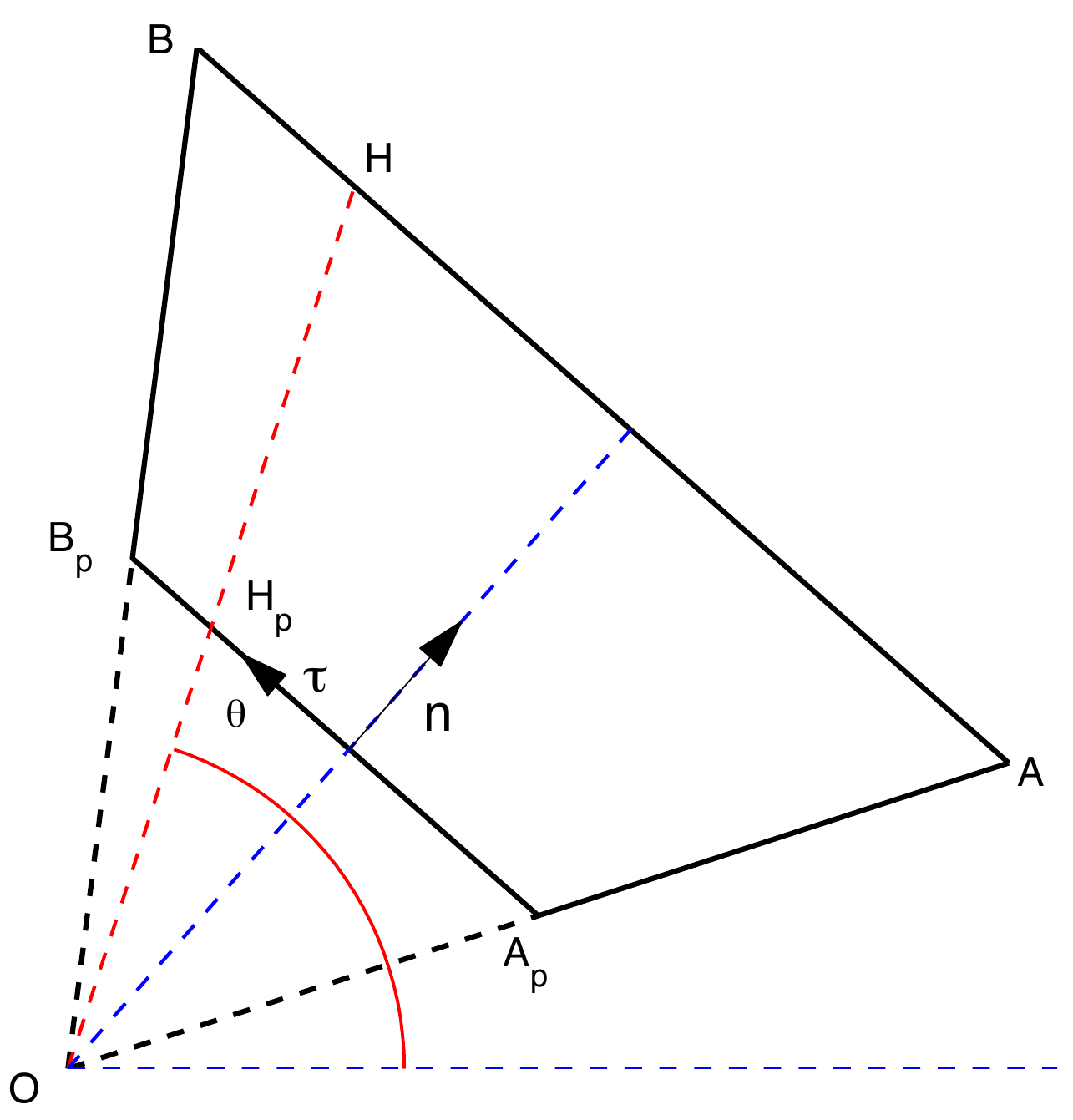}}\quad 
  \caption{\small Schematic geometry of a   polygonal  cloak. (a)  The polygonal domain in the original coordinates $(\breve x, \breve y).$ (b) Through  (\ref{cloak-tran}), the origin is spanned into the polygonal domain $\Omega_-^{p}=A_{p}B_{p}\cdots F_{p}$ that forms the cloaked region. Consequently, the original polygonal domain in (a) is compressed into the polygonal annulus 
  $\Omega_-^a$  (i.e., the shaded part) that  forms the cloaking layer.  (c) The ``local" coordinate system $(\bs \tau, \bs n).$}
\label{concenfig}
\end{figure}

In fact,  the coordinate transformations can be best  characterised in two  polar coordinates as depicted in the  diagram: 
\vspace*{6pt}
\begin{equation}\label{polartranA}
{\small \ovalbox{Original: $(\breve x,\breve y)$} \xleftarrow{\makebox[0.3cm]{}}\hspace*{-2pt}\xrightarrow{\makebox[0.3cm]{}} \ovalbox{Polar: $(\breve r,\breve \theta)$} \xrightarrow{\makebox[1.8cm]{Transform}} \ovalbox{Polar: $(r, \theta)$} \xleftarrow{\makebox[0.3cm]{}}\hspace*{-2pt}\xrightarrow{\makebox[0.3cm]{}} 
\ovalbox{Physical: $(x,y)$}}
\end{equation}
\vspace*{1pt} 

\noindent With this,  the coordinate transformation for a polygonal cloak  takes the form (see, e.g., \cite{zhangjj2008cloak}):
 \begin{equation}\label{cloak-tran}
 \begin{cases}
r=(1-\rho)\,\breve r+  \, R_1, \quad &  \breve r\in [0, R_2],\;\;\; r\in [R_1,R_2], \\[4pt]
\theta=\breve \theta,\quad &  \breve \theta, \theta  \in [0,2\pi),
\end{cases}
\end{equation}
where 
\begin{equation}\label{radiconst}
\rho=\frac{OA_{p}}{OA}=\frac{OB_{p}}{OB}=\cdots=\frac{OF_{p}}{OF},\quad  0<\rho <1,
\end{equation}
and $(R_1(\theta),\theta)$ (resp. $(R_2(\theta),\theta)$) is  the polar parametric form of the side of the cloaked region $\Omega_-^p$   (resp. the domain $\Omega_-$).    

We find it is more convenient to introduce a ``local" coordinate system to represent $R_i$ ($i=1,2$) when we impose the cloaking boundary conditions later on.   
Consider any side of the insidemost  polygonal domain $\Omega_-^{p},$ say $A_pB_p$ in Figure \ref{concenfig} (c), 
  with vertices  $(x_1, y_1)$ and $(x_2, y_2).$  Then its unit tangential vector $\bs \tau$ and unit normal vector $\bs n$ 
 are given by 
\begin{equation}\label{normaltangential}
{\bs \tau}=\frac{(x_2-x_1,y_2-y_1)}{\sqrt{(x_2-x_1)^2+(y_2-y_1)^2}}:=(\tau_1, \tau_2),\quad  {\bs n}=(\tau_2,-\tau_1),
\end{equation}
respectively, which form a ``local" coordinate system.  Noting that for any $(x,y)=(R_1\cos\theta,R_1\sin\theta)$ along this side, we have 
$(x,y)\cdot \bs n= (x_2,y_2)\cdot\bs n =(x_1,y_1)\cdot\bs n,$ and  
\begin{equation}\label{hR1}
R_1(\theta)=\frac{\tau_2 x_2-\tau_1 y_2}{\tau_2 \cos \theta -\tau_1 \sin \theta}, \quad R_2(\theta)=\rho^{-1} R_1(\theta).
\end{equation}

We can   use   \eqref{parameter3} and \eqref{cloak-tran}  to derive   the coefficients $\bs C$ and $n$ in the Helmholtz equation \eqref{eq1}.  More precisely,   the coefficients in  $B_{\!R}=\Omega_-^{p}\cup \bar \Omega_-^{a}\cup \Omega_+$  take different forms as follows.  
\vskip 5pt
\begin{itemize}
\item[(i)] In the cloaking layer $\Omega_-^{a}$,  the entries of $\bs C$ in \eqref{parameter2}-\eqref{parameter3} are given by 
(see Appendix \ref{AppendixA}):
\begin{align}
C_{11}&= \frac{r-R_1}{r}\frac{x^2}{r^2}+\frac{1}{r(r-R_1)}\bigg( \frac{{\rm d} R_1}{{\rm d} \theta}\frac{x}{r}-y   \bigg)^2\,, \label{ck11}\\[6pt]
C_{22}&=\frac{r-R_1}{r}\frac{y^2}{r^2}+\frac{1}{r(r-R_1)}\bigg(\frac{{\rm d} R_1}{{\rm d} \theta}\frac{y}{r}+x          \bigg)^2\,, \label{ck22}\\[6pt]
C_{12}&=\frac{r-R_1}{r}\frac{xy}{r^2}+\frac{1}{r(r-R_1)}\bigg( \frac{{\rm d} R_1}{{\rm d} \theta} \frac{x}{r}-y   \bigg)\bigg( \frac{{\rm d} R_1}{{\rm d} \theta} \frac{y}{r}+x   \bigg),\label{ck12}
\end{align}
and 
\begin{equation}\label{nkcase2}
n= \frac{r-R_1}{r(1-\rho)^2}\,.  
\end{equation}

\vskip 5pt 
\item[(ii)] In  $\Omega_+= B_R\setminus\bar \Omega_-$, we have  $\bs C=\bs I_2$ and $n=1.$ 
\vskip 5pt
\item[(iii)] 
Following   \cite{pendry.2006}, we set ${\bs C}=\bs I_2$ and $n=1$ in $\Omega_-^{p}.$
\end{itemize}
\begin{rem}\label{newrmk} It is seen from \eqref{hR1} that $R_1$ has a different representation in different trapezoids  $\{\Omega_T^i\}_{i=1}^6$ in Figure \ref{concenfig} (b), so  the entries of $\bs C$ and the coefficient $n$ are piecewise functions, which might be not continuous across  the  sides in the  radial direction, e.g., $A_pA$. \qed  
\end{rem}

\subsection{Transmission conditions  and  new CBCs}  For notational convenience,  we partition the disk $B_R$ into  a finite number of non-overlapping subdomains  based on the nature of  material parameters.  Take the setting in Figure \ref{concenfig} as an example, and decompose 
\begin{equation}\label{notadecomp} 
B_{\!R}=\Omega_-^{p}\cup \bar \Omega_-^{a}\cup \Omega_+,\quad \bar \Omega_-^{a}=\cup_{i=1}^6 \bar  \Omega_T^i,\quad  \bar \Omega_+=\cup_{i=1}^6 \bar  \Omega^i,
\end{equation}
where $\{\Omega_T^i\}$ are trapezoids. Denote  by $\Gamma_{\!\pm}^p $ the  ``outside"  and ``inside"  boundaries of the cloaked region $\Omega_-^p,$ respectively. 
We further denote by $\Gamma^a:=\bar \Omega_+ \cap \bar \Omega_-^{a},$ i.e., the outer boundary of $\Omega^a.$

We next impose interface conditions:  
 (i)  at $\Gamma^a$ and sides in the radial direction of the trapezoids  $\{\Omega_T^i\};$
  and (ii) at the cloaking boundary $\Gamma_{\!\pm}^p.$
In the former case,    the material parameters are bounded below and above, so 
we impose the usual  transmission conditions. 
Recall that under the TE polarisation, $\bs E=(0,0,u)^t,$ so we have 
\begin{equation}\label{tranmissionA}
\llbracket u \rrbracket=\llbracket \bs C \nabla u \rrbracket=0\; \quad \text{for case (i)}.
\end{equation}

In the latter case,   the material parameters degenerate at $\Gamma^p_{\!+}$ (cf. \eqref{ck11}-\eqref{ck12}),  so the above transmission conditions are not applicable.  In fact, the tangential component of $\bs E$ is not continuous across the cloaking boundary.  Indeed, 
 it is the singular non-isotropic medium in the cloaking layer  $\Omega_-^{a}$ that  offers  the possibility of achieving invisibility within the innermost polygonal domain  $\Omega_-^{p}$.  
In practice, the perfect conduct shell, i.e., PEC (under TM polarisation) or PMC (under TE polarisation) was enforced  at $\Gamma_{\!+}^p$ in simulations (see, e.g., \cite{cummer2006full}  for circular cylindrical cloaks, \cite{jiang2008elliptical, kwon2008two} for elliptic cloaks,   \cite{zhangjj2008cloak, Wu2009material} for polygonal cloaks, and \cite{zhao2008full, Li2012developing} for time domain cloaks).  However, as commented in \cite{cummer2006full}, such a condition was sufficient but not necessary. Indeed, it was  shown in \cite{yang2014accurate}, the PEC or PMC  could not lead to an independent, meaningful boundary condition for circular cylindrical cloaks  in polar coordinates.

Following the spirit of \cite{yang2014accurate},  we next introduce the essential ``pole"  conditions associated with  the singular transformation \eqref {cloak-tran}. 
 As we will see, it is important  to use the  ``local" coordinate system $(\bs \tau,\bs n)$ in 
\eqref{normaltangential}  to decompose the differential operators and then carefully study the singularity.  
\begin{prop}\label{essentialpole} Let  $(\bs \tau,\bs n)$  be the ``local" coordinate system in 
\eqref{normaltangential}.  The CBCs take the form 
\begin{equation}\label{newCBC}
\nabla_{\!\bs \tau}u^{+}=\bs \tau \cdot \nabla u^+=0\;\;  {\rm at}\;\; \Gamma_{\!+}^p; \quad   \nabla_{\!\bs n}u^{-}=\bs n \cdot \nabla u^-=0  \;\; {\rm at}\;\; \Gamma_{\!-}^p, 
\end{equation}
where $u^{+}=u|_{B \setminus \bar \Omega_-^p}$ and $u^{-}=u|_{\bar \Omega_-^p}.$
\end{prop}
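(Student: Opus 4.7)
The plan is to exploit the pole structure of the singular transformation \eqref{cloak-tran}, which collapses the single point $\breve r=0$ onto the entire polygonal boundary $\Gamma_{\!+}^p$, and then to combine this with a flux-transmission argument across $\Gamma^p$ using the singular structure of the material tensor $\bs C$ in the local basis $(\bs\tau,\bs n)$. The two conditions in \eqref{newCBC} will arise from quite different mechanisms: the first encodes the regularity of the original field at the origin, while the second follows because the $1/(r-R_1)$ singular part of $\bs C$ is aligned with $\bs\tau\bs\tau^t$ and is therefore annihilated by $\bs n$.

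For the outer CBC, I would note that the original field $\breve u$ solves the standard Helmholtz equation with source supported away from $\breve r=0$, so $\breve u$ is smooth at the origin and its value there is a constant $c_0$, independent of $\breve\theta$. Since the coordinate change preserves the $z$-axis, the $E_z$-components agree, $u(\bs r)=\breve u(\breve{\bs r})$; hence $u^+(R_1(\theta),\theta)=\breve u(0,\theta)\equiv c_0$ for every $\theta$. On any flat edge of $\Gamma_{\!+}^p$ parameterized by $\theta\in(\theta_e,\theta_{e+1})$, this forces $u^+$ to be constant along the edge, so $\nabla_{\!\bs\tau}u^+=\bs\tau\cdot\nabla u^+=0$. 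An equivalent direct verification uses the chain rule on $u^+(R_1(\theta)\cos\theta,R_1(\theta)\sin\theta)=c_0$, together with the identity $R_1'(\theta)/R_1(\theta)=(\tau_1\cos\theta+\tau_2\sin\theta)/(\tau_2\cos\theta-\tau_1\sin\theta)$ obtained by differentiating \eqref{hR1}.

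For the inner CBC, I would resolve $\bs C$ of \eqref{ck11}-\eqref{ck12} in the local frame $(\bs\tau,\bs n)$ and track the leading behaviour as $\delta:=r-R_1(\theta)\to 0^+$. Writing $\alpha:=\tau_2\cos\theta-\tau_1\sin\theta$, a Taylor expansion of the numerators using the auxiliary identities $R_1'\cos\theta-R_1\sin\theta=R_1\tau_1/\alpha$ and $R_1'\sin\theta+R_1\cos\theta=R_1\tau_2/\alpha$ reveals that the $O(1/\delta)$ singular part of the Cartesian matrix $\bs C$ is a rank-one tensor proportional to $\bs\tau\bs\tau^t$. Consequently the component $C_{\bs n\bs n}:=\bs n^t\bs C\bs n$ has its $O(1/\delta)$ term killed, and a further calculation shows the $O(1)$ contributions also cancel exactly, so that $C_{\bs n\bs n}(R_1,\theta)=0$, while $C_{\bs\tau\bs n}:=\bs\tau^t\bs C\bs n$ remains finite. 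Combining this with $\partial_{\bs\tau}u^+=0$ from the outer CBC yields
\begin{equation*}
\bs n\cdot\bs C^+\nabla u^+\big|_{\Gamma_{\!+}^p}=C_{\bs n\bs n}\,\partial_{\bs n}u^+ + C_{\bs\tau\bs n}\,\partial_{\bs\tau}u^+=0.
\end{equation*}
The flux-continuity part of \eqref{tranmissionA}, $\bs n\cdot\bs C^-\nabla u^-=\bs n\cdot\bs C^+\nabla u^+$, together with $\bs C^-=\bs I_2$ in $\Omega_-^p$, then gives $\bs n\cdot\nabla u^-=0$ on $\Gamma_{\!-}^p$, as claimed.

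The main obstacle is the delicate double cancellation inside $C_{\bs n\bs n}$: both the $O(1/\delta)$ and the $O(1)$ terms must vanish simultaneously at $r=R_1(\theta)$, which hinges on the two trigonometric identities above coming directly from the parametric form \eqref{hR1} of the flat edge. Once these cancellations are established, the remaining steps are straightforward algebra, and the decoupling advertised by \eqref{newCBC} is immediate.
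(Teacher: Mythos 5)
Your proposal is correct, and it reaches \eqref{newCBC} by a genuinely different route for the first condition while essentially reproducing the paper's computation for the second. For the outer condition the paper applies the principle ``finite fields stay finite'' to the \emph{magnetic} field: inserting \eqref{ck11}--\eqref{ck12} into \eqref{magnetic2} yields \eqref{H1}--\eqref{H2}, whose terms $\frac{\tau_j}{r-R_1}\bigl(r+(\frac{{\rm d}R_1}{{\rm d}\theta})^2\frac1r\bigr)\nabla_{\!\bs\tau}u^+$ blow up as $r\to R_1$ unless $\nabla_{\!\bs\tau}u^+=0$, with \eqref{dRdtheta} guaranteeing boundedness of the remaining coefficients; you instead apply the same principle to the \emph{electric} field, using the block structure of $\bs J$ under $z=\breve z$ (so $u(\bs r)=\breve u(\breve{\bs r})$) and smoothness of $\breve u$ at the blown-up origin to conclude $u^+|_{\Gamma_{\!+}^p}\equiv \breve u(0)$, of which $\nabla_{\!\bs\tau}u^+=0$ is the edge-wise differential form. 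Your argument is more elementary (no expansion of $\bs H^+$ is needed) and delivers the stronger conclusion that the trace is one global constant, but it presupposes that $u^+$ is the pushforward of the virtual-space solution, whereas the paper's derivation is intrinsic to the transformed problem: finiteness of $\bs H^+$ alone forces the condition. For the inner condition your rank-one analysis is the paper's \eqref{casesA} in different clothing: writing $\bs C=\frac{r-R_1}{r}\hat{\bs r}\hat{\bs r}^t+\frac{1}{r(r-R_1)}\bs v\bs v^t$ with $\bs v=(\frac{{\rm d}R_1}{{\rm d}\theta}\frac{x}{r}-y,\,\frac{{\rm d}R_1}{{\rm d}\theta}\frac{y}{r}+x)^t$, your two trigonometric identities give $\bs n\cdot\bs v=(R_1-r)(\tau_1\cos\theta+\tau_2\sin\theta)$, whence $C_{\bs n\bs n}=\frac{r-R_1}{r}$ \emph{exactly} and $C_{\bs\tau\bs n}$ finite --- precisely the coefficients $\frac{r-R_1}{r}\nabla_{\!\bs n}u^+$ and $\frac1r\frac{{\rm d}R_1}{{\rm d}\theta}\nabla_{\!\bs\tau}u^+$ appearing in the paper's jump formula for $\bs n\times(\bs H^+-\bs H^-)$. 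Two points of care: (i) concluding $C_{\bs n\bs n}\,\partial_{\bs n}u^+\to 0$ tacitly requires $\nabla_{\!\bs n}u^+$ to be bounded at $r=R_1$, an assumption the paper states explicitly (``Suppose that $\nabla_{\bs n}u^+$ is finite''); (ii) you should not invoke \eqref{tranmissionA} wholesale at the cloaking boundary, since the paper emphasizes it is \emph{not} applicable there ($\llbracket u\rrbracket\neq 0$ because the tangential $\bs E$ is discontinuous); what survives, and what the paper actually enforces by computing the jump of tangential $\bs H$ and \emph{imposing} $\nabla_{\!\bs n}u^-=0$ to annihilate it, is only the tangential-$\bs H$ (conormal-flux) continuity that your argument uses --- so your deduction is sound once restricted to that component, and it then yields the same decoupling of inside and outside as in the paper.
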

\begin{proof} 
In order to achieve spectral accuracy in simulations involving singular transformations, we follow   \cite{yang2014accurate} (also see \cite{ShenTaoWang2011}) and require that  the well-behaved and finite electromagnetic fields in the original coordinates $\breve {\bs r}=(\breve x,\breve y)$  must be well-behaved and finite  in the new coordinates $\bs r=(x,y)$.  We therefore apply this principle to the magnetic field in the cloaking layer, and show that   the essential ``pole" condition of the transformation \eqref {cloak-tran} takes the form 
\begin{equation}\label{CBC}
\nabla_{\!\bs \tau}u^{+}=0 \;\;\;\; {\rm at}\;\;\; \Gamma_{\!+}^p.
\end{equation} 
Recall that by  \eqref{magnetic}, $\bs H^+=\bs H|_{B \setminus \bar \Omega_-^p}$ can be expressed as
\begin{equation} \label{magnetic2}
{\bs H}^+=(H_1^+, H_2^+,0)^t=\frac{1}{{\rm i} \omega \mu_0}\big(C_{12} u_x^++C_{22} u_y^+, -C_{11}u_x^+-C_{12}u_y^+,0\big)^t. 
\end{equation}
Then  we have 
\begin{equation}\label{poinval}
u_x^+= \tau_1 \nabla_{\!\bs \tau} u^+- \tau_2 \nabla_{\!\bs n} u^+,\quad u_y^+=\tau_2 \nabla_{\!\bs \tau} u^+ + \tau_1 \nabla_{\!\bs n} u^+.
\end{equation}
Inserting \eqref{ck11}-\eqref{ck12} and \eqref{poinval} into \eqref{magnetic2} and collecting the terms,  we obtain 
\begin{align}
\ri \omega \mu_0H_1^+&= -\Big(\frac{r-R_1}{r}\tau_1+\frac{{\rm d}R_1}{{\rm d}\theta}\frac{\tau_2}{r}\Big) \nabla_{\!\bs n}u^+ +\frac{{\rm d}R_1}{{\rm d}\theta}\frac{\tau_1}{r}\nabla_{\!\bs \tau}u^+ +\frac{\tau_2}{r-R_1}\Big(r+\Big(\frac{{\rm d}R_1}{{\rm d}\theta}\Big)^2\frac{1}{r}\Big)  \nabla_{\!\bs \tau}u^+, \label{H1}\\
\ri \omega \mu_0H_2^+ &=-\Big(\frac{r-R_1}{r}\tau_2-\frac{{\rm d}R_1}{{\rm d}\theta}\frac{\tau_1}{r}\Big) \nabla_{\!\bs n}u^+ +\frac{{\rm d}R_1}{{\rm d}\theta}\frac{\tau_2}{r}\nabla_{\!\bs \tau}u^+ -\frac{\tau_1}{r-R_1}\Big(r+\Big(\frac{{\rm d}R_1}{{\rm d}\theta}\Big)^2\frac{1}{r}\Big)  \nabla_{\!\bs \tau}u^+.\label{H2}
\end{align}
By  \eqref{hR1}, 
\begin{equation}\label{dRdtheta}
 \frac{{\rm d}R_1}{{\rm d}\theta}=R_1\dfrac{\tau_1 x + \tau_2 y}{\tau_2 x- \tau_1 y},
\end{equation}
which is uniformly bounded in $\bar \Omega_-^a$ (note: $\tau_2 x- \tau_1 y=0$ implies  the side of the polygonal passes through  the origin, which is not possible). Thus, $\bs H^+$ is a finite field in $\bar \Omega_-^a$, if and only if  the first condition in  \eqref{newCBC} holds.  

Let $\bs H^-=\bs H|_{ \bar \Omega_-^p}.$ From \eqref{normaltangential}, \eqref{ck11}-\eqref{ck12}  and \eqref{magnetic2},   we obtain  
\begin{equation}\label{casesA}
\begin{split}
\bs n\times\big(\bs H^+ -\bs H^-\big)\big|_{r=R_1}&=\frac{1}{{\rm i} \omega \mu_0} \big(0,0,-\bs n \cdot (\bs C \nabla u^+)+   \nabla_{\bs n} u^-\big)^t \big|_{r=R_1}  \\
&=\frac{1}{{\rm i} \omega \mu_0} \Big(0,0, -\frac{r-R_1}{r}\nabla_{\!\bs n}u^+ +\frac{1}{r}\frac{dR_1}{d \theta}\nabla_{\!\bs \tau}u^+ +\nabla_{\bs n}u^- \Big)^t\Big|_{r=R_1},
\end{split}
\end{equation}
where the restriction at $r=R_1$ means that we approach the cloaking boundary from the inside and the outside. 
Suppose that $\nabla_{\bs n} u^+$ is finite. Then by   \eqref{CBC},  
\begin{equation}\label{simplfyA}
\bs n\times\big(\bs H^+ -\bs H^-\big)\big|_{r=R_1}=\frac{1}{{\rm i} \omega \mu_0} \big(0,0,  \nabla_{\bs n}u^- \big)^t\big|_{r=R_1}.
\end{equation} 
By imposing the  second condition in \eqref{newCBC}, we find 
\begin{equation}\label{ncurlE}
\bs H^-\times \bs n=\frac 1 {\ri \omega\mu_0}\big(\nabla \times \bs E^-\big) \times \bs n=0\quad  {\rm at}\;\; \Gamma_-^{p},
\end{equation}
and by \eqref{casesA},   the tangential component of the magnetic field $\bs H$ is continuous across the cloaking boundary. 
\end{proof}

\begin{rem}\label{newcaseA}  Weder \cite{Weder08} proposed CBCs for point-transformed   3D invisibility cloaks: 
\begin{align}
&\bs E^+\times \bs n=\bs H^+\times \bs n=\bs 0\quad {\rm at}\;\; \partial K_+; \label{ncurlEWeder}\\ 
&(\nabla \times \bs E^-)\cdot\bs n=(\nabla\times \bs H^-)\cdot \bs n= 0\quad {\rm at}\;\; \partial K_-, \label{ncurlEWeder2} 
\end{align}
where $K$ is the cloaked region.  These allowed for the decoupling of the governing equations of the inside and outside, and the spherical cloak was considered as a  particular  application.   It is important to remark that \eqref{ncurlEWeder}-\eqref{ncurlEWeder2} are not applicable to the 2D polygonal cloak, as 
$$\bs E^+\times \bs n=(\tau_1,\tau_2,0)^t\, u^+\not =\bs 0\quad {\rm at}\;\; \Gamma_{\!+}^p.$$
Indeed, $\bs E^+=(0,0,u^+)^t$ does not vanish at the outer cloaking boundary.  Moreover, the condition \eqref{ncurlEWeder2}  is different from \eqref{ncurlE}.  
Notably,  the CBCs in \eqref{newCBC}  also leads to the decoupling of the inside and outside, as we will see below.   
\qed 
\end{rem}

With the new CBCs in \eqref{newCBC},  the governing equations   can be  decoupled into 
\begin{align}
& \nabla \cdot({\bs C}(\bs r)\, \nabla u^+(\bs r))+ k^2n(\bs r) u^+(\bs r)=0 \quad  {\rm in}\;\; B_{\!R} \setminus \bar \Omega_-^p, \label{eq1conL}   \\
& \text{Transmission conditions in \eqref{tranmissionA}}; \quad 
\nabla_{\!\bs \tau}u^+=0 \;\;\;  {\rm at}\;\; \Gamma_{\!+}^p,\label{eq11conL}\\
&\partial_{r} u^+ -{\mathscr T} _{R} [u^+]=h   \quad {\rm at}\;\;  \partial B_{\!R},  \label{eq2conL}
\end{align}
and
\begin{align}\label{inner2} 
& \Delta u^- +k^2 u^- =0  \quad  {\rm in}\;\;\Omega_-^{p}; \quad \nabla_ {\bs n} u^-=0   \quad  {\rm at}\;\; \Gamma_-^p. 
\end{align}
\begin{rem}\label{markA}
It is standard to show that \eqref{inner2} has a unique solution $u^-(\bs r)\equiv 0,$ if  $k^2$ is not an eigenvalue of $-\Delta$ in $\Omega_-^p$ with homogeneous Neumann boundary condition.   \qed
\end{rem}

\subsection{Treatment of singularities  in spectral-element discretisation}    
For accurate simulation,  it is advisable to build the boundary   condition $\nabla_{\!\bs \tau}u^+=0$ at $\Gamma_{\!+}^a$ in the spectral-element solution space.  Naturally, we   adopt the partition in \eqref{notadecomp},
and  consider for example $\Omega^{1}_{T}=A_pB_pBA,$ where  $A_pB_p$ 
  has vertices  $(x_1, y_1)$ and $(x_2, y_2).$  Suppose that   $A_pB_p$  is mapped to $\eta=-1$ via  the Gordon-Hall transform \eqref{GordonHall}, namely, 
  \begin{equation}\label{tranA}
{\bs r}={\bs \chi}^e(\xi,-1)
=\Big(\frac{x_2-x_1}{2}\xi+\frac{x_1+x_2}{2}, \frac{y_2-y_1}{2}\xi+\frac{y_1+y_2}{2}\Big),\quad \xi\in [-1,1],
\end{equation}
which yields 
\begin{equation}\label{partialderi}
\partial_x \xi=\frac{2}{x_2-x_1},\;\; \partial _y \xi=\frac{2}{y_2-y_1},\;\; \partial_x \eta=\partial_y \eta=0.
\end{equation}
One  verifies readily  that
\begin{align*}
0&=\nabla_{\bs \tau}u^+\big|_{A_pB_p} 
=
\big(\tau_1 \partial_x \xi+\tau_2 \partial_y \xi\big)\partial_{\xi}u^e(\xi,-1)+\big(\tau_1 \partial _x \eta +\tau_2 \partial_y\eta\big) \partial_{\eta}u^e(\xi,-1)\\
&=\frac{4}{\sqrt{(x_1-x_2)^2+(y_1-y_2)^2}}\partial _{\xi} u^e(\xi,-1), 
\end{align*}
where $u^e=u^+({\bs \chi}^e(\xi,-1)).$
This leads to the corresponding boundary condition in  $(\xi,\eta)$-coordinates: 
\begin{equation}\label{polerefer}
\partial _{\xi} u^e(\xi,-1)=0,\;\;\;  \xi \in [-1,1].
\end{equation}

 Accordingly, we modify the approximation space in  \eqref{solusps} as
 \begin{equation}\label{solusps2}
 \begin{split}
V_N^E=\big\{v\in C(B_R \setminus \bar \Omega_-^p)\,:\, & v(\bs r)|_{\Omega^e}=v(\bs \chi^e(\xi,\eta))\in {\mathcal P}_N^2,\; 1\le e\le E\;\; {\rm and} \\
& \nabla_{\bs \tau} v(\bs r)|_{\Omega^e_T\cap \Gamma_{\!+}^p}=\partial_\xi v(\bs \chi^e(\xi,-1))=0,\;\; 1\le e\le E_T \big\},
\end{split}
\end{equation}
where $E=12$ and $E_T=6$ for the setting in Figure \ref{concenfig}. To meet the boundary condition at $\Gamma_{\!+}^p$, we modify the tensorial nodal basis in \eqref{newbasis}: 
\begin{align}\label{modibasis} 
 \psi_{00}=l_0(\eta),\quad  \psi_{ij}=l_i(\xi)l_j(\eta),\quad 0\le i\le N,\;\; 1\le j \le N,
\end{align}
and 
one verifies readily that 
\begin{align}\label{modibasis2} 
 \partial_\xi \psi_{00}=0,\quad  \partial_\xi \psi_{ij}\big|_{\eta=-1}=l_i'(\xi)l_j(-1)=0,\quad 0\le i\le N,\;\; 1\le j \le N.
\end{align}
With such a modification,  
the singularity can be absorbed by the basis and  the spectral-element scheme \eqref{discretescheme} can be implemented as usual. 

\subsection{Simulation results for perfect polygonal cloaks} 
We  now provide some numerical results and compare them with results obtained from the finite-element-based  COMSOL Multiphysics   package. 
  Assume  that the incident source  is a TE plane wave with an incident angle $\theta_0$:
\begin{equation} \label{incident1}
u_{\rm in}(r,\theta)=e^{{\rm i}kr\cos(\theta-\theta_0)}=\sum_{|m|=0}^{\infty} {\rm i}^m  J_m(kr)e^{{\rm i}m(\theta-\theta_0)},
\end{equation}
so  $h$ in \eqref{uDtN} takes the form
$$
h=\partial_{r} u_{\rm in}-{\mathscr T} _{R}[u_{\rm in}]={\rm i}k\cos(\theta-\theta_0)u_{\rm in}-\sum_{|m|=0}^{\infty}{\rm i}^m J_m(kR){\mathcal T}_m e^{-{\rm i}m \theta_0},
$$
where ${\mathcal T}_m$ is defined in \eqref{kernelK}. In the following tests, we consider a pentagonal cloak where the polar coordinates  of $A,B,\cdots, E$ are 
$\theta=\pi/ 5, 3 \pi/7,  8 \pi/9,  6 \pi/5, 7 \pi/4,$ respectively, and $r=0.7$ for all vertices.   
We take   $\rho=0.7$  in \eqref{radiconst} and $R=1.0$, and     truncate  the infinite series in \eqref{incident1} with a cut-off number  $M=60.$  
\begin{figure}[htbp]
 {~}\hspace*{-16pt}\subfigure[FEM electric field]{ \includegraphics[scale=.23]{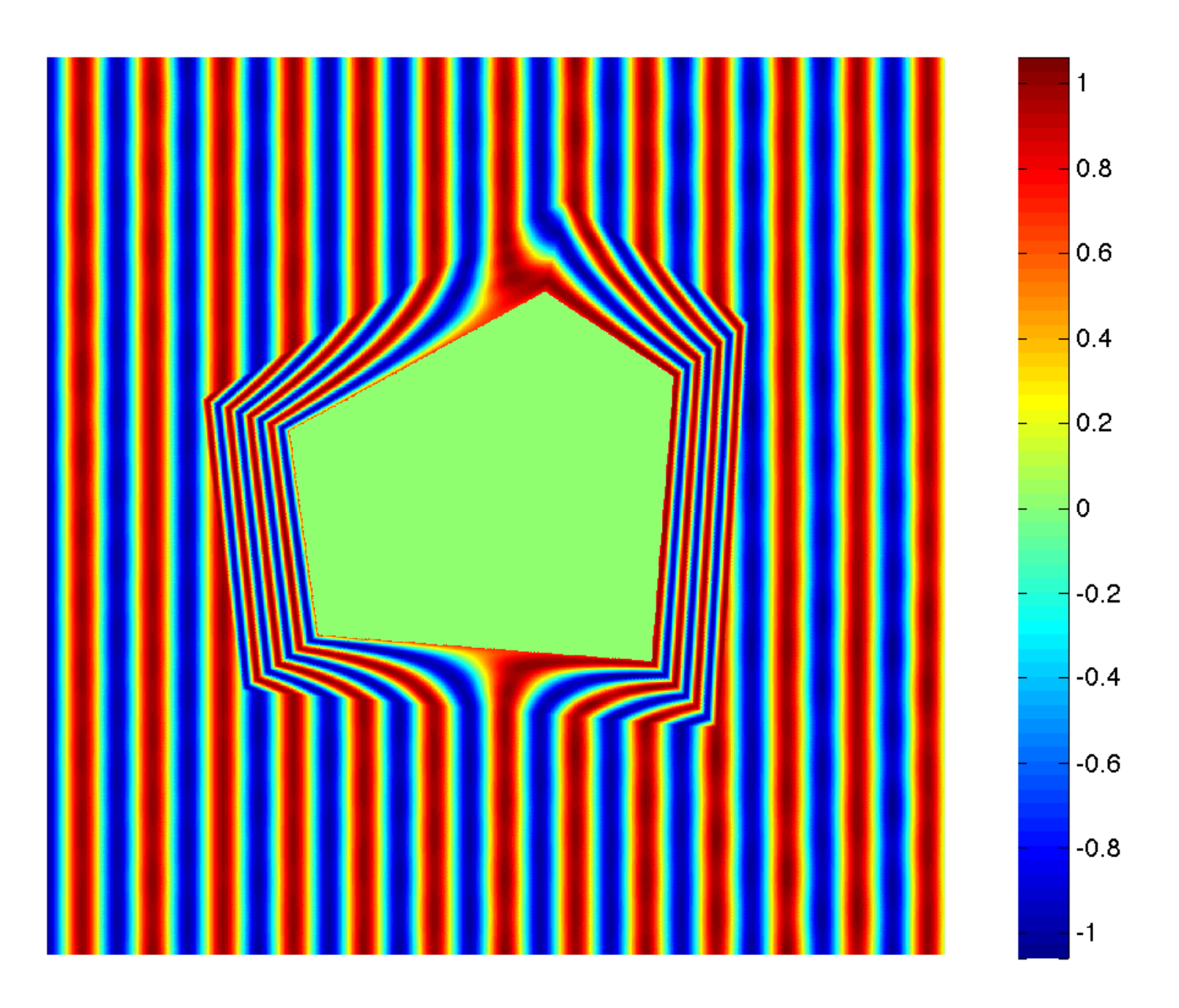}} \hspace*{-12pt}
\subfigure[Contour (FEM)]{\includegraphics[scale=.27]{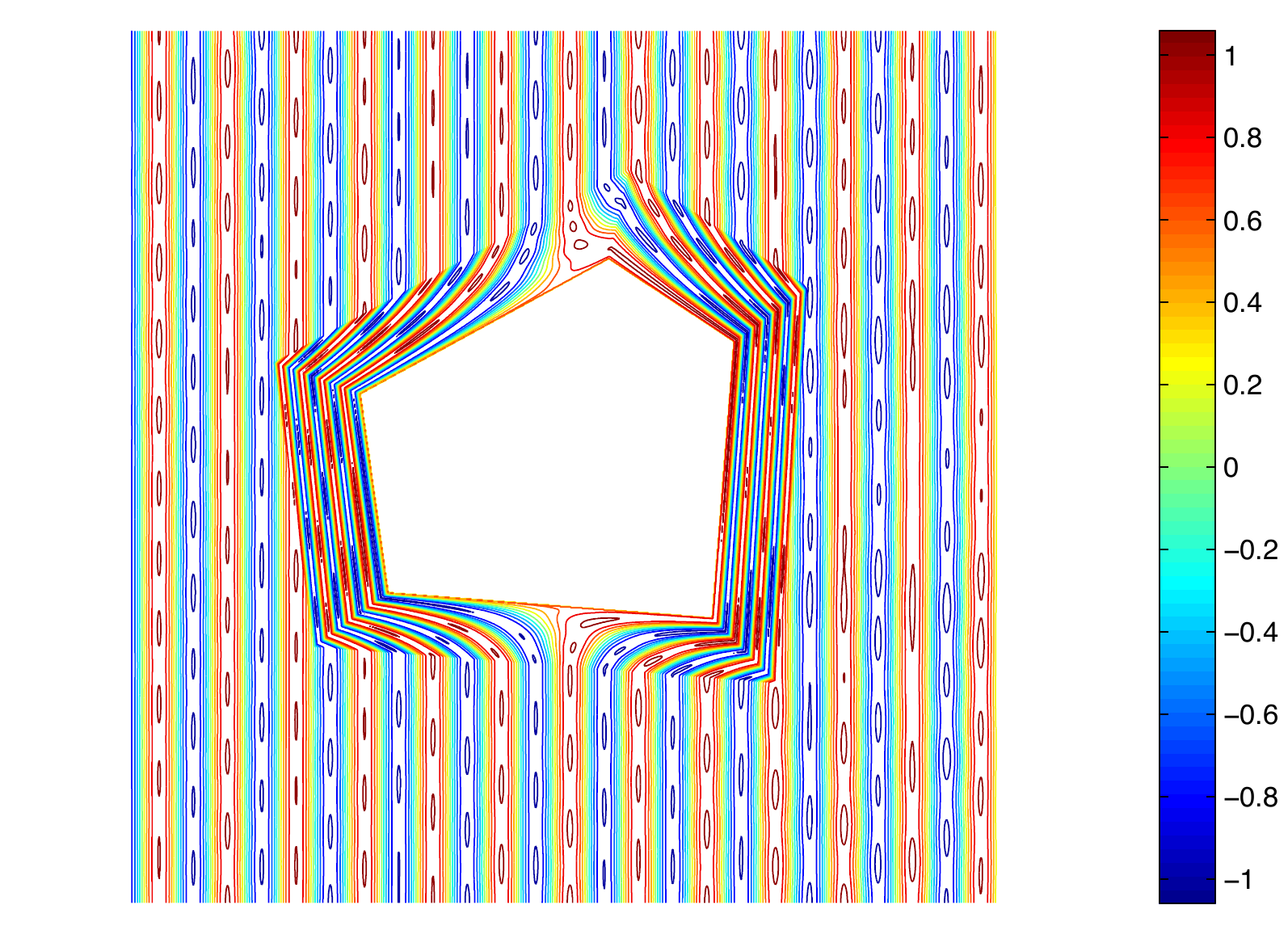}} \hspace*{-10pt} 
 \subfigure[SEM electric field]{ \includegraphics[scale=.23]{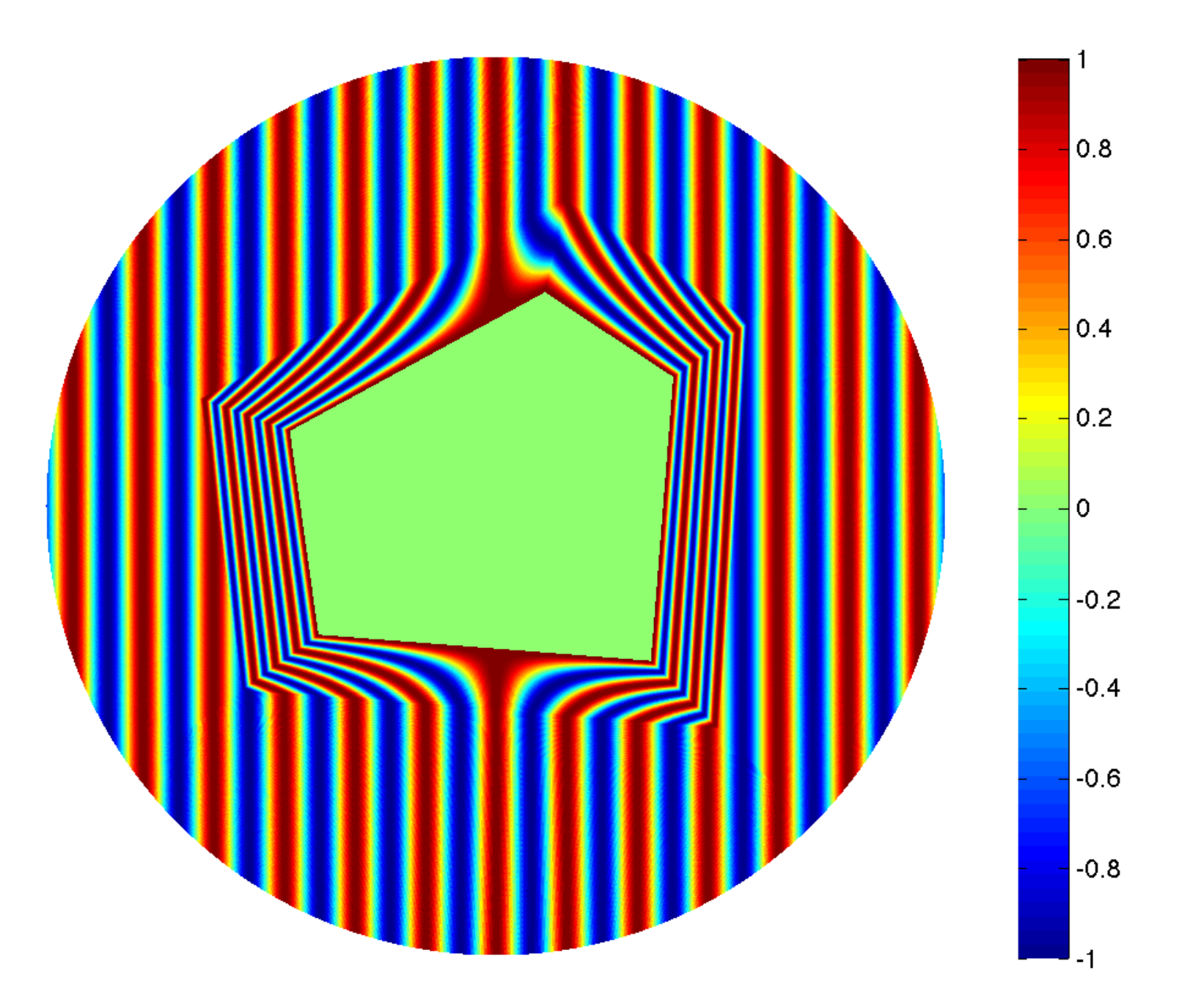}} \hspace*{-10pt}
 \subfigure[Contour (SEM)]{ \includegraphics[scale=.23]{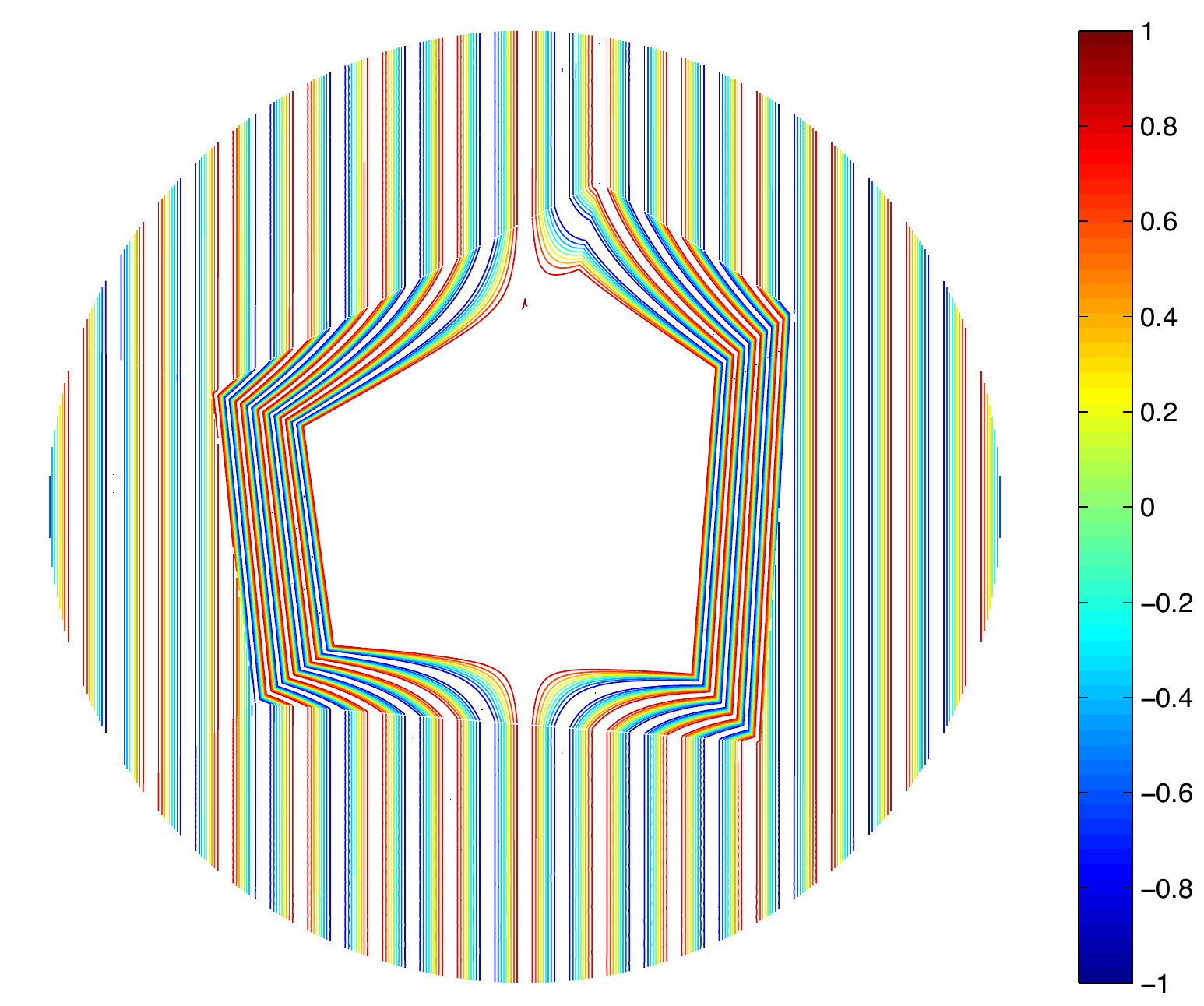}}

\subfigure[Profiles of  the real  and imaginary  parts of the electric field along  $\theta=0$ by SEM]{ \includegraphics[width=0.94\textwidth,height=0.25\textwidth]{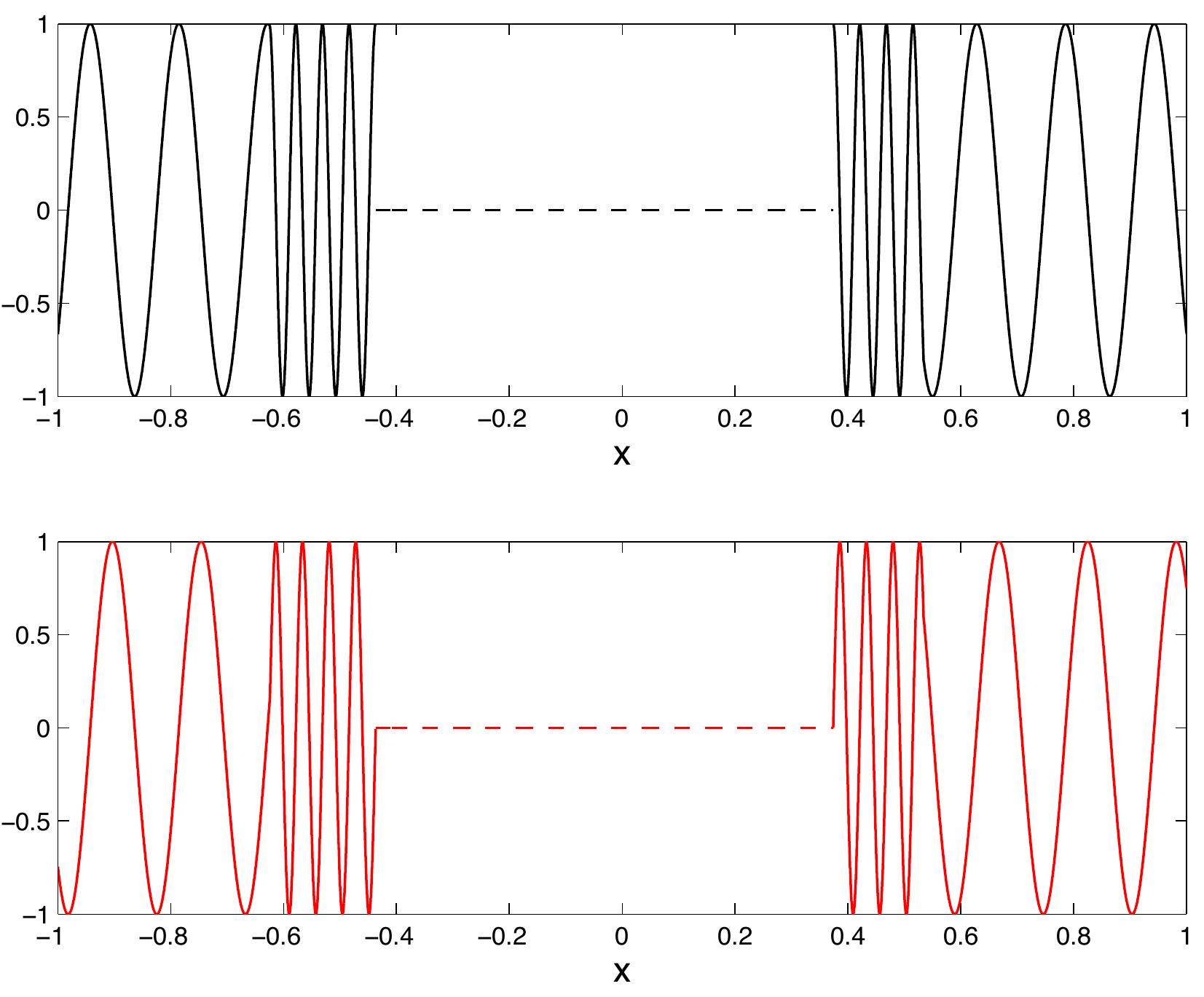}} 
  \caption{\small A comparison study: SEM versus FEM, where $\theta_0=0$ and $k=40.$ In SEM simulation,     $40\times 40$-grid is used for each element with a total DOF: $16,000.$      In FEM simulation,   
    a total DOF:  $726,933$ is used.}
\label{SemA}
\end{figure}
 
%
%

 In Figure \ref{SemA}, we plot the electric field distribution (real part)  for $k=40$  obtained  by  (i) SEM  with $N=40$ in each element and  the degree of freedom (DOF): $16,000$,  and (ii)  FEM  with 
 with a total DOF:  $726,933$ (in order to obtain reasonable results).  Observe that 
the magnitude of the field obtained from FEM is about  $1.05$ (see the colour bar of  (a)), which is expected  to be $1$ as shown in (c). As a result,  the wavefront of the field  in the  contour appears blurred, while that of the SEM is very accurate.   Indeed, the use of exact DtN boundary condition and new CBCs allows us to simulate the   ideal cloak very accurately.  
 
 We further challenge  SEM with higher wavenumber $k=80$ and oblique incident angle $\theta_0=\pi/4$ (see Figure \ref{SemB}).  
We depict in (a) the electric field distribution (real part) with cut-off number $M=80$ and $N=80$ in each element with the same geometric setting in Figure \ref{SemA} (c). Again, the highly oscillatory oblique incident wave is perfectly steered by the cloaking layer and completely shielded from the cloaked region. Apart from plotting
the electric-field distributions, we  also depict the  time-averaged Poynting vector (cf. \cite{orfanidis2002electromagnetic}):
${\bs S}={\rm Re}\,\{{\bs E}\times {\bs H}^*\}/2,$
which indicates the directional  energy flux density.   In (b), we depict the associated Poynting vector fields. We find that the waves are again  steered smoothly around the polygonal cloaked region without reflecting and scattering.

\begin{figure}[htbp]
 {~}\hspace*{-16pt}\subfigure[Electric field]{ \includegraphics[scale=0.26]{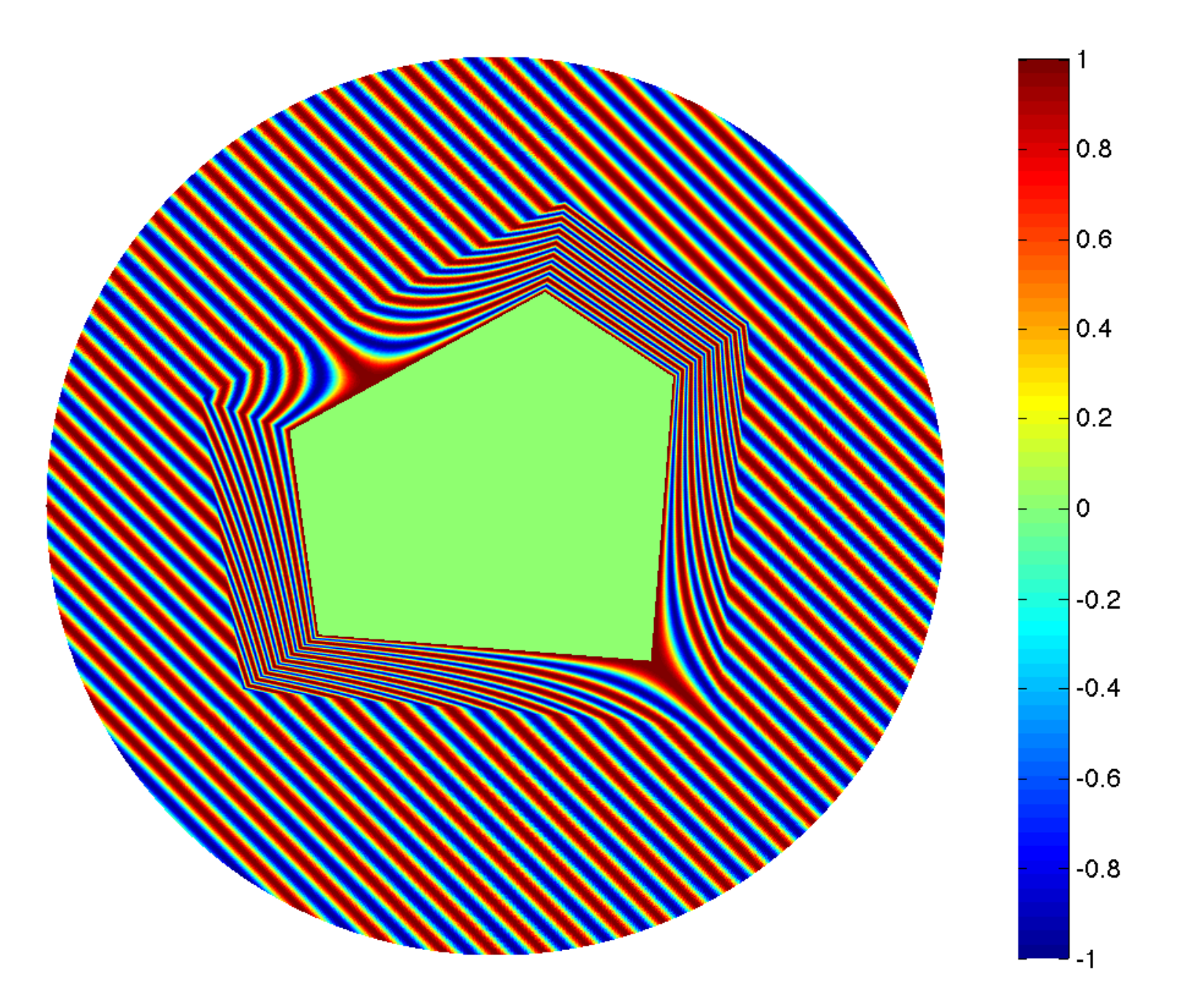}}  \; 
\subfigure[Poynting vector]{\includegraphics[scale=0.27]{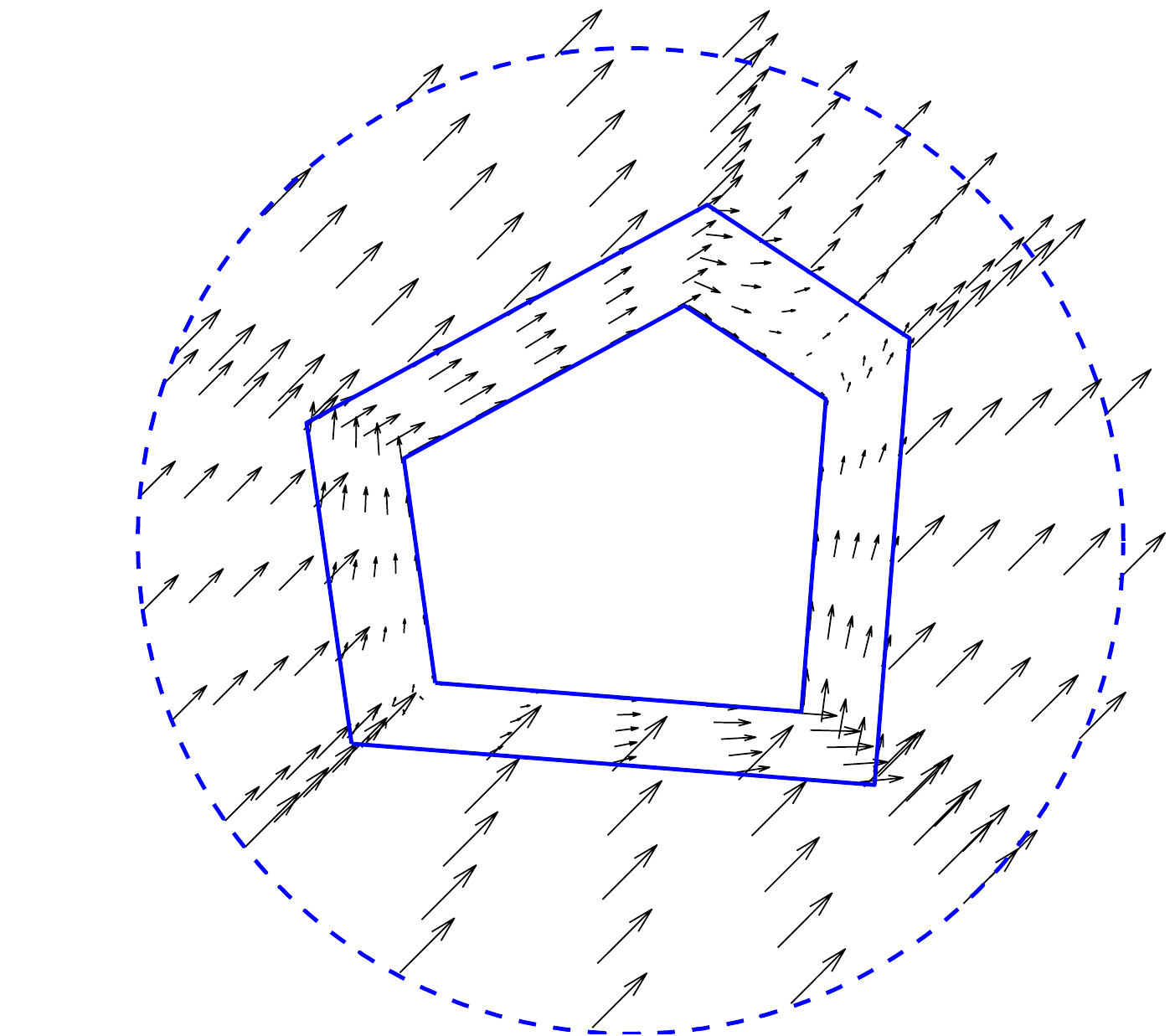}}\quad  
 \subfigure[External source]{ \includegraphics[scale=0.26]{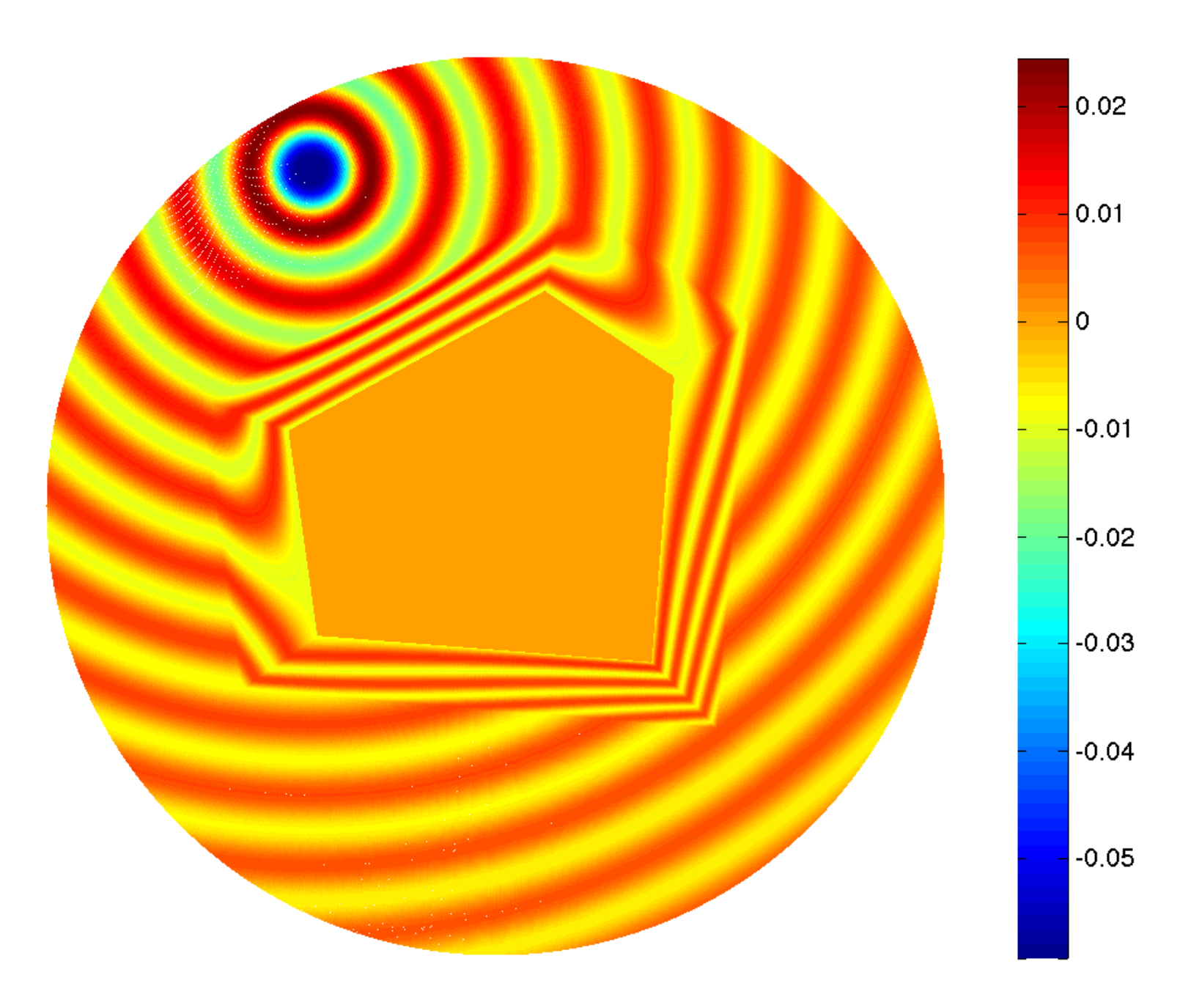}}
   \caption{\small SEM with high frequency wave and external source. (a) Real part of the electric field distribution and (b) the related Poynting vector, where $k=80$, $\theta_0=\pi/4.$ (c) Real part of the electric field  with external source (\ref{source_example1}).   }
\label{SemB}
 \end{figure}

We now add  an external source,  compactly supported in $\Omega_+$ as  the wavemaker, and turn off the incident wave. More precisely, we modify   \eqref{eq1conL} and \eqref{eq2conL} as
\begin{equation}\label{sourceEq}
 \nabla \cdot({\bs C}(\bs r) \nabla u^+(\bs r))+ k^2n(\bs r) u^+(\bs r)=f(\bs r) \;\;  {\rm in}\;\; \Omega_+; \quad\partial_{r} u^+ -{\mathscr T} _{R} [u^+]=0   \quad {\rm at}\;\;  \Gamma_{\!R}.
\end{equation}
In practice, we use the Guassian function in Cartesian coordinates:
\begin{equation}\label{source_example1}
f(\bs r)=\alpha\; {\rm exp}\Big({-\frac{(x-\beta)^2+(y-\kappa)^2}{2 \gamma^2}}\Big),
\end{equation}
where $\alpha,\beta,\kappa,\gamma$ are tuneable  constants. 
To this end, we take $\alpha=100$, $\beta=-0.41$, $\kappa=0.75$ and $\gamma=0.04,$ so the source at $B_{\!R}$ is nearly zero.  The plot of the electric field distributions in Figure \ref{SemB} (c) is computed from  SEM with $k=40$, $M=60$ and $N=40$ in each element.  The anti-plane   waves generated by the source are smoothly bent and the cloak does not produce any scattering.  Observe that the waves seamlessly pass through the outer artificial boundary without any reflecting.

\subsection{Numerical study of effects of defects, lossy media and dispersive media}
\label{newsectBA} 
We next demonstrate that the proposed SEM provides a reliable tool to study the effect of defects and sensitivity to the variation of the media within the cloaking layer.  Interesting  investigation (mostly  from analytic point of view) has been devoted to the circular and spherical cloaks (see, e.g.,   \cite{cummer2006full,zhang07Interaction,okada2012fdtd,argyropoulos2010dispersive,zhang2008rainbow}), but the tools appear non-trivial to be extended to the polygonal cloaks. 


\subsubsection{Defects in the cloaking layer} We consider the influence of defects to the perfect polygonal cloak.
As illustrated in Figure \ref{SemC} (a)-(b),  a rectangular defect with length $a$ and width $b$ is embedded into the cloaking layer.
We set $\bs C=\bs I_2$ and $n=1$ within the defect, so  
the traditional transmission condition \eqref{eq11} can be imposed  at four sides, if the defect is not aligned with the cloaking boundary.  
In Figure \ref{SemC} (a), we depict   the electric field distribution with defect $a=b=0.06$ obtained by the proposed SEM with $k=40$, $\theta_0=0$, $M=60$ and $N=40.$  Observe that even with such a small defect, the electric field distribution is apparently disturbed, especially for the forward-scattering region, and the magnitude increases approximately  up to $1.5$ (note: it is $1$ for the perfect cloak). Also notice that in the back-scattering region, the waves  appear not significantly affected, so the cloaking effects seem still good.  In Figure \ref{SemC} (b),  we enlarge  the defect and set $a=0.24,$ $b=0.06.$
The field in both the back and forward scattering regions is deteriorated more.

\begin{figure}[h!]
{~}\hspace*{-16pt} \subfigure[Defect with $a=b=0.06$]{ \includegraphics[scale=.32]{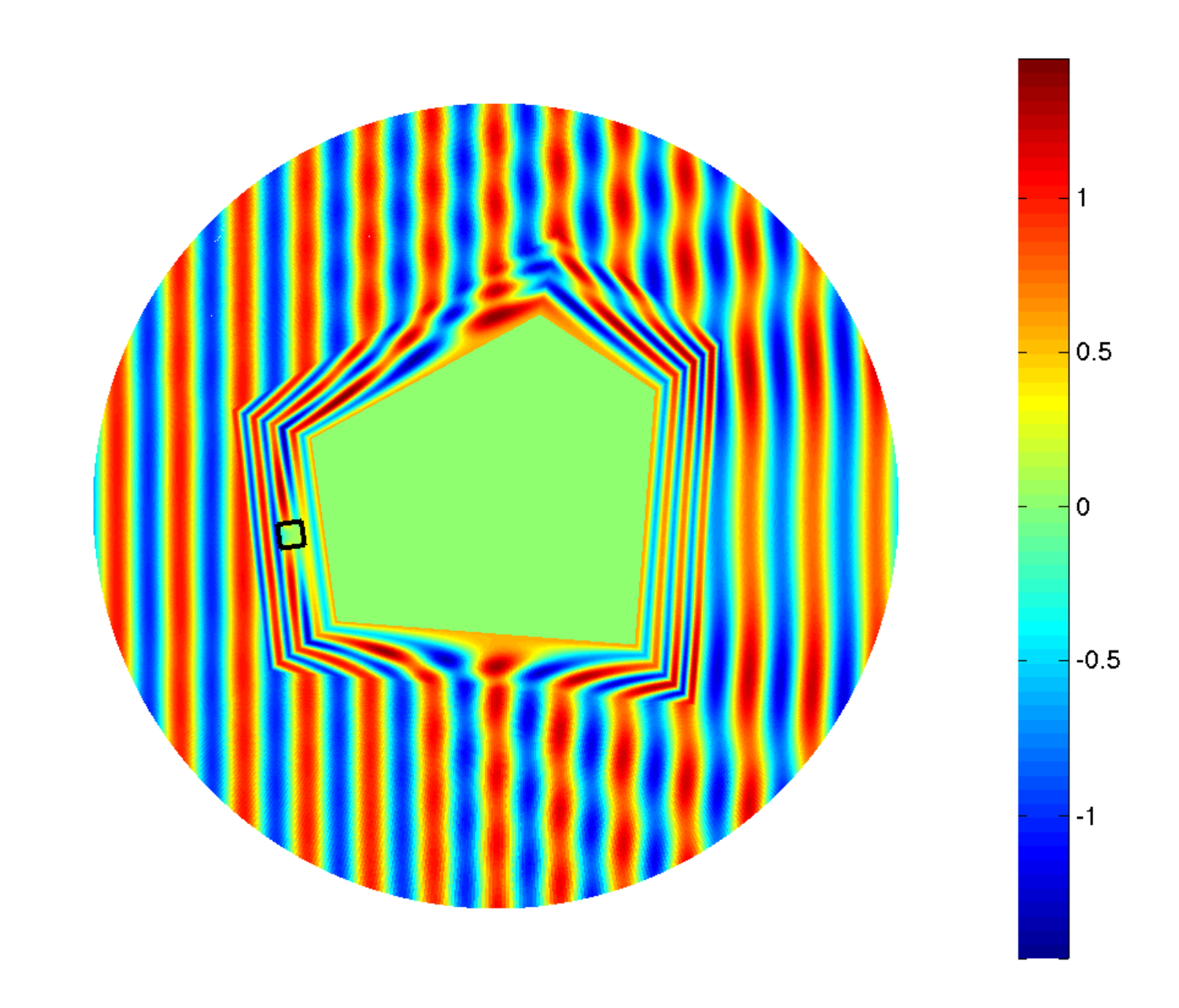}} \hspace*{-10pt}
\subfigure[Defect with $a=0.24,\,b=0.06$]{\includegraphics[scale=.33]{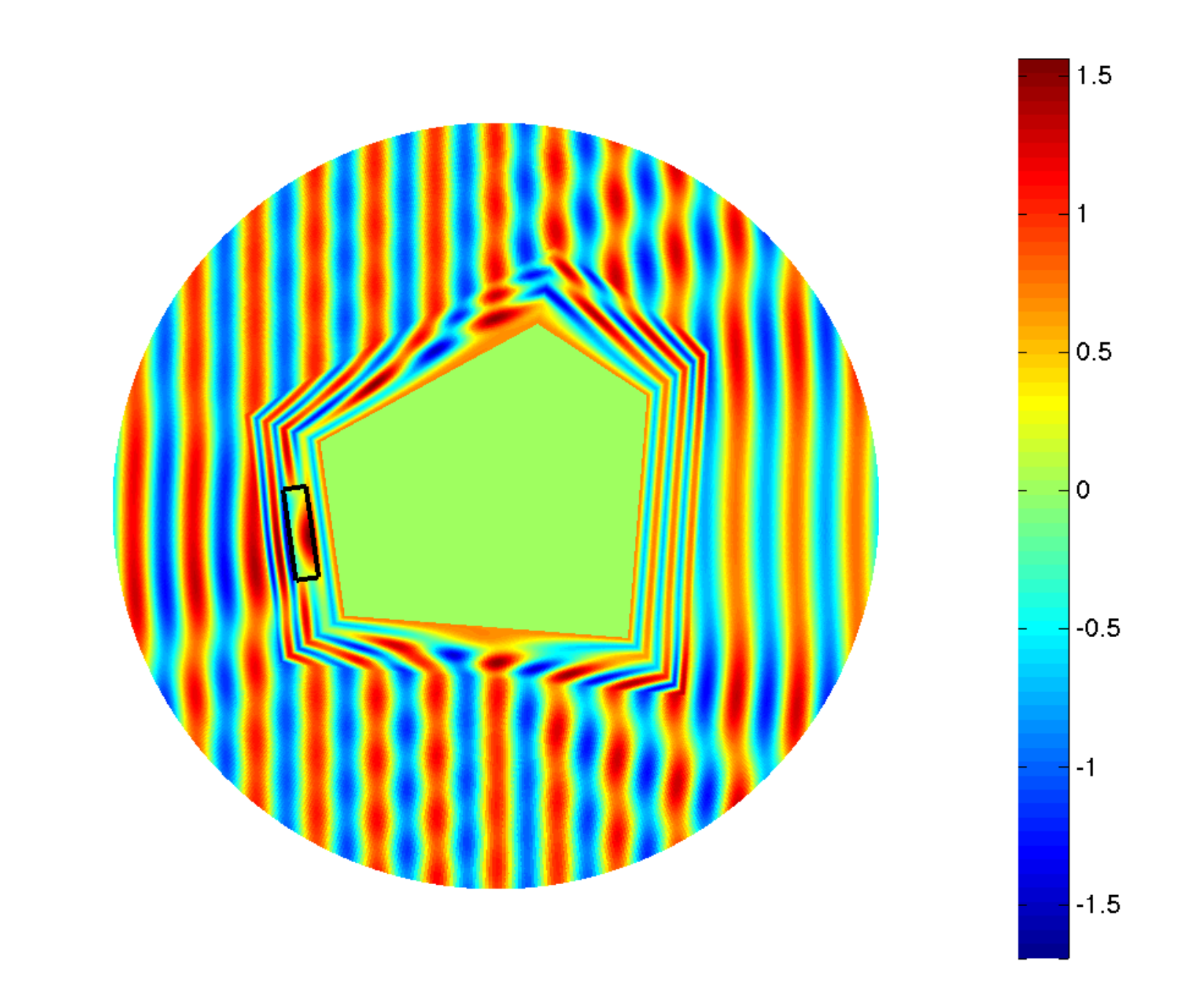}} \hspace*{-10pt} 
 \subfigure[Loss tangent $\tan \delta=0.01$]{ \includegraphics[scale=.30]{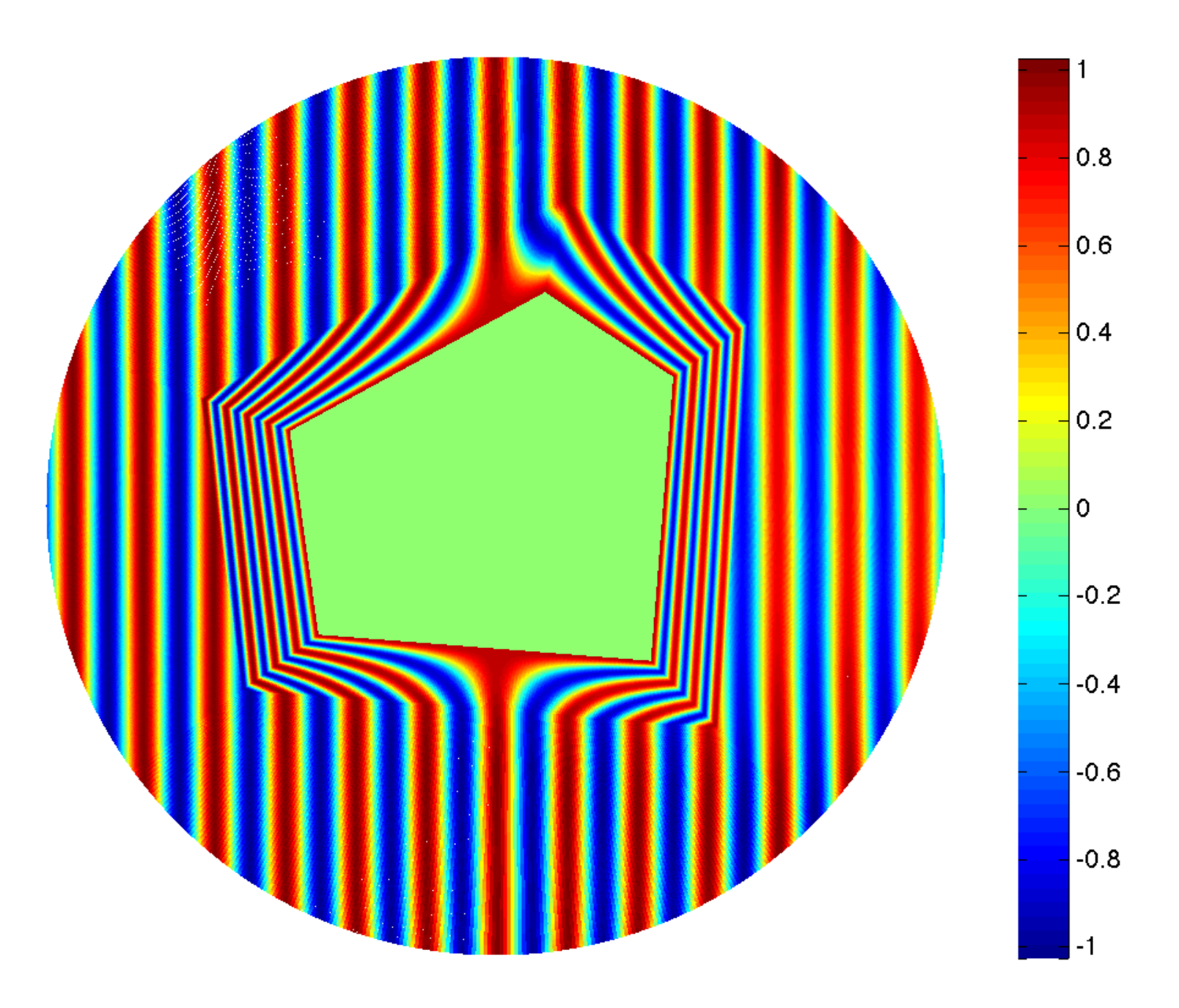}} \hspace*{-10pt}
 \subfigure[Loss tangent $\tan \delta=0.05$]{ \includegraphics[scale=.30]{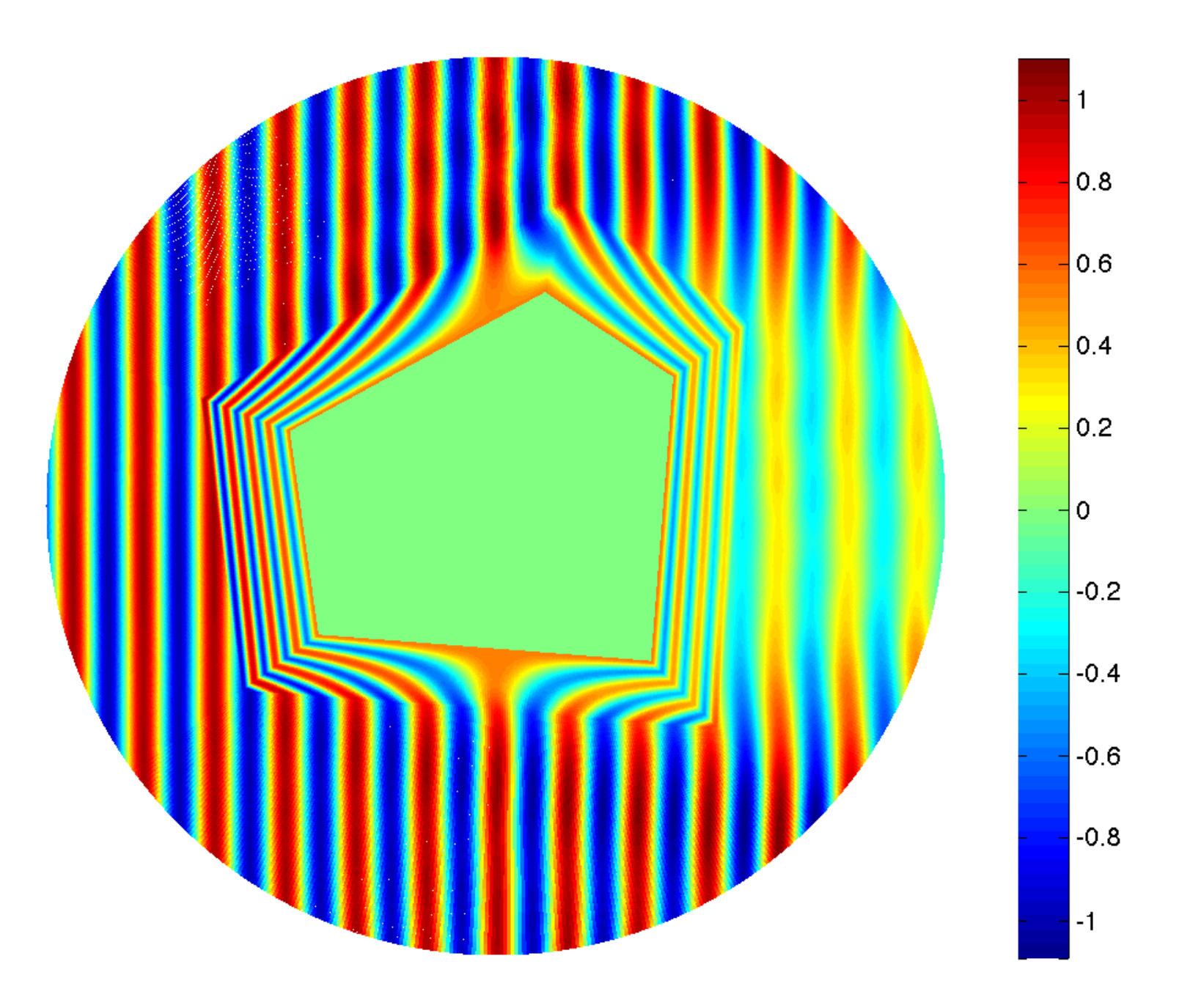}}
 \subfigure[Dispersion with $k=39$]{ \includegraphics[scale=.30]{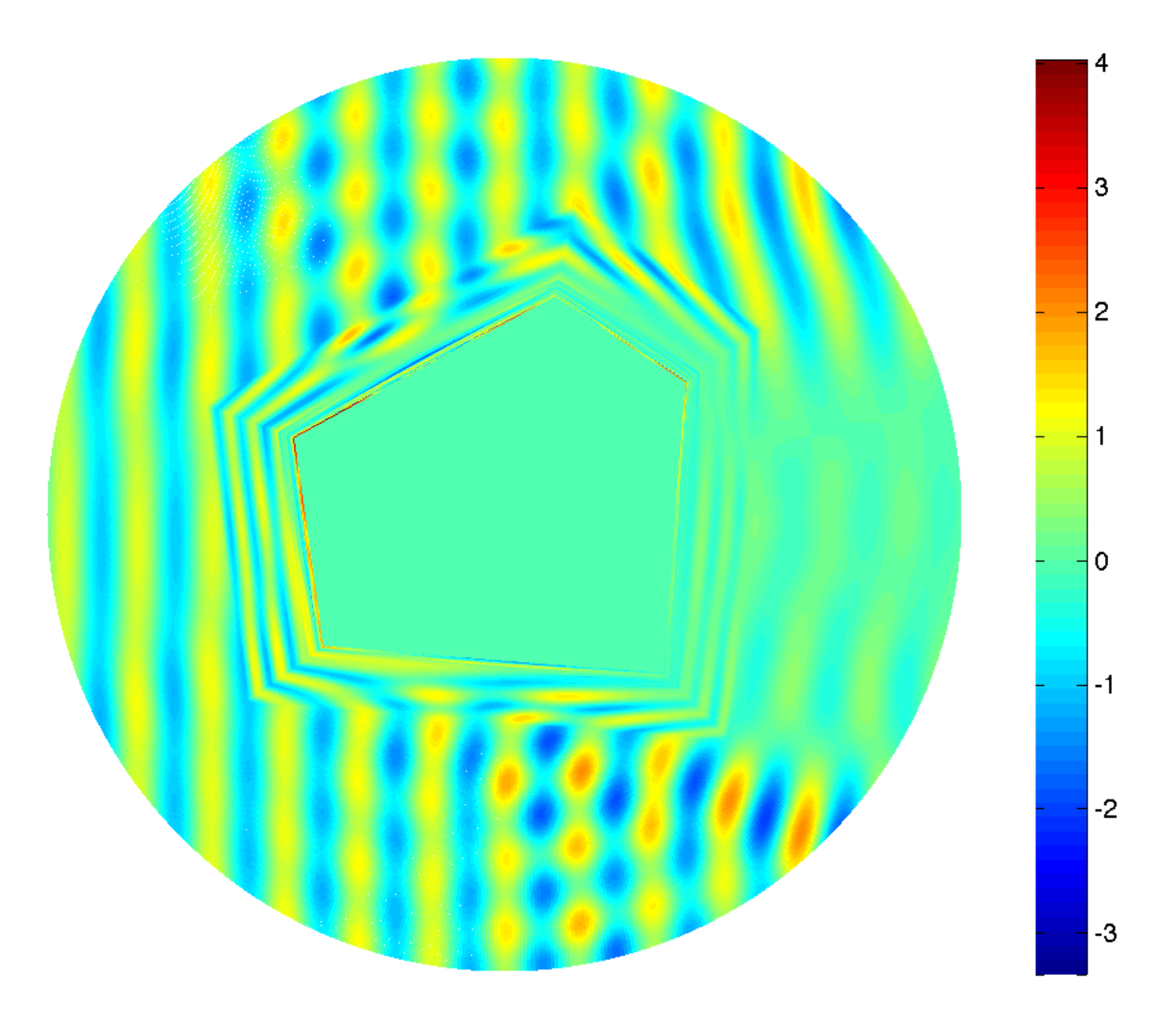}} \hspace*{-10pt}
 \subfigure[Dispersion with $k=41$]{ \includegraphics[scale=.30]{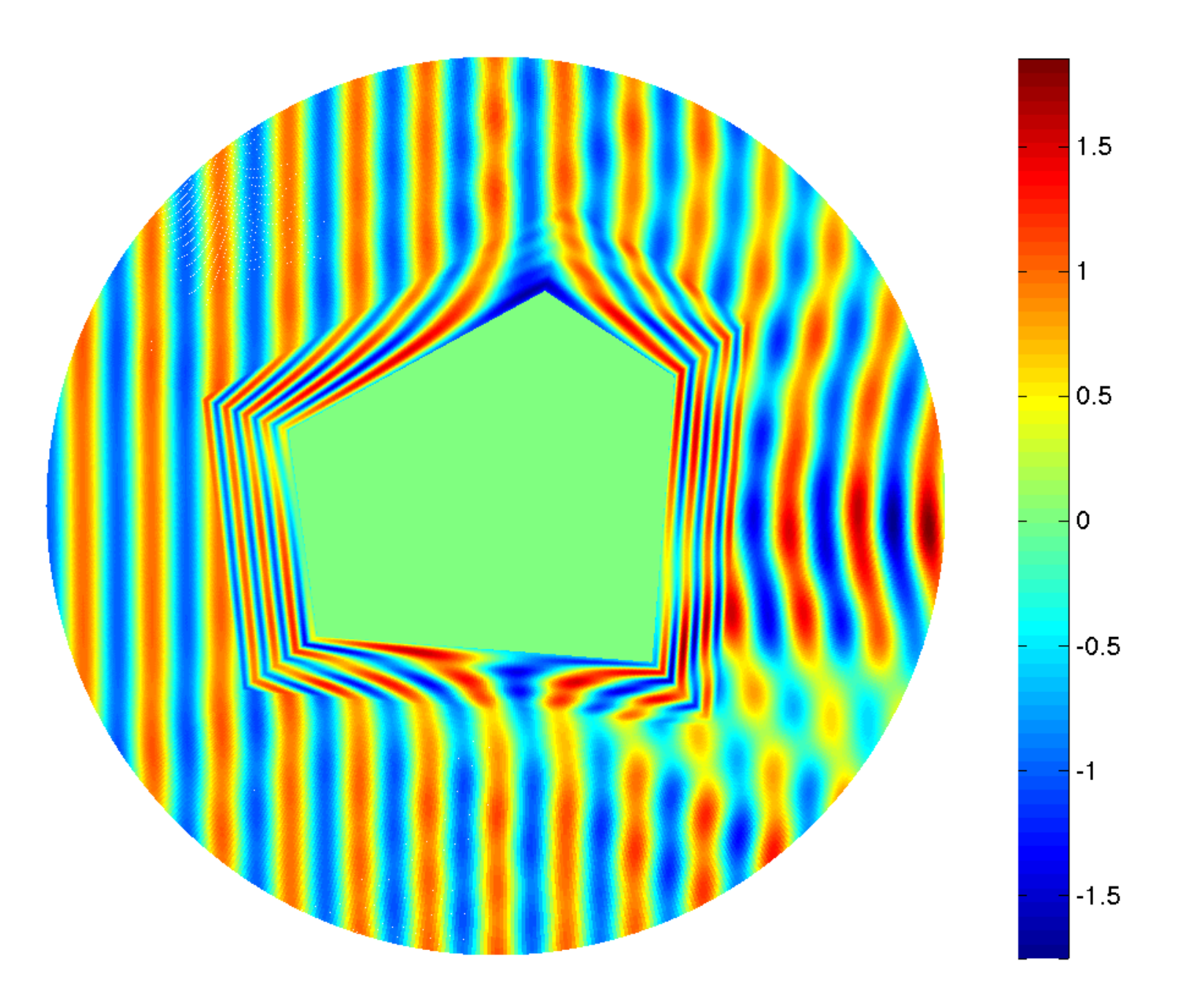}}
  \caption{\small The effects of defects, loss and dispersion on the polygonal invisibility cloak.  
Defects:  real part of the electric field distributions with homogeneous defects with sizes (a) $a=b=0.06$ and (b) $a=0.24$, $b=0.06$. Loss: real part of the electric field distributions with loss tangents (c) $\tan \delta=0.01$ and (d) $\tan \delta=0.05$. Dispersion: fix $k_c=40$, the real part of the electric field distributions with wave numbers (e) $k=39$ and (f) $k=41$.}
\label{SemC}
 \end{figure}

\subsubsection{Lossy media in the cloaking layer} Similar to the setting in \cite{cummer2006full,zhang07Interaction}   for the circular and spherical cloaks,  we replace the media in cloaking layer by a electric-lossy medium  (cf. \cite{orfanidis2002electromagnetic}). 
More precisely, the real electric permittivity $\bs \epsilon$ in \eqref{newMaxwell} is replaced by  a complex electric permittivity  
$(1+{\rm i} \tan \delta)\bs \epsilon,$   
where $\tan \delta$ is a tunable constant termed as loss tangent to quantify the absorptive property of the medium.  
Note that this replacement only brings about the modification of the two-dimensional Helmholtz equation \eqref{HelmMeta} as  
$$\nabla \cdot ({\bs C}(\bs r)\;\nabla u(\bs r))+k ^2(1+\ri \tan \delta) n(\bs r)\, u(\bs r)=0.$$
In Figure \ref{SemC}, we depict the electric field distributions with loss tangent   $\tan \delta=0.01, 0.05$ 
in (c) and (d), respectively, where we take  $k=40$, $\theta_0=0$, $M=60$ and $N=40$ in each element. Observe that
in the first case,  the effect of the loss is almost imperceptible. As we enlarge the loss tangent to $0.05$ in (d), 
 the cloaking effect appears  good in the backscattering region but is apparently deteriorated in the forward-scattering region, which  is inevitable because the lossy medium absorbs the forward-travelling wave power.  We point out that 
similar phenomena were observed for   the circular  and spherical cloaks in \cite{cummer2006full,zhang07Interaction}.

\subsubsection{Drude model and dispersive media in the cloaking layer}  
Based on the form-invariant coordinate transformation,   the polygonal cloak can perfectly conceal arbitrary objects inside the interior polygonal  domain. 
However, as an  ideal cloak, the material parameters are dispersive, and perfect invisibility can only be achieved for a single 
 frequency, known as the ``cloaking frequency"  (cf. \cite{pendry.2006,zhang2008rainbow,chen2007extending}). 
 It is of much physical relevance to study the response of an ideal cloak to a non-monochromatic  
 electromagnetic wave passing  through such a dispersive cloak.  The investigation along this line has been very limited to mostly 
 analytic treatments of circular and spherical cloaks  (cf.  \cite{argyropoulos2010dispersive, zhang2008rainbow}). 
We demonstrate that the proposed SEM offers an accurate means to 
 understand some interesting phenomena of a nonmonochromatic wave interacting with a polygonal cloak.  

Following the procedure in \cite{okada2012fdtd}, 
we start with  diagonalizing  the symmetric matrices $\bs \epsilon$ and  $\bs \mu$  in  \eqref{parameter2}-\eqref{parameter3}, i.e., 
 \begin{equation} \label{diagonalize}
\bs \epsilon=\bs \mu ={\bs P} {\bs \Lambda} {\bs P}^t,\quad \bs \Lambda={\rm diag}(\lambda_1,\lambda_2,\lambda_3),
\end{equation}
where $ {\bs P}=(P_{ij})_{1\leq i,j \leq 3}$ is an orthonormal matrix (with $P_{j3}=P_{3j}=0$ for $j=1,2,$ and $P_{33}=1$), and
 the eigenvalues are 
\begin{equation}\label{eigenM}
\lambda_1=\frac{C_{11}+C_{22}+\sqrt{(C_{11}+C_{22})^2-4}}{2},\quad \lambda_2=\frac{C_{11}+C_{22}-\sqrt{(C_{11}+C_{22})^2-4}}{2},\quad \lambda_3=n.
\end{equation}
From \eqref{ck11}-\eqref{ck22},  we have 
\begin{equation}\label{Cineq}
C_{11}+C_{22}=\frac{r-R_1}{r}+\frac{1}{r(r-R_1)}\Big( r^2+\Big(\frac{dR_1}{d\theta}\Big)^2    \Big)\geq \frac{r-R_1}{r}+\frac{r}{r-R_1}\geq 2,
\end{equation}
which implies  $\lambda_1>1.$ 
However,  $\lambda_2$ and $\lambda_3$ are less than $1$ for some    $r\in (R_1, R_2)$.  
Based on the principle in \cite{okada2012fdtd,zhang2008rainbow},  we modify  $\lambda_2$ and $\lambda_3$ by 
using the  Drude model (cf. \cite{orfanidis2002electromagnetic}).  More precisely,  let $\omega_c>0$ be the ``cloaking frequency", 
and  define 
\begin{equation}\label{drudemodel}
\tilde \lambda_i:=\tilde \lambda_i(\bs r, \omega)=1-\frac{\omega_{p,i}^2}{\omega(\omega+\ri \gamma_i)},  \;\;\; {\rm with}\;\;\; 
\omega_{p,i}^2:=\omega_c(\omega_c+{\rm i}  \gamma_i)(1-\lambda_i),\quad i=2,3,
\end{equation}
where $\{\gamma_i\}_{i=2}^3$  are given collision frequencies,  and $\{\omega_{p,i}\}_{i=2}^3$ are known as  the plasma frequencies.  For notational convenience, we define 
\begin{equation}\label{collision}
\beta_i:=\frac{\omega_c(\omega_c+{\rm i}  \gamma_i)}{\omega(\omega+{\rm i}  \gamma_i)}, \;\;\;  {\rm so}\;\;\; 
\tilde \lambda_i =1+\beta_i (\lambda_i-1),\quad i=2,3.
\end{equation}

Denoting $ \widetilde{\bs \Lambda}={\rm diag}(\tilde \lambda_1,\tilde \lambda_2,\tilde \lambda_3)$ with  $\tilde \lambda_1=\lambda_1,$
we then replace the  material parameters $\bs \epsilon$ and $\bs \mu$ in \eqref{newMaxwell}, respectively, by  
\begin{equation}\label{tildeepsimu}
\tilde{ \bs \epsilon}=\tilde {\bs \mu}={\bs P} \widetilde{\bs \Lambda} {\bs P}^t=
\begin{bmatrix}
\widetilde{\bs C} & \bs 0^t\\[2pt]
               \bs 0      & \tilde n
\end{bmatrix},
\end{equation}
where by a direct calculation,  we have 
\begin{equation}\label{tildeC}
\widetilde{ \bs C}=
\begin{bmatrix}
\widetilde C_{11} & \widetilde C_{12}\\[2pt]
\widetilde C_{12} & \widetilde C_{22}
\end{bmatrix}
=
\bs C+(1-\beta_2)(1-\lambda_2)
\begin{bmatrix}
P_{12}^2 & P_{12}P_{22}\\[2pt]
P_{12}P_{22}& P_{22}^2 
\end{bmatrix},
\end{equation}
and
\begin{equation}\label{dispersiven}
\tilde n=1+\beta_3 (\lambda_3-1)=\tilde \lambda_3.
\end{equation}
One verifies readily from \eqref{tildeepsimu} that 
\begin{equation}\label{dettildeC0}
{\rm det}(\tilde{ \bs \epsilon})={\rm det}(\tilde {\bs \mu})=\tilde\lambda_1 \tilde\lambda_2\tilde\lambda_3= \lambda_1 \tilde\lambda_2\,\tilde n=
{\rm det}(\widetilde {\bs C})\,\tilde n.
\end{equation}
Thus, using the fact $\lambda_1 \lambda_2=1$ (cf. \eqref{eigenM}), we obtain from  \eqref{collision}  and \eqref{dettildeC0} that 
\begin{equation}\label{dettildeC}
{\rm det}(\widetilde {\bs C})= \lambda_1\tilde \lambda_2=\lambda_1(\beta_2\lambda_2+1-\beta_2)=\beta_2+(1-\beta_2)\lambda_1.
\end{equation}
Accordingly, we find that the counterpart of \eqref{parameter2s} becomes 
\begin{equation}\label{muepsinv}
\tilde {\bs \mu}^{-1}=\tilde {\bs \epsilon}^{-1}
=\begin{bmatrix}
\widehat C_{22} & -\widehat C_{12}& 0\\[1pt]
-\widehat C_{12} & \widehat C_{11}&0\\[1pt]
0& 0 & \tilde n^{-1}
\end{bmatrix},\;\;\; {\rm where}\;\;\;  \widehat C_{ij}= \frac {\widetilde C_{ij}} {\beta_2+(1-\beta_2)\lambda_1}, 
\end{equation}
for $i,j=1,2.$ Using \eqref{magnetic} with $\widehat C_{ij}$ and $\tilde n$ in place of $C_{ij}$ and $n$,  we obtain the new model defined in the cloaking layer: 
\begin{equation} \label{HelmMetahat}
\nabla \cdot \big(\widehat {\bs C}(\bs r, \omega)  \;\nabla u(\bs r)\big)+k ^2\,\tilde  n(\bs r, \omega)\, u(\bs r)=0,
\end{equation}
where $ \widehat {\bs C}=(\widehat C_{ij})_{1\le i,j\le 2},$ and  $k=\omega \sqrt{\epsilon_0 \mu_0}$ as before. 
\begin{rem}\label{conditionA} Observe from \eqref{collision} that if $\omega=\omega_c,$ then $\beta_i=1$ and $\lambda_i=\tilde \lambda_i$ for $i=1,2.$ Thus, in this case,  \eqref{HelmMetahat} reduces to \eqref{HelmMeta} and  $\widehat {\bs C}(\bs r, \omega_c)$ is singular at the cloaking boundary $r=R_1$. 
However,  if $\omega\not =\omega_c$ (so $\beta_2\not =1$), then   $\widehat {\bs C}(\bs r, \omega)$ becomes regular at $r=R_1.$ Indeed, 
by  \eqref{muepsinv}, 
\begin{equation*}
\widehat C_{ij}= \frac {(r-R_1)\widetilde C_{ij}} {\beta_2(r-R_1)+(1-\beta_2)(r-R_1)\lambda_1}.
\end{equation*}
In fact, one can verify that if $\beta_2\not =1,$ 
$$\lim_{r\to R_1} (r-R_1)\big\{\bs C, \lambda_1,\lambda_2\big\} \;\; \text{all exist}.$$  
Thus, we can claim from \eqref{tildeC} and the above  that $\widehat {\bs C}(\bs r, \omega)$ is well-defined at $r=R_1.$ In view of this,  the CBCs can not be applied. Here, we follow \cite{argyropoulos2010dispersive,zhang2008rainbow}   and  impose a PMC shell instead.   \qed 
\end{rem}

In the computation, we  take  $\omega_c=k_c/\sqrt{\epsilon_0 \mu_0}$ with  $k_c=40,$ and $\gamma_i/\sqrt{\epsilon_0 \mu_0}=0.0001$ for  $i=2,3$. 
In Figure \ref{SemC} (e)-(f), we plot the electric field distributions with $k=39$ and $k=41$ illuminated by plane wave in \eqref{incident1} with incident angle $\theta_0=0$ and the cut-off number $M=60$ and $N=45$ in each element.
In contrast with Figure \ref{SemA} (c) (where perfect cloaking effect can be obtained for $k_c=40$), 
 we observe from  Figure \ref{SemC} that  the electric field distributions are affected  and distorted in both cases (i.e., $k=k_c\pm 1$),  
 in particular, more severely when  $k<k_c.$ 
 Indeed, similar to the phenomena observed  in  \cite{zhang2008rainbow,argyropoulos2010dispersive} for circular and spherical cloaks,   the incident wave with frequency slightly deviated below  $k_c$, the field after the wave passes the cloak is dissipated and a large shadow appears in the forward scattering region. While for the incident wave with frequency slightly deviated above $k_c$, the field in  most part of the cloaking layer does not change much,  except for  the part close to the cloaking boundary $r=R_1$,  and the field behind the cloak is reinforced. This can be regarded as  the frequency shift effect as in    \cite{zhang2008rainbow}. 
\section{Accurate simulation of    electromagnetic concentrators  and rotators}\label{sect:App}

%
%

In this section, we further apply the efficient spectral-element solver to accurately simulate the electromagnetic concentrators and rotators. 

\subsection{Polygonal concentrators}\label{sect:concentrator}
The electromagnetic concentrator aims at intensifying electromagnetic waves in a certain region, which play an important role in the harnessing of light in solar cells or similar devices, where high field intensities are needed.  

Here, we are interested in the polygonal concentrator with a  configuration similar to  the polygonal cloak  
illustrated in Figure \ref{concenfig2} (b),  where  EM waves
are expected to be 
concentrated  in the interior convex polygonal region $\Omega_-^{p}$. It is accomplished by a coordinate transformation that maps the ``polygonal annulus"    in Figure \ref{concenfig2} (a) to the ``polygonal annulus"    in Figure \ref{concenfig2} (b), where the interior portion of the latter has  larger area.  More precisely, the polygonal concentrator is mapped from the same structure in Figure \ref{concenfig2} (b) but with a different ratio (see Figure \ref{concenfig2} (a)): 
 \begin{equation}\label{radiconst2}
 \breve \rho=\frac{OA_{o}}{OA}=\frac{OB_{o}}{OB}=\cdots, \quad 0<\rho <\breve \rho<1,
\end{equation}
where $\rho$ is defined in \eqref{radiconst}. Then the ratio $\rho/ \breve \rho$ is known as the rate of concentration. For notational convenience, we define
\begin{equation}\label{taudefine}
\varrho:=1-\dfrac{1-\rho}{1-\breve \rho}.
\end{equation}

\begin{figure}[htbp]
 \subfigure[$(\breve x,\breve y)$-domain]{ \includegraphics[scale=.27]{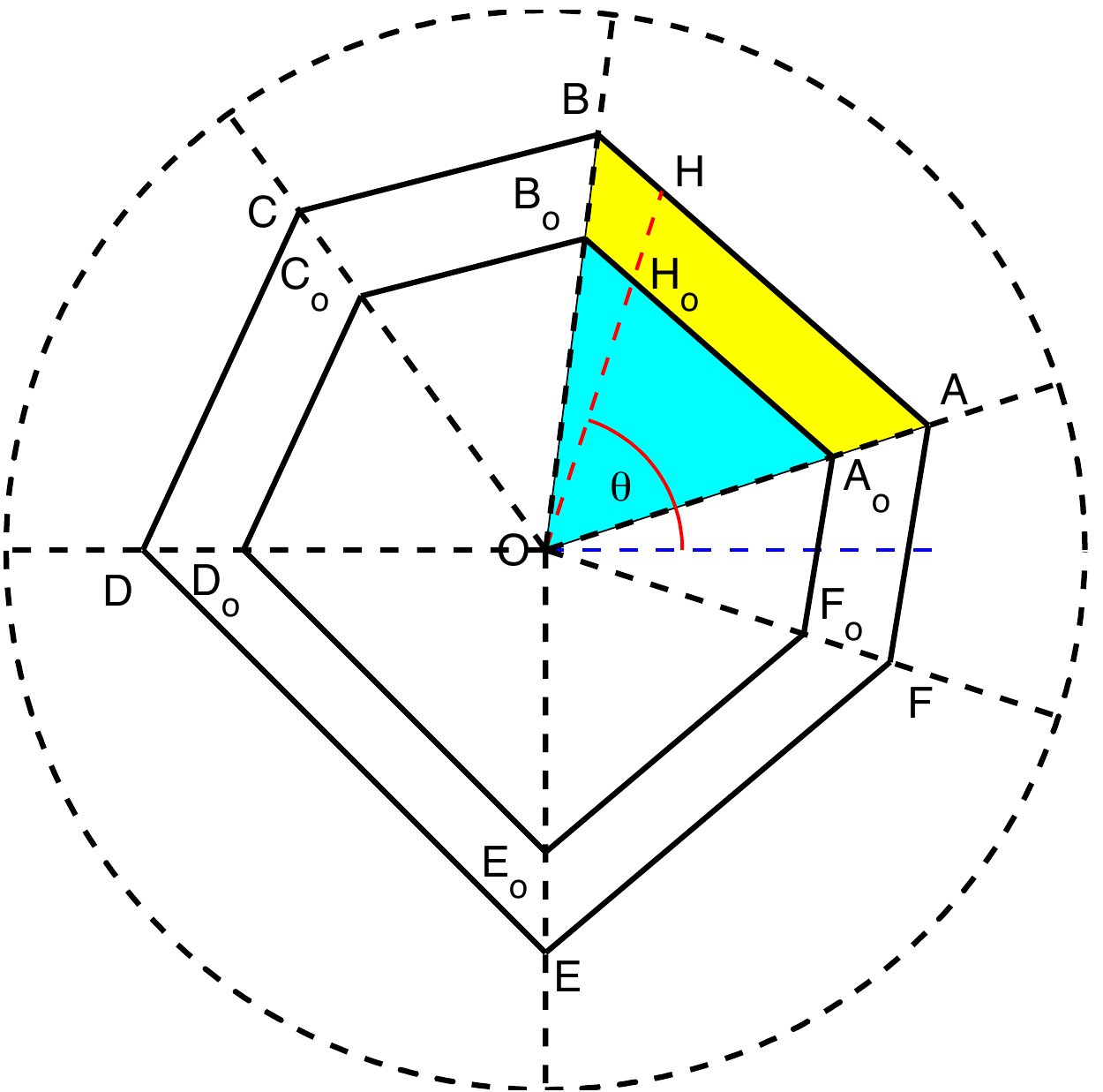}} 
 \subfigure[$(x, y)$-domain]{ \includegraphics[scale=.27]{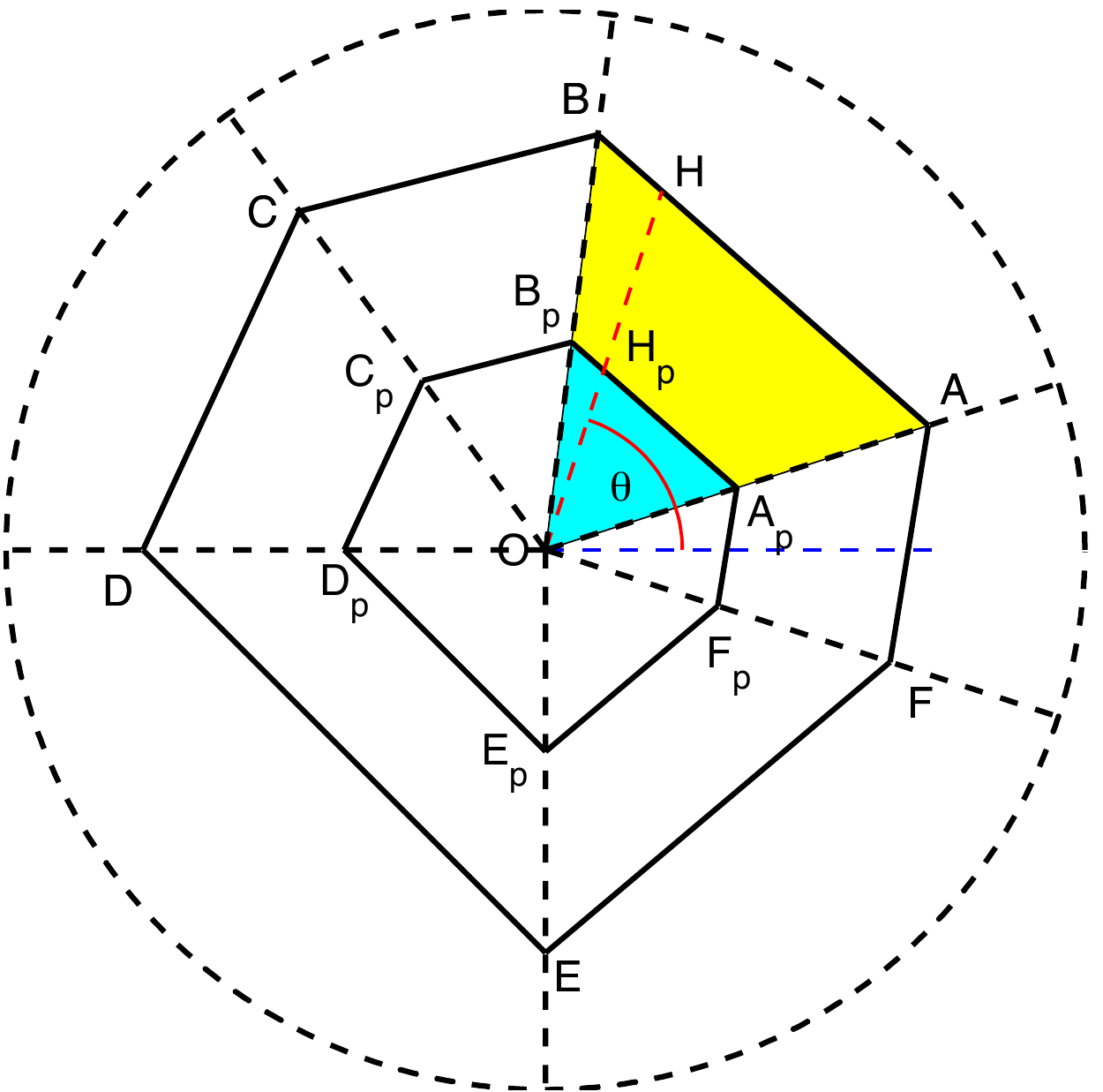}}
  \subfigure[Circular rotator]{ \includegraphics[scale=.26]{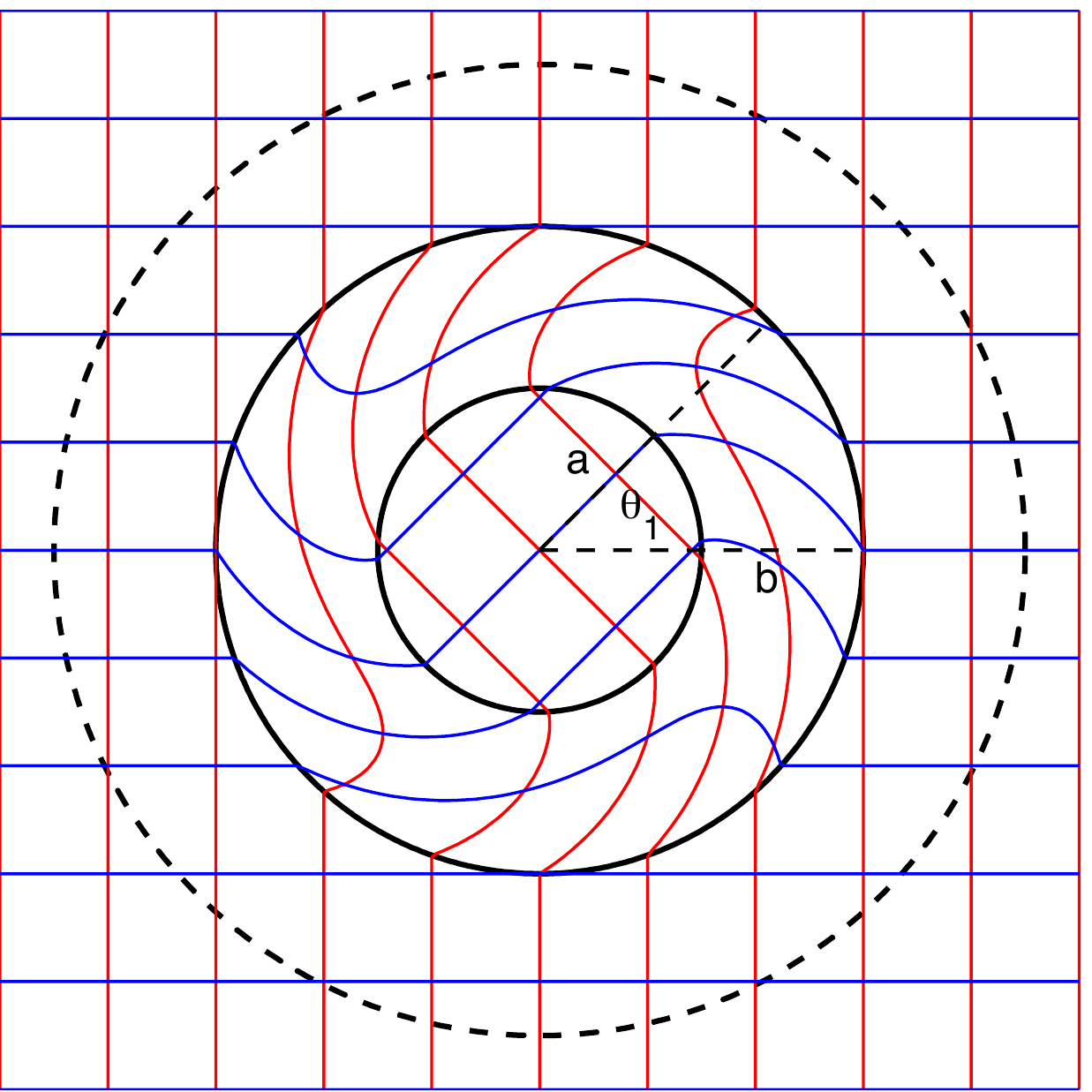}} 
 \subfigure[Partition of $B_R$]{ \includegraphics[scale=.26]{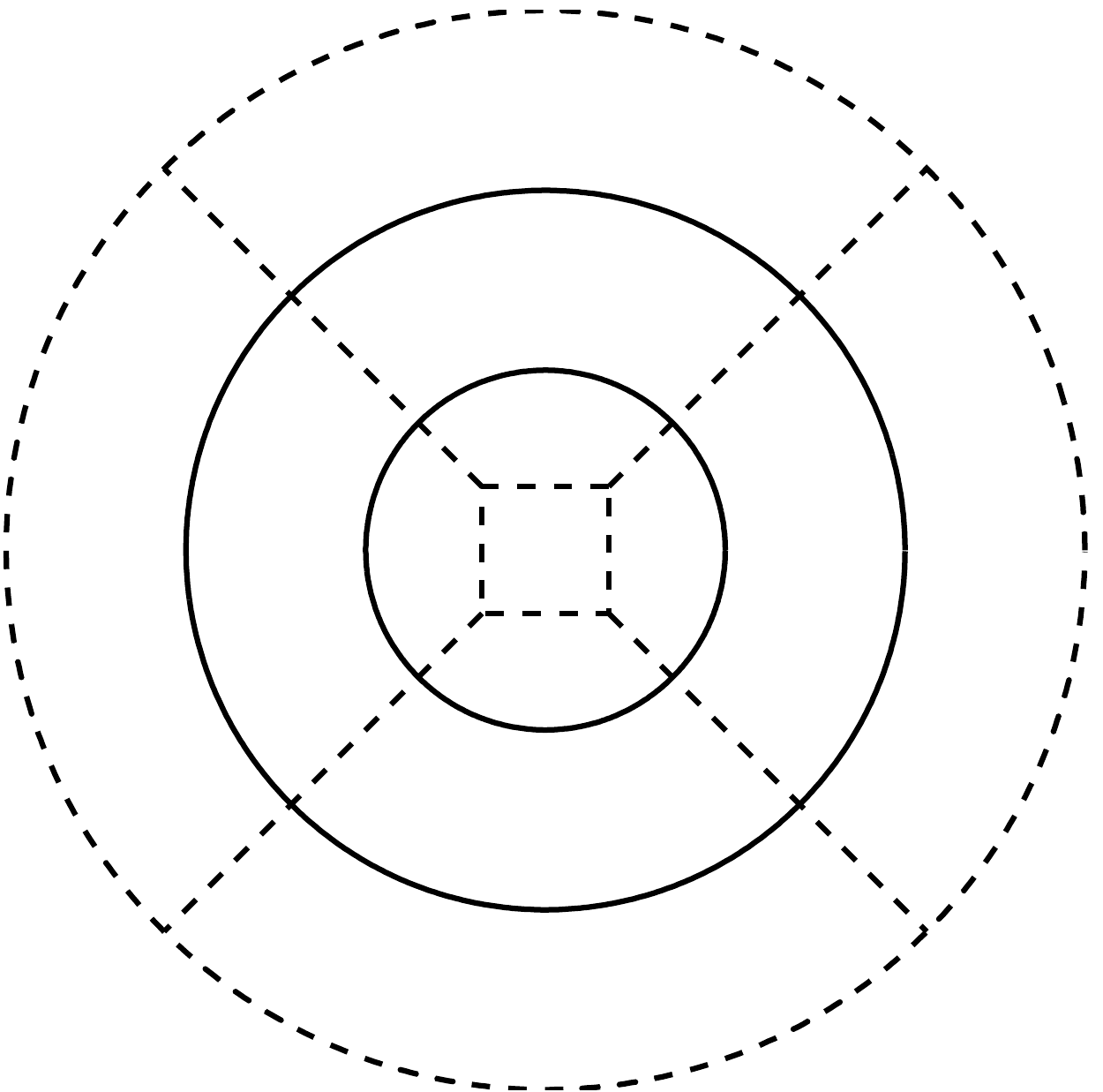}}
  \caption{\small Schematic geometry of a   polygonal  concentrator and a circular rotator. (a)  The polygonal domain in the original coordinates $(\breve x, \breve y).$ (b) Through the coordinate transformation (\ref{concen-tran}), the polygonal domain $\Omega_-^{o}=A_oB_o\cdots F_o$ is compressed into the polygonal domain $\Omega_-^{p}$ that forms the concentration region. Consequently, the original polygonal annulus domain $\Omega_- \setminus \bar \Omega_-^{o}$ in (a) is expanded into the polygonal annulus 
  $\Omega_-^a$. (c) Through the coordinate transformation \eqref{rotat-tran}, points in the circular annulus $a<r<b$ are rotated with a fixed angle $\theta_1$. (d) The computational mesh for the circular rotator. }
\label{concenfig2}
\end{figure}

The corresponding coordinate transformation takes the form (see, e.g., \cite{jiang2008design}):  

\vskip 4pt 
\begin{itemize}
\item[(i)] In  $\Omega_-$,  
\begin{equation}\label{concen-tran}
\begin{cases}
r=\dfrac {\rho} {\breve \rho} \,\breve r,\quad   & \breve r\in [0,\breve R_1],\;\; r\in [0, R_1],\\[8pt]
r=(1-\varrho)\,\breve r+ \varrho \, R_2,
\quad & \breve r\in [\breve R_1, R_2],\;\; r\in [R_1,R_2],
\end{cases} 
\end{equation}
and $\theta=\breve \theta\in [0,2\pi);$ 
\vskip 7pt 
\item[(ii)] In  $\Omega_+=B_R\setminus \bar \Omega_-$, the transformation is identity: $r=\breve r, \; \theta=\breve\theta.$
\end{itemize}
\vskip 5pt 
Here, $R_i,\; i=1,2,$ are the same as in \eqref{hR1} and $\breve R_1=\breve \rho/ \rho  R_1.$  Using \eqref{parameter3}, we can derive the coefficients $\bs C$ and $n$ as follows (see Appendix \ref{AppendixA}):
\vskip 5pt
\begin{itemize}
\item[(i)$_a$] In  $\Omega_-^{\rm p}$, 
\begin{equation}\label{cn1}
 {\bs C}={\bs I}_2,\quad  n={\breve \rho^2}/{\rho^2}\,.
\end{equation}
\vskip 4pt 
\item[(i)$_b$] In  $\Omega_-^{\rm a}$, 
 \begin{align}
C_{11}&=\frac{r-\varrho R_2}{r}\,\frac{x^2}{r^2}+\frac{\varrho}{r-\varrho R_2} \bigg(\frac{\varrho }{r}\frac{{\rm d} R_2}{{\rm d}\theta}\frac{x^2}{r^2}- \frac {2xy}{r^2} \bigg)  \frac{{\rm d} R_2}{{\rm d}\theta}+
\frac{r}{r-\varrho R_2}\,\frac{y^2}{r^2},\label{cn2a}\\
C_{22}&=\frac{r-\varrho R_2}{r}\,\frac{y^2}{r^2}+\frac{\varrho}{r-\varrho R_2} \bigg(\frac{\varrho}{r}\frac{{\rm d} R_2}{{\rm d}\theta}\frac{y^2}{r^2}+ \frac {2xy}{r^2} \bigg)  \frac{{\rm d} R_2}{{\rm d}\theta}+
\frac{r}{r-\varrho R_2}\,\frac{x^2}{r^2},\label{cn2b} \\
C_{12}&=\bigg( \frac{r-\varrho R_2}{r}- \frac{r}{r-\varrho R_2}  \bigg)\,\frac{xy}{r^2} +  \frac{\varrho}{r-\varrho R_2} \bigg( \frac{x^2}{r^2}+\frac{\varrho}{r}\frac{{\rm d} R_2}{{\rm d}\theta}\frac{xy}{r^2}-\frac{y^2}{r^2}  \bigg) \frac{{\rm d} R_2}{{\rm d}\theta}, \label{cn2c}
\end{align}
and 
\begin{equation}\label{ncase2}
n=\frac{r-\varrho R_2}{r (1-\varrho)^2}\,.  
\end{equation}
\vskip 5pt 
\item[(ii)] In  $\Omega_+$, we have  $\bs C=\bs I_2$ and $n=1.$ 
\end{itemize}
\vskip 5pt

In summary, the governing equation for the polygonal concentrator reads 
\begin{align}
& \nabla \cdot({\bs C}(\bs r)\, \nabla u(\bs r))+ k^2n(\bs r) u(\bs r)=0 \quad  {\rm in}\;\; B_R, \label{eq1con}   \\
& \llbracket u \rrbracket=\llbracket \bs C\, \nabla u \rrbracket=0\quad {\rm at}\;\; \Gamma^p \cup \Gamma^a, \label{eq2con0}\\
& \partial_{r} u-{\mathscr T} _{R} [u]=h   \quad {\rm at}\;\; \partial B_R.  \label{eq2con}
\end{align}
Note  that in the interior polygon,  \eqref{eq1con}  becomes  the Helmholtz equation: 
\begin{equation}\label{omegapa}
 \Delta u+\frac {\breve \rho^2}{\rho^2} k^2 u=0 \quad {\rm in}\;\;   \Omega_-^{p},
\end{equation}
where the ratio ${\breve \rho^2}/{\rho^2}>1$ is a constant. 
Therefore, the coordinate transformation enlarges  the wavenumber $k$ that produces the effect of concentration.   

We can  implement the spectral-element solver   based on the partition of the computational domain 
as with  the previous application. However, different from the previous case, the interior region is part of the computational domain, where  a normal transmission condition is imposed along its boundary  (see \eqref{eq2con0}).    Below, we provide some numerical results with the setting:  a square concentrator centred at the origin with length of each side $1.2$  and the parameters:  $\rho=1/3,\; \breve \rho=2/3$ in \eqref{radiconst} and \eqref{radiconst2} and $R=1.0.$ 
We set  the cut-off number $M=60$ in the DtN operator.  In Figure \ref{SemD}, we depict the electric field distributions and the associated time averaged Poynting vectors illuminated with different incident angles  ((a)-(b): $\theta_0=0$ and (c)-(d): $\theta_0=\pi/4$) with $k=40$ and the grid $N=50$ in each element.  It can be seen that the electric field and energy flux are smoothly concentrated into the inner concentration region $\Omega_-^p$, and the field outside is not affected regardless of the incident angle  on the concentrator.


\begin{figure}[htbp]
 {~}\hspace*{-16pt}\subfigure[Electric field with $\theta_0=0$]{ \includegraphics[scale=0.24]{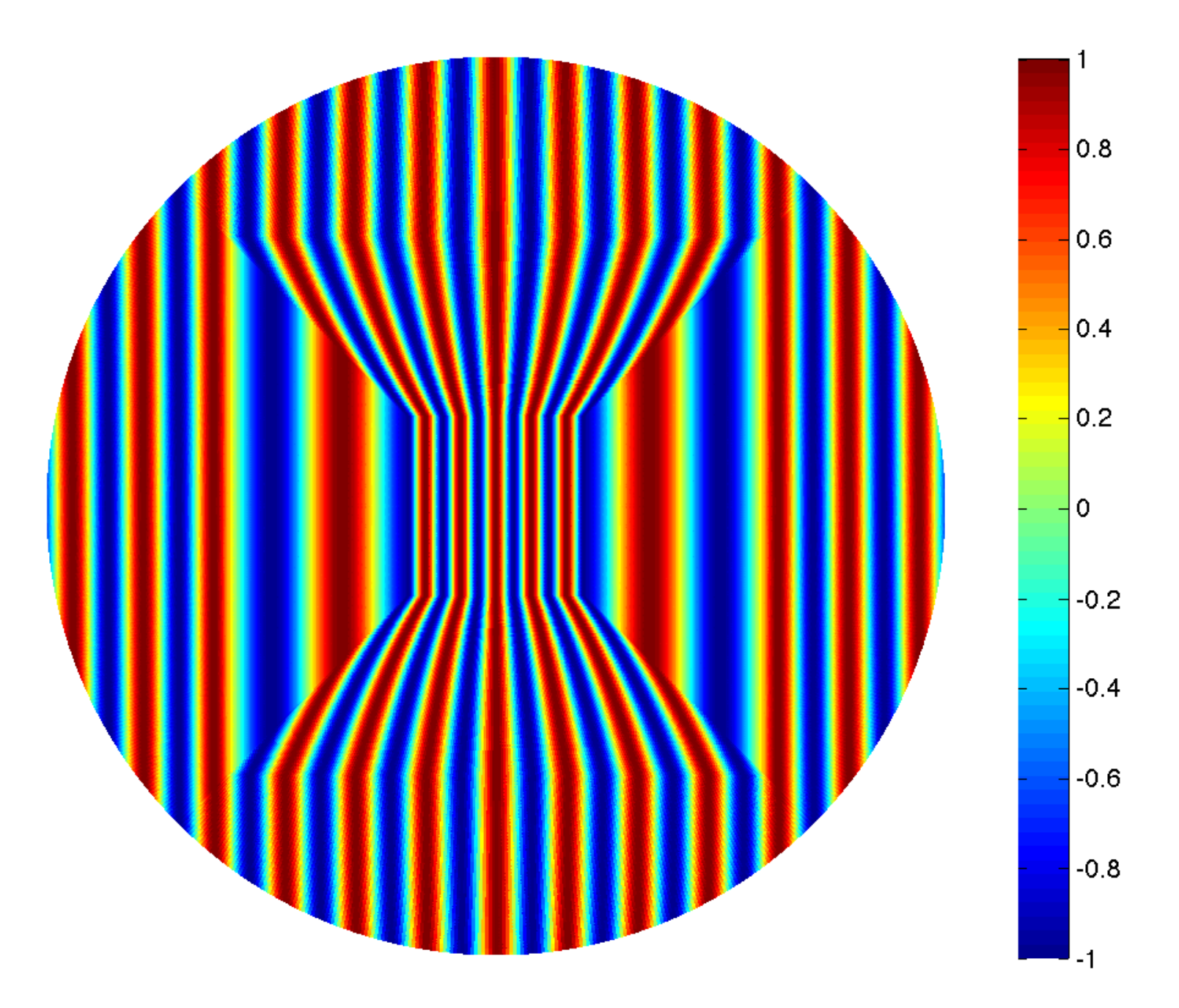}} \hspace*{-4pt}
\subfigure[Poynting vector]{\includegraphics[scale=0.25]{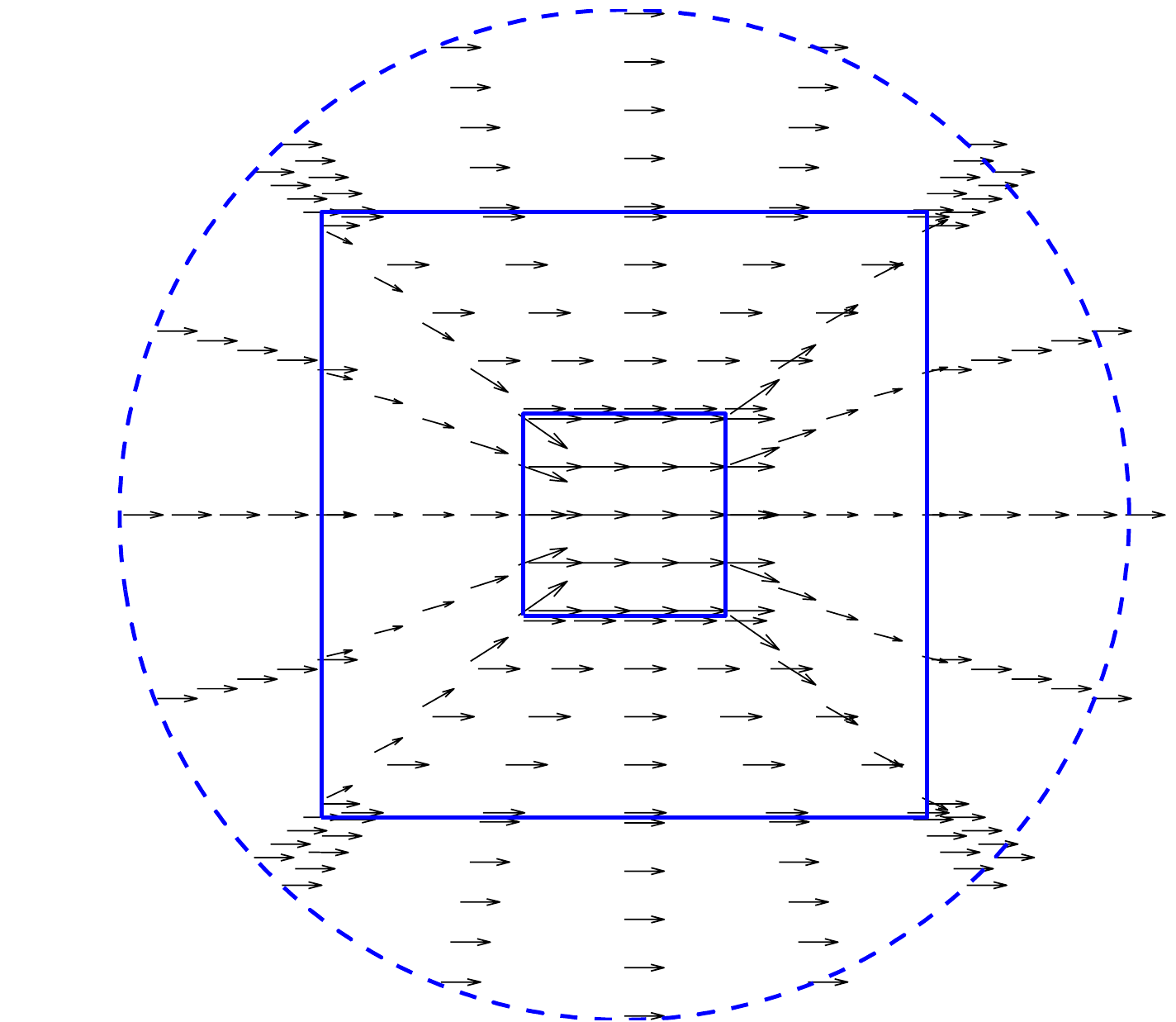}} \hspace*{-4pt} 
 \subfigure[Electric field with $\theta_0=\pi/4$]{ \includegraphics[scale=0.24]{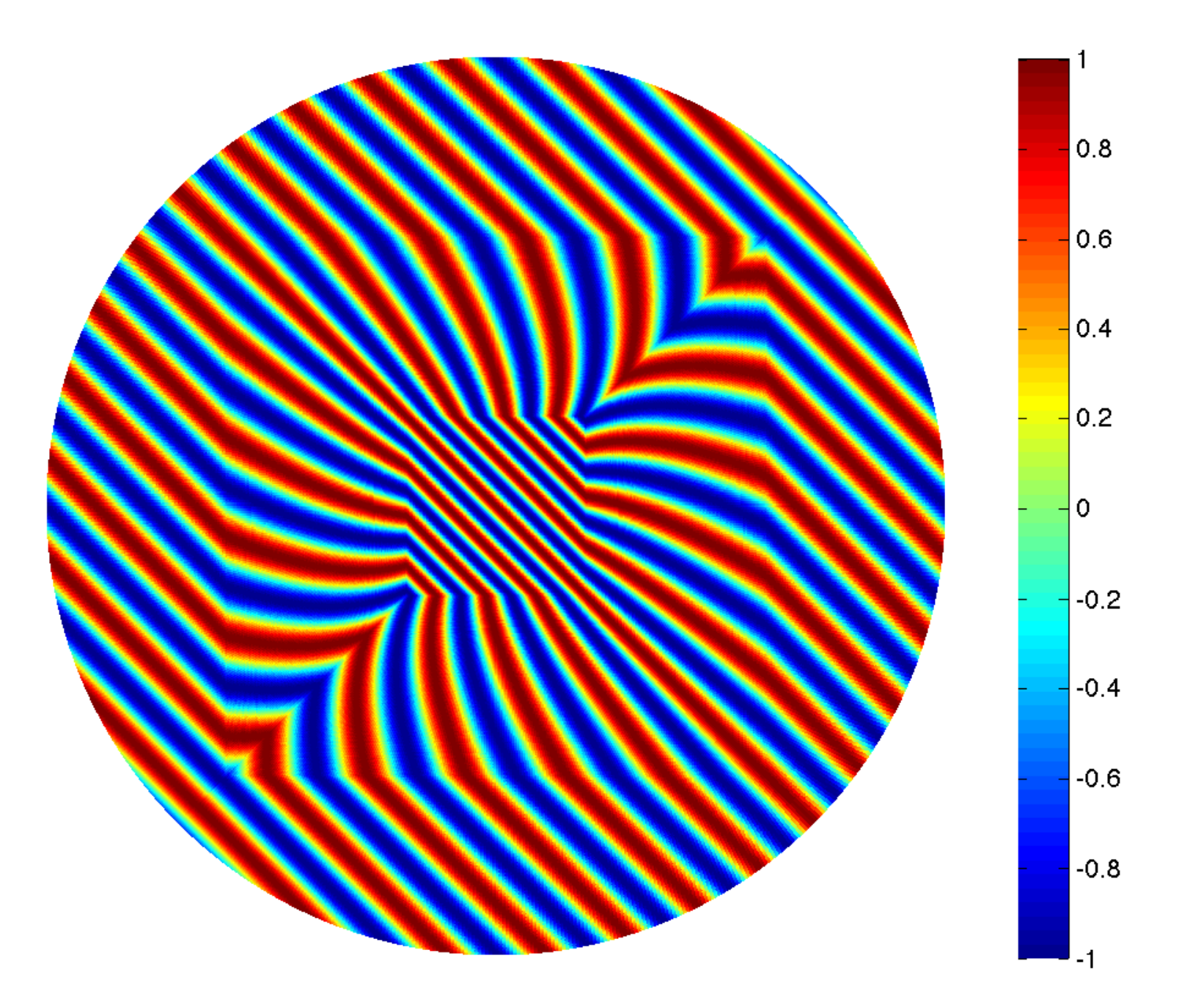}} \hspace*{-4pt}
 \subfigure[Poynting vector]{ \includegraphics[scale=0.25]{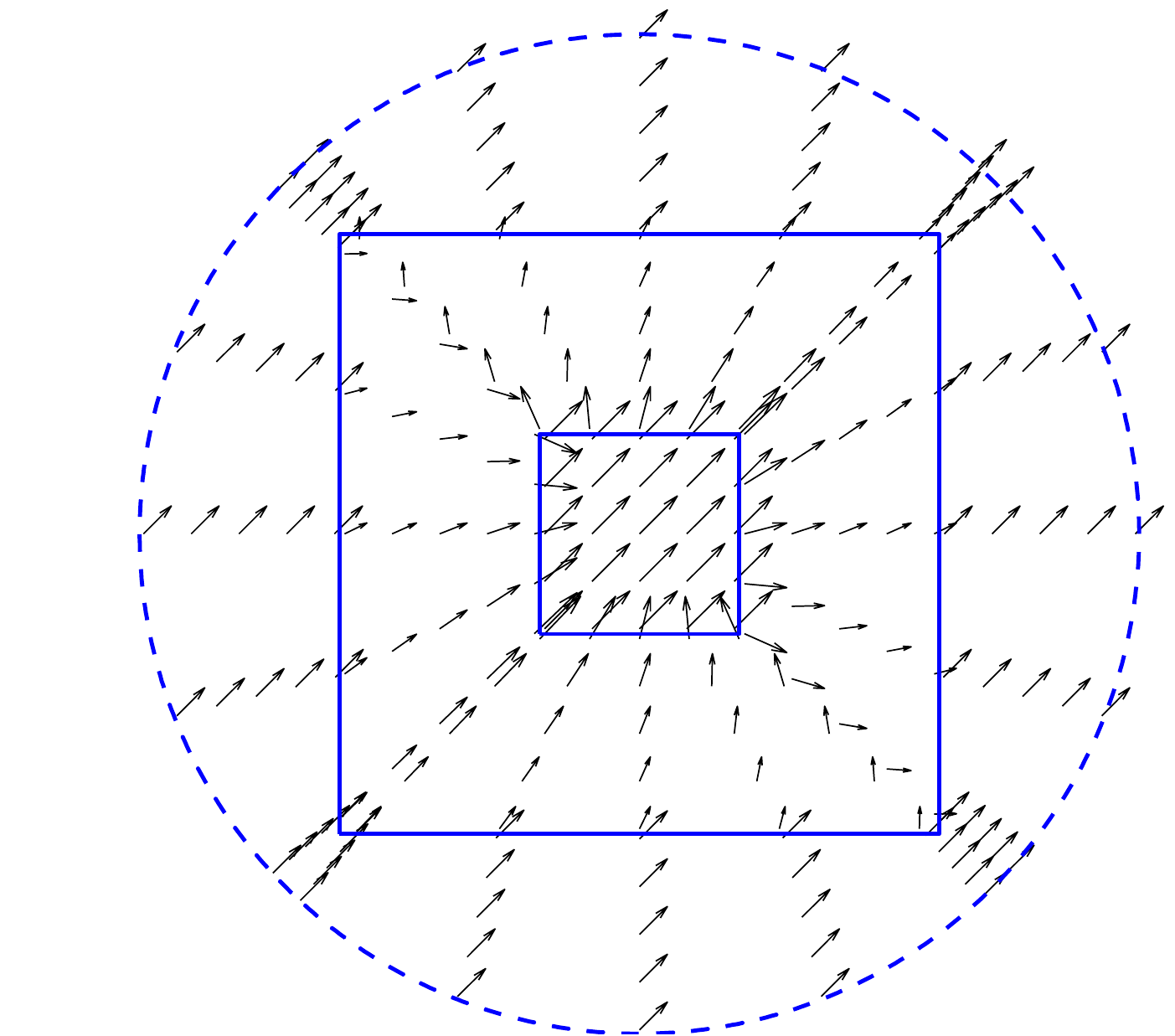}}
   \caption{\small The real part of the electric field distributions and associated Poynting vectors for square concentrators with (a)-(b): $\theta_0=0$ and (c)-(d): $\theta_0=\pi/4$, respectively.   }
\label{SemD}
\end{figure}

\subsection{Circular rotators} In contrast with the invisibility cloak and concentrator, the electromagnetic rotator is based upon a coordinate transformation of the angular variable rather  than the radial variable in \eqref{polartranA}. 

As illustrated in Figure \ref{concenfig2} (c), the domain $\Omega_{-}$  is a disk of radius $r=b,$ 
which encloses a concentric disk of radius $r=a<b.$   The waves are expected to rotate with a fixed angle $\theta_1$ in the interior disk.  This can be realised by the coordinate transformation (cf. \cite{chen2007rotator}):  
\begin{equation}\label{rotat-tran}
\begin{cases}
r=\breve r,\;\;\;  \theta= \breve \theta+\theta_1, \quad   &0<\breve r <a,\;\;\breve \theta,\,\theta \in [0,2\pi),\\[8pt]
r=\breve r,\;\;\;   \theta=\breve \theta +   \dfrac{s(b)-s(\breve r)}{{s}(b)-s(a)} \theta_1,
\quad &  a < \breve r<b,\;\;\breve \theta,\,\theta \in [0,2\pi),
\end{cases} 
\end{equation}
where $s$ is any smooth function such that $s(b)\not =s(a).$ As before, the transformation  is  
identity exterior to $\Omega_-.$

%


%
%
Define 
\begin{equation}\label{consta}
\kappa=  \dfrac{s(r)'}{{s}(b)-s(a)}\, \theta_1.
\end{equation}
Working out the material parameters as before (see Appendix \ref{AppendixA}), we have 
\begin{equation}\label{cr2}
{\bs C}=\frac{1}{r^2}
\begin{bmatrix}
r^2+2\kappa xy+\kappa^2y^2&-\kappa x^2-\kappa^2xy+\kappa y^2\;\\[3pt]
-\kappa x^2-\kappa^2xy+\kappa y^2 & r^2-2\kappa xy+\kappa^2x^2\;
\end{bmatrix},\;\;\;  n=1\quad {\rm for}\;\; a<r<b, 
\end{equation}
and 
\begin{equation}\label{cr1}
 {\bs C}={\bs I}_2,\quad  n=1 \quad {\rm for}\;\; 0<r<a,\;\; b<r<R.
\end{equation}
In Figure  \ref{concenfig2} (d), we illustrate a partition of the computational domain $B_R.$ Together with the standard transmission conditions in \eqref{Usuperscript}, we can implement the spectral element scheme as the previous cases with a similar partition 
of the computational domain (see Figure \ref{concenfig2} (d)).   In the computation, we  set  $a=0.3,\; b=0.7$ and $s(r)=r$ in \eqref{rotat-tran} and choose  $R=1.0$, $M=60$ and $N=40$ in each element. In Figure \ref{SemE}, we fix $k=40$, and plot the electric field distribution (real part) and the corresponding time averaged Poynting vector with the same incident angle $\theta_0=0$ and different rotation angles ((a)-(b): $\theta_1=\pi/4$, (c)-(d): $\theta_1=3\pi/4$). We find that the electric field distribution rotates its direction by $\pi/4$ in Figure \ref{SemE} (a) and the power flux (b) flows with the same direction in the closed region $r<a$. It can be observed in Figure \ref{SemE} (c)-(d) that even for a very sharp rotation angle $\theta_1=3\pi/4$, the field rotates exactly by  $3\pi/4$ angle without introducing any scattering wave outside.


\begin{figure}[htbp]
 {~}\hspace*{-16pt}\subfigure[Electric field with $\theta_1=\pi/4$]{ \includegraphics[scale=0.24]{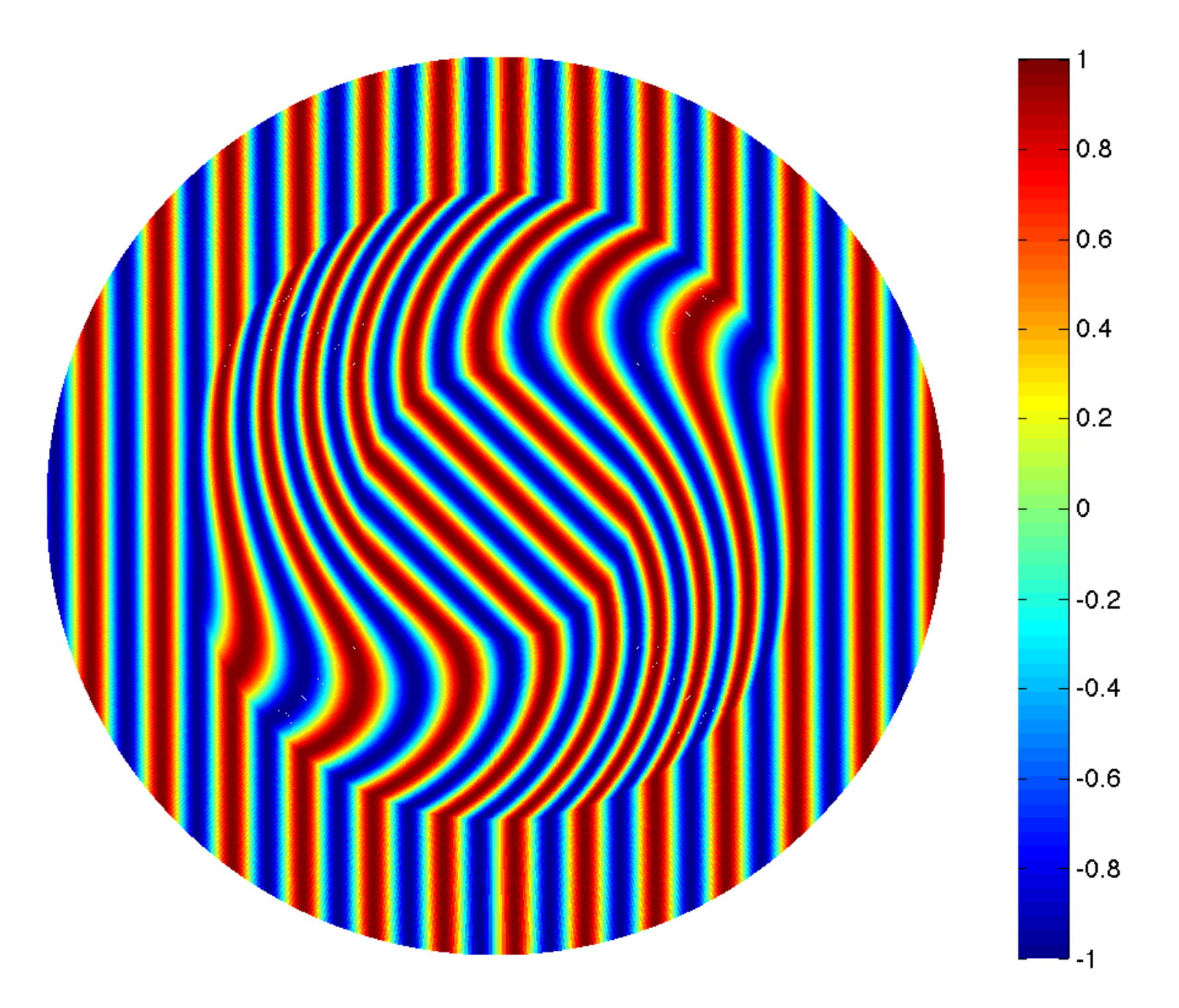}} \hspace*{-4pt}
\subfigure[Poynting vector]{\includegraphics[scale=0.25]{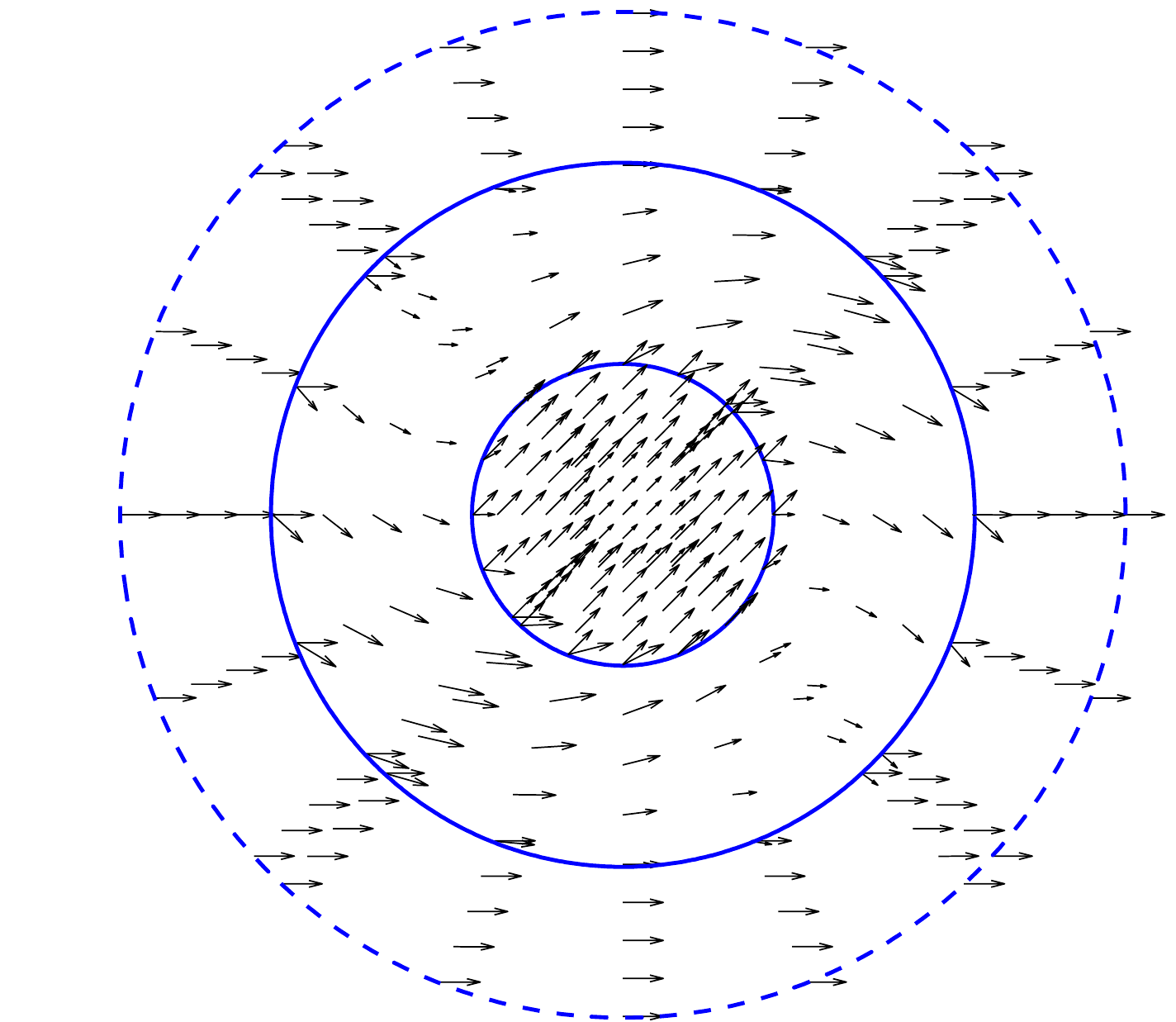}} \hspace*{-4pt} 
 \subfigure[Electric field with $\theta_1=3\pi/4$]{ \includegraphics[scale=0.24]{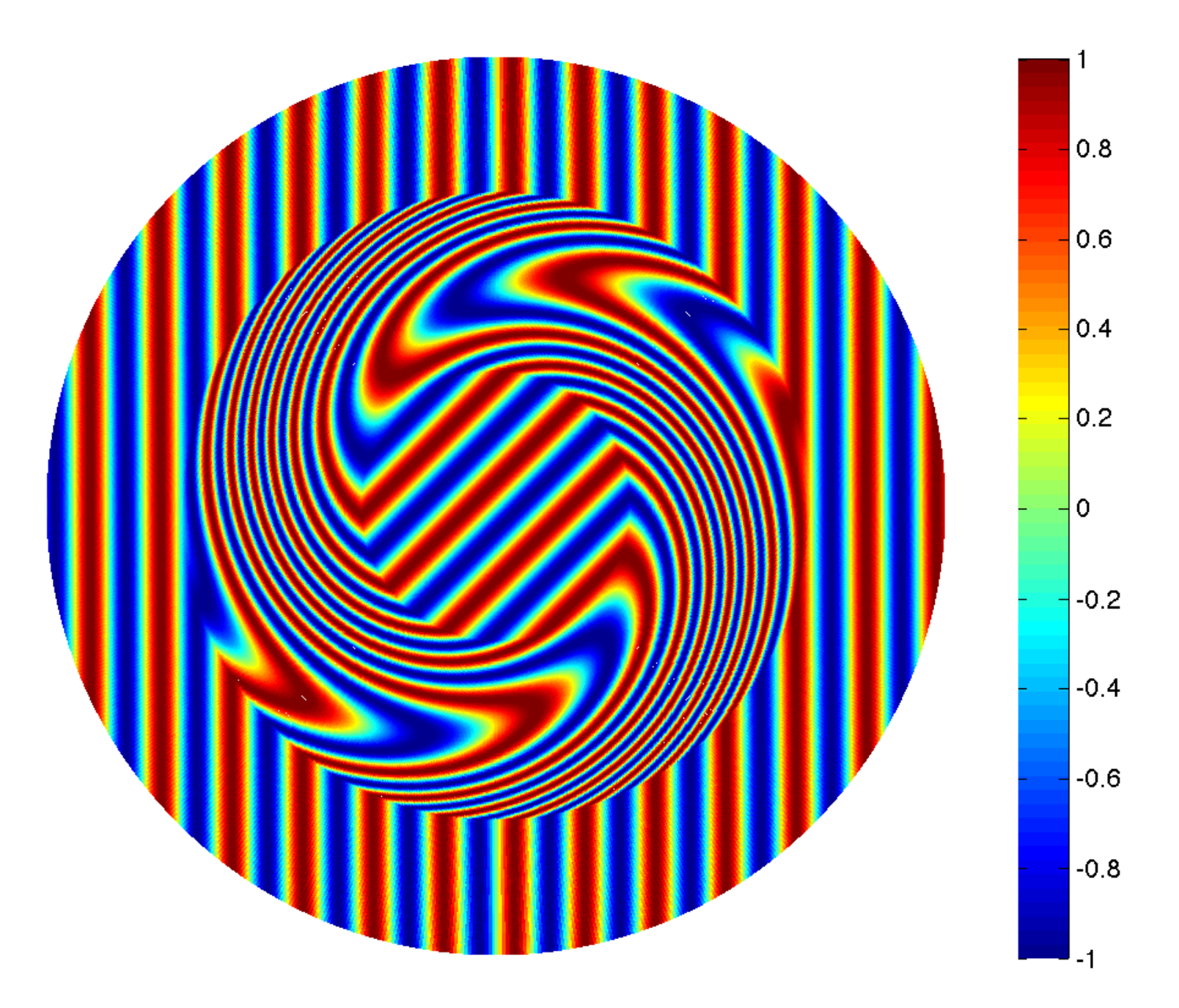}} \hspace*{-4pt}
 \subfigure[Poynting vector]{ \includegraphics[scale=0.25]{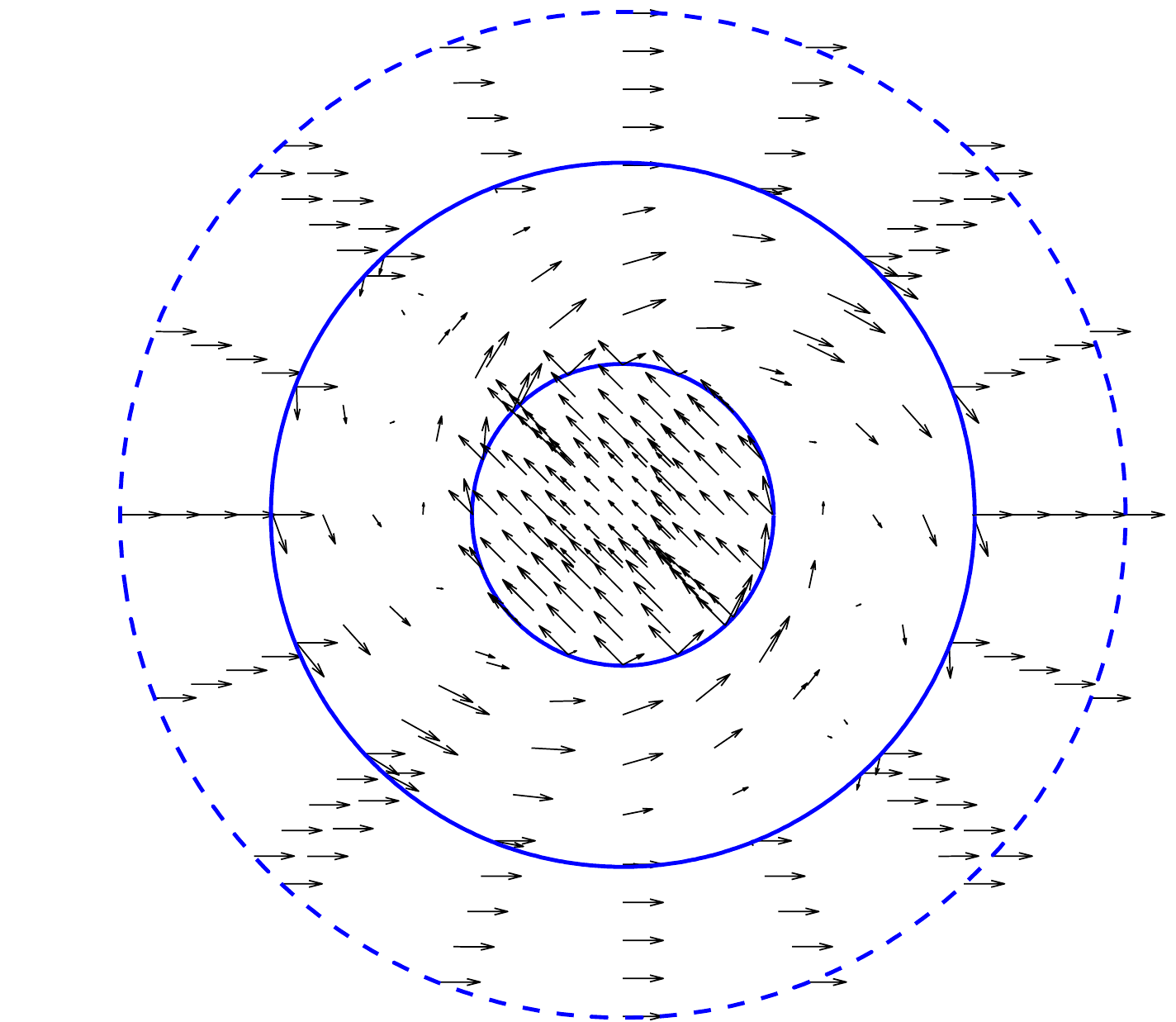}}
   \caption{\small The real part of the electric field distributions and associated Poynting vectors for circular rotators with (a)-(b): $\theta_1=\pi/4$ and (c)-(d): $\theta_1=3\pi/4$, respectively. }
\label{SemE}
\end{figure}

\vskip 20pt
\noindent\underline{\large\bf Concluding remarks}
\vskip 10pt

In this paper, we presented an accurate and efficient spectral-element solver for time-harmonic Helmholtz equations in general inhomogeneous and anisotropic media. We focused on several applications arisen from transformation electromagnetics which included the polygonal invisibility cloaks and concentrators, and  circular rotators.  We introduced new ideas of  how to seamlessly  integrate local elements and global DtN boundary condition.  This can shed light on the three-dimensional simulation where 
the DtN  BC involves spherical harmonic expansions.  We proposed new cloaking boundary conditions for accurate simulation of perfect polygonal invisibility cloaks.  The proposed method also provided a reliable alternative to the analytic tools to study the interesting phenomena when defects and other media were embedded or placed in a perfect cloak.

\vskip 10pt

\appendix
\renewcommand{\thesection}{\Alph{section}}
\renewcommand{\theequation}{\thesection.\arabic{equation}}
\section{Derivation of material parameters for polygonal cylindrical  cloaks, concentrators and circular rotators}\label{AppendixA}
As  illustrated by \eqref{polartranA} in Section \ref{sect:pcloak},  the transformation is given from polar coordinates  $(\breve r,\breve \theta)$ (of the original Cartesian coordinates $(\breve x,\breve y)$) to polar coordinates  $(r,\theta)$ (of the physical  Cartesian coordinates $(x,y)$), so by the  chain rule,  the Jacobian matrix $ \bs J_{\! {\rm cn}}$ can be computed by 
\begin{equation}\label{Jcnexp}
 \bs J_{\! {\rm cn}}=
\begin{bmatrix}
\partial_{\breve x} x & \partial_{\breve y} x  \\[1pt]
\partial_{\breve x} y & \partial_{\breve y} y  \\[1pt]
\end{bmatrix}
=
\begin{bmatrix}
\partial_{r} x & \partial_{\theta} x  \\[1pt]
\partial_{r} y & \partial_{\theta} y  \\[1pt]
\end{bmatrix}
\begin{bmatrix}
\partial_{\breve r} r & \partial_{\breve \theta} r  \\[1pt]
\partial_{\breve r} \theta & \partial_{\breve \theta} \theta  \\[1pt]
\end{bmatrix}
\begin{bmatrix}
\partial_{\breve x} \breve r & \partial_{\breve y} \breve r  \\[0.5 pt]
\partial_{\breve x} \breve \theta & \partial_{\breve y} \breve \theta  \\[0.5pt]
\end{bmatrix}.
\end{equation}
It is clear that 
\begin{equation}\label{r2x}
\begin{bmatrix}
\partial_{r} x & \partial_{\theta} x  \\[1pt]
\partial_{r} y & \partial_{\theta} y  \\[1pt]
\end{bmatrix}
=
\begin{bmatrix}
\cos \theta & -r\sin \theta  \\[1pt]
\sin \theta & r \cos \theta  \\[1pt]
\end{bmatrix},
\quad 
{\rm det}\bigg(   
\begin{bmatrix}
\partial_{r} x & \partial_{\theta} x  \\[1pt]
\partial_{r} y & \partial_{\theta} y  \\[1pt]
\end{bmatrix}
\bigg)
=r,
\end{equation}
and
\begin{equation}\label{x2r}
\begin{bmatrix}
\partial_{\breve x} \breve r & \partial_{\breve y} \breve r  \\[0.5 pt]
\partial_{\breve x} \breve \theta & \partial_{\breve y} \breve \theta  \\[0.5pt]
\end{bmatrix}
=
\begin{bmatrix}
\cos \breve \theta & \sin \breve \theta  \\[0.1pt]
-\sin \breve \theta / \breve r & \cos \breve \theta / \breve r  \\[0.1pt]
\end{bmatrix},
\quad 
{\rm det}\bigg(  
\begin{bmatrix}
\partial_{\breve x} \breve r & \partial_{\breve y} \breve r  \\[0.5 pt]
\partial_{\breve x} \breve \theta & \partial_{\breve y} \breve \theta  \\[0.5pt]
\end{bmatrix} 
\bigg)
=\frac{1}{\breve r}.
\end{equation}
Denote 
\begin{equation}
{\bs { \tilde  J}}:=
\begin{bmatrix}
\partial_{\breve r} r & \partial_{\breve \theta} r  \\[1pt]
\partial_{\breve r} \theta & \partial_{\breve \theta} \theta  \\[1pt]
\end{bmatrix},
\end{equation}
which is determined by specific  coordinate transformations. With a direct calculation,  we derive from  \eqref{parameter3}  the material parameters $\bs C$ and $n$ for the  general transformation in \eqref{polartranA}:
\begin{align}
C_{11}&=\frac{\breve r}{r^3 {\rm det}(\bs { \tilde  J})} \Big( \big(x\partial_{\breve r} r-ry\partial_{\breve r} \theta \big)^2+\frac{1}{\breve r^2}\big(x\partial_{\breve r} \theta -ry\partial_{\breve \theta} \theta \big)^2    \Big)        \,, \label{ck11g}\\[6pt]
C_{22}&=\frac{\breve r}{r^3 {\rm det}(\bs { \tilde  J})} \Big( \big(y\partial_{\breve r} r+rx\partial_{\breve r} \theta \big)^2+\frac{1}{\breve r^2}\big(y\partial_{\breve r} r +rx\partial_{\breve \theta} \theta \big)^2    \Big) \,, \label{ck22g}\\[6pt]
C_{12}&=\frac{\breve r}{r^3 {\rm det}(\bs { \tilde  J})} \Big( xy\Big(\partial^2_{\breve r} r+\frac{1}{\breve r^2}\partial^2_{\breve \theta} r -r^2\partial^2_{\breve r} \theta-\frac{r^2}{\breve r^2}  \partial^2_{\breve \theta} \theta  \Big)+\frac{(x^2-y^2)r}{\breve r^2}\big(\breve r^2  \partial_{\breve r} r \partial_{\breve r} \theta  + \partial_{\breve \theta} r    \partial_{\breve \theta} \theta\big)
\Big),\label{ck12g}
\end{align}
and 
\begin{equation}\label{ng}
n=\frac 1{{\rm det}(\bs J_{\! {\rm cn}})}= \frac{\breve r}{r {\rm det}(\bs {\tilde J})}\,.  
\end{equation}

\vskip 6pt 

For the polygonal cylindrical cloak, we derive from \eqref{cloak-tran} that in $\Omega_-^{a}$,  
\begin{equation}\label{br2r}
\breve r=\dfrac{1}{1-\rho} (r-  \, R_1),\quad \breve \theta =\theta,
\end{equation}
which leads to  
\begin{equation}\label{cloakJcn}
\bs { \tilde  J}=
\begin{bmatrix}
1-\rho & \partial_{ \theta } R_1   \\[1pt]
0 & 1  \\[1pt]
\end{bmatrix},\quad
\rm {det}(\bs { \tilde  J})=1-\rho.
\end{equation}
Recall that  $\partial_{ \theta } R_1 $ can be worked out  by  \eqref{hR1}. 
Thus, the material parameters $\bs C$ and $n$ in  \eqref{ck11}-\eqref{nkcase2} can be obtained by substituting \eqref{br2r}-\eqref{cloakJcn} into \eqref{ck11g}-\eqref{ng}.

 Similarly, for the polygonal cylindrical concentrator, we obtain from  the transformation  \eqref{concen-tran}   that in $\Omega_-^{\rm a}$, 
\begin{equation}\label{br2rconcen}
\breve r=\frac{1}{1-\varrho}(r-\varrho R_2),\quad \breve \theta=\theta,
\end{equation}
and
\begin{equation}\label{cloakJcnconcen}
\bs { \tilde  J}=
\begin{bmatrix}
1-\varrho & \varrho \partial_{  \theta } R_2 \\[1pt]
0 & 1  \\[1pt]
\end{bmatrix},\quad
\rm {det}(\bs { \tilde  J})=1-\varrho \,.
\end{equation}
Inserting \eqref{br2rconcen} and \eqref{cloakJcnconcen} into \eqref{ck11g}-\eqref{ng}, we obtain $\bs C$ and $n$ in \eqref{cn2a}-\eqref{ncase2}.

For the circular rotators, we  have  from \eqref{rotat-tran} that    
\begin{equation}\label{br2rrotator}
\breve r= r,\;\;\;   \breve \theta= \theta -   \dfrac{s(b)-s( r)}{{s}(b)-s(a)} \theta_1,
\end{equation}
and 
\begin{equation}\label{Jcnrotator}
\begin{bmatrix}
\partial_{\breve r} r & \partial_{\breve \theta} r  \\[1pt]
\partial_{\breve r} \theta & \partial_{\breve \theta} \theta  \\[1pt]
\end{bmatrix}
=
\begin{bmatrix}
1& 0 \\[1pt]
-\kappa & 1  \\[1pt]
\end{bmatrix},
\end{equation}
where $\kappa$ is defined in \eqref{consta}. 
Then we can compute the material parameters \eqref{consta} in a similar fashion.

\end{document}